\newclass{\NTWOEXP}{N2EXP}
\newclass{\SIGTWO}{\Sigma^{\P}_2}
\renewcommand{\epsilon}{\varepsilon}
\newcommand*{\eg}{e.g.\@\xspace}
\newcommand*{\ie}{i.e.\@\xspace}
\newcommand{\N}{\mathbb{N}}
\newcommand{\Z}{\mathbb{Z}}
\newcommand{\Q}{\mathbb{Q}}
\newcommand{\Qpos}{\mathbb{Q}_{\ge 0}}
\newcommand{\ZeroOne}{(0, 1]}
\newcommand{\defeq}{\coloneqq}
\newcommand{\bI}{\mathbf{I}}
\newcommand{\bX}{\mathbf{X}}
\newcommand{\bY}{\mathbf{Y}}
\newcommand{\bZ}{\mathbf{Z}}
\newcommand{\bT}{\mathbf{T}}
\newcommand{\Oh}{\mathcal{O}}
\newcommand{\V}{\mathcal{V}}
\renewcommand{\W}{\mathcal{W}}
\newcommand{\Pp}{\mathcal{P}}
\theoremstyle{plain}
\newtheorem{theorem}{Theorem}
\newtheorem{lemma}[theorem]{Lemma}
\newtheorem{proposition}[theorem]{Proposition}
\newtheorem{corollary}[theorem]{Corollary}
\theoremstyle{definition}
\theoremstyle{remark}
\newcommand{\reach}{\mathrm{Reach}}
\newcommand{\steps}[1]{\rightarrow_{#1}}
\newcommand{\cl}[1]{\overline{#1}}
\newcommand{\set}[1]{\{ #1 \}}
\newcommand{\intervals}{\mathcal{I}}
\renewcommand{\int}{\mathrm{int}}
\newcommand{\encodings}{\mathcal{E}}
\newcommand{\minkowski}{\mathrm{msum}}
\def \ifempty#1{\def\temp{#1} \ifx\temp\empty }
\newcommand{\Post}[2]{\ifempty{#2} \mathrm{Post}_{#1} \else \mathrm{Post}_{#1}(#2) \fi}
\newcommand{\enab}[1]{\mathrm{enab}(#1)}
\newcommand{\In}[1]{\mathrm{in}(#1)}
\newcommand{\Out}[1]{\mathrm{out}(#1)}
\newcommand{\Effect}[1]{\Delta(#1)}
\newcommand{\effect}[1]{\Effect{#1}}
\newcommand{\Effectp}[1]{\Delta^+(#1)}
\newcommand{\Effectn}[1]{\Delta^-(#1)}
\newcommand{\TheBetterPost}{\mathrm{Succ}}
\newcommand{\Acceleration}{\mathrm{Acc}}
\newcommand{\bigO}{\mathcal{O}}
\newcommand{\guard}[1]{\tau(#1)}
\newcommand{\act}{\mathrm{active}}
\DeclarePairedDelimiter\abs{\lvert}{\rvert}%
\newcommand{\FromTo}[3]{#1[#2..#3]}
\newcommand*{\myproofname}{Proof of the claim}
\newcommand{\foq}{$\mathrm{FO}(\mathbb{Q}, +,<)$\xspace}
\begin{document}

\onecolumn


\title{Continuous One-Counter Automata}

\author{\IEEEauthorblockN{Michael Blondin\IEEEauthorrefmark{1},
Tim Leys\IEEEauthorrefmark{2}
Filip Mazowiecki\IEEEauthorrefmark{3},
Philip Offtermatt\IEEEauthorrefmark{1}\IEEEauthorrefmark{3}, and
Guillermo A. P\'erez\IEEEauthorrefmark{2}}
\IEEEauthorblockA{\IEEEauthorrefmark{1}Universit\'e de Sherbrooke, Canada}
\IEEEauthorblockA{\IEEEauthorrefmark{2}University of Antwerp, Belgium}
\IEEEauthorblockA{\IEEEauthorrefmark{3}Max Planck Institute for Software
  Systems, Germany}}

\maketitle

\begin{abstract}
  We study the reachability problem for continuous
  one-counter automata, COCA for short. In such automata, transitions
  are guarded by upper and lower bound tests against the counter
  value. Additionally, the counter updates associated with taking
  transitions can be (non-deterministically) scaled down by a nonzero
  factor between zero and one. Our three main results are as follows:
  (1)~We prove that the reachability problem for COCA with global
  upper and lower bound tests is in NC2; (2)~that, in general, the
  problem is decidable in polynomial time; and (3)~that it is
  decidable in the polynomial hierarchy for COCA with parametric
  counter updates and bound tests.
\end{abstract}


%
\IEEEpeerreviewmaketitle

\section{Introduction}
\label{sec:introduction}
Counter machines form a fundamental computational model which captures the
behavior of infinite-state systems. Unfortunately, their central decision
problems, like the \emph{(configuration) reachability problem}, are
undecidable as the model is Turing-complete~\cite{Minsky61,Minsky67}. To
circumvent this issue, numerous restrictions of counter machines have been studied in the literature. For
instance, vector addition systems with states (VASS) arise from restricting
the type of tests that can be used to guard
transitions~\cite{Lipton76,Mayr81,LerouxS19,clllm19}. One-counter automata are yet another
well-studied model~\cite{hkow09,fj15,bqs19}, in this case arising from the restriction to
a single counter, hence the name.

We consider one-counter automata that can use tests of the form ``$\leq c$''
and ``$\geq d$'' --- where $c$ and $d$ are constants --- to guard their
transitions. As a natural extension of finite-state automata, one-counter
automata allow for better conservative approximations of classical
static-analysis problems like instruction reachability (see, \eg,
\emph{program graphs} as defined in~\cite{bk08}). They also enable the
verification of programs with lists~\cite{bbhimv11} and XML-stream
validation~\cite{cr04}.  Furthermore, their reachability problem seems
intrinsically connected to that~of timed automata (TA). The reachability
problem for two-clock TA is known to be logspace-equivalent to the same
problem for succinct one-counter automata (SOCA), that is, where constants used
in counter updates and tests are encoded in binary~\cite{how12}. An analogue
of this connection holds when SOCA are enriched with parameters that can be
used on updates: reachability for two-parametric-clock TA reduces to the
(existential) reachability problem for parametric one-counter
automata~\cite{bo17}. Interestingly, Alur et al.\ observe~\cite{ahv93} the
former subsumes a long-standing open problem of Ibarra~\cite{ijtw93}
concerning ``simple programs''.

All of the above connections from interesting problems to reachability for
SOCA and parametric SOCA indicate that efficient algorithms for the problem
are very much desirable. Unfortunately, it is known that reachability for SOCA
(with upper and lower-bound tests) is \PSPACE-complete~\cite{fj15}. For
parametric SOCA, the situation is even worse as the general problem is not
even known to be decidable. In this work, we study \emph{continuous
relaxations} of these problems and show that their complexities belong in
tractable complexity classes.  We thus give the first efficient
conservative approximation for the reachability problem for SOCA and parametric
SOCA.

\paragraph*{The continuous relaxations}
We observe that the model considered by
Fearnley and Jurdzi\'{n}ski~\cite{fj15} is not precisely our SOCA, rather they consider
\emph{bounded $1$-dimensional VASS}. In such VASS, the counter is not allowed
to take negative values. Additionally, it is not allowed to take values
greater than some global upper bound.  Note that inequality tests against
constants can be added to such VASS as ``syntactic sugar'' since these can be
implemented making use of the upper and lower bounds. These observations allow
us to adapt Blondin and Haase's definition of continuous VASS~\cite{bh17}
to introduce \emph{(bounded) continuous one-counter automata} (COCA) which have
\emph{global} upper and lower-bound tests: Transitions are allowed to be
``partially taken'' in the sense that the respective counter updates can be
scaled by some factor $\alpha \in (0,1]$.

In contrast to the situation in the discrete world, because of the continuous
semantics, adding arbitrary upper and lower-bound tests to COCA does
result in the more expressive model of \emph{guarded COCA}.  Importantly,
guarded COCA are a ``tighter'' relaxation of SOCA than COCA
are (via the translation to bounded $1$-VASS).
Finally, we also study the reachability problem for \emph{parametric COCA}.
These are guarded COCA where counter updates can be variables $x \in X$ whose
values range over the rationals; bound tests can also be against variables
from $X$.  The resulting model can be seen as a continuous relaxation of
Ibarra's simple programs~\cite{ijtw93,bo17}.

\paragraph*{Contributions}

Our main contributions are three-fold (see \autoref{main_results}). First, we
show that the reachability problem for COCA is decidable in $\NC^2$.
Second, we give a polynomial-time algorithm for the same problem for guarded
COCA. Finally, we show that the reachability problem for parametric COCA belongs to
$\SIGTWO$ and is \NP-hard.

On the way, we prove that the
reachability problem for COCA enriched with equality tests is in
$\NC^2$; that the reachability problem for parametric COCA where only counter
updates are allowed to be parametric is equivalent to the integer-valuation
restriction of the problem; and that the reachability problem for acyclic
parametric COCA is \NP-complete.

\paragraph*{Other related work}
To complete a full circle of connections between timed and counter
automata, we note that the closest model to ours is that of one-clock
TA. The value of the clock in such automata evolves (continuously)
at a fixed positive rate and can be reset by some transitions. Guarded COCA
can simulate clock delays using $+1$ self-loops and resets using $-1$
self-loops and bound tests ``$\leq 0$'' and ``$\geq 0$''. Our model thus generalises
one-clock TA.

The reachability problem for (non-parametric) one-clock TA is
\NL-complete~\cite{LMS04}. The \NL{} membership proof from~\cite{LMS04} relies on
the fact that clock delays can always occur and do so
without changing the state. This does not hold in the more general framework of COCA. Consequently, the proof does not extend directly to COCA.

The reachability problem for parametric one-clock TA with integer-valued
parameters is known to be decidable in \NEXP{}~\cite{bo17}. Since
non-parametric clocks can be removed at the cost of an exponential
blow-up~\cite{ahv93}, it is also argued in~\cite{bo17} that the problem belongs to \NTWOEXP{} if an arbitrary number of non-parametric clocks is
allowed~\cite{bo17}. For the latter problem, the authors also prove that it is
\NEXP-hard. Our $\SIGTWO$ upper bound for update-parametric COCA with
integer-valued parameters improves the latter two bounds.

\section{Preliminaries}
\label{sec:preliminaries}
We write $\Qpos$ for the set of nonnegative rationals, and $\Q_{>0}$
for the set of positive rationals.  We use symbols ``$[$'' and ``$]$''
for closed intervals, and ``$($'' and ``$)$'' for open intervals of
rational numbers. For example, $[a, b)$ denotes $\set{q \in \Q \mid
a \le q < b}$. Intervals do not have to be bounded, \eg\ we allow
$[3,+\infty)$. We denote the set of all intervals over $\Q$ by
$\intervals$. We write $\cl{X}$ to denote the \emph{closure} of a set
$X \subseteq \Q$, \ie\ $X$ enlarged with its limit points. For
example, $\cl{(3, 5)} = [3, 5]$, $\cl{[1, 4) \cup (4, 5]} = [1, 5]$
and $\cl{(-\infty, +\infty)} = (-\infty, +\infty)$. Throughout the
paper, numbers are encoded in \emph{binary} and we assume intervals to
be encoded as pairs of endpoints, together with binary flags
indicating whether the endpoints are contained or not.




\subsection{One-counter automata}

A \emph{continuous one-counter automaton} (COCA) is a triple $\V =
(Q,T,\tau)$, where $Q$ and $T \subseteq Q \times \Z \times Q$ are finite
sets of \emph{states} and \emph{transitions}, and $\tau \in \intervals$. A \emph{configuration}
of $\V$ is a pair $(q,a) \in Q \times \Q$, denoted $q(a)$. A
\emph{run} from $p(a)$ to $q(b)$ in $\V$ is a sequence $\alpha_1 t_1
\cdots \alpha_n t_n$, where $\alpha_i \in (0,1]$ and $t_i = (q_{i-1},
z_i, q_i) \in T$, for which there exist configurations $q_0(a_0),
\ldots, q_n(a_n)$ such that $q_0(a_0) = p(a)$, $q_n(a_n) = q(b)$ and
$a_i = a_{i-1} + \alpha_i \cdot z_i$ for all $i \in
\set{1,\ldots,n}$. We say that such a run is \emph{admissible} if
$a_0, \ldots, a_n \in \tau$. 
For readers familiar with one-counter automata, note that the
model of
one-counter nets is obtained by setting
$\tau =
[0,+\infty)$ and $\alpha_i = 1$ for all $i$.

A \emph{guarded COCA} is a triple $\W = (Q, T, \tau)$, where $(Q,T)$
is as for a COCA and $\tau : Q \to \intervals$ assigns intervals to states.
\emph{Configurations and runs} of $\W$ are defined as for a COCA.
A run
is \emph{admissible} if each of
its configurations $q_i(a_i)$ satisfies $a_i \in \tau(q_i)$. Hence, a COCA can
be seen as a guarded COCA where $\tau(q)$ is the same for all $q \in Q$.


The set $\intervals_X$ of \emph{parameterised intervals} over a set $X$
is the set of intervals whose endpoints belong either to $\Q \cup
\{-\infty, +\infty\}$ or $X$. A \emph{parametric COCA} is a tuple $\Pp
= (Q, T, \tau,X)$, where $Q$, $X$ and $T \subseteq Q \times (\Z \cup
X) \times Q$ are finite sets of \emph{states}, \emph{parameters} and
\emph{transitions}; and where $\tau \colon Q \to \intervals_X$. A
\emph{valuation} of $X$ is a function $\mu \colon X \to \Q$. We write
$\Pp^\mu = (Q,T^\mu,\tau^\mu)$ to denote the guarded COCA obtained
from $\Pp$ by replacing each parameter $x \in X$, occurring in $T$ and
$\tau$, with $\mu(x)$. We say that there is a run from $p(a)$ to
$q(b)$ in $\Pp$ if there exists a valuation $\mu$ such that $\Pp^\mu$
has a run from $p(a)$ to $q(b)$. In particular, $\Pp$ is a guarded
COCA if $X = \emptyset$. Otherwise, the notion of run only makes sense
w.r.t.\ a valuation $\mu$, \ie\ in the guarded COCA~$\Pp^\mu$.

In summary, we deal with three increasingly richer models: COCA
$\subseteq$ guarded COCA $\subseteq$ parametric COCA. In all variants,
the \emph{size} of the automaton is $|Q| + |T|\cdot s$, where $s$ is
the maximal number of bits required to encode a number in $T$ and
$\tau$.

\subsection{Runs, paths and cycles}

\newcommand{\SPaths}[1]{\mathrm{Paths}(#1)}
\newcommand{\Paths}[2]{\mathrm{Paths}_{#1,#2}}
\newcommand{\upath}[1]{\mathrm{path}(#1)}

Let $\W = (Q, T, \tau)$ be a guarded COCA. We write $\Paths{p}{q}$ to
denote the set of paths from state $p \in Q$ to state $q \in Q$ in the
graph induced by $T$. Let $\rho = \alpha_1 t_1 \cdots \alpha_n t_n$ be
a run where each $t_i = (q_{i-1}, z_i, q_i)$. The \emph{underlying
  path} of $\rho$ is $\upath{\rho} \defeq t_1 \cdots t_n \in
\Paths{q_0}{q_n}$. We further define $\FromTo{\rho}{i}{j} \defeq \alpha_i t_i
\cdots \alpha_j t_j$, $\rho_i \defeq \FromTo{\rho}{i}{i}$, $\In{\rho} \defeq
q_0$, $\Out{\rho} \defeq q_n$ and $\Effect{\rho} \defeq \sum_{i=1}^n
\alpha_i z_i$. By convention, $\FromTo{\rho}{i}{j} \defeq \varepsilon$ if $j <
i$, and $\Effect{\varepsilon} \defeq 0$. We write $p(a) \steps{\rho}
q(b)$ to denote the fact that $\rho$ is admissible from $p(a)$ to
$q(b)$. Since states $p$ and $q$ are determined by $\rho$, we may omit them
and simply write $a \steps{\rho} b$. For every $\beta \in (0, 1]$, we
define $\beta \rho \defeq (\beta \alpha_1) t_1 \cdots (\beta \alpha_n)
t_n$. Note that $\beta \rho$ is a run, but it may not preserve
admissibility.


Let $\pi = t_1 \cdots t_n \in \Paths{p}{q}$ be such that each
$t_i = (q_{i-1}, z_i, q_i)$. We say that $\pi$ is a \emph{cycle} if
$p = q$, and \emph{simple} if $\pi$ does not repeat any state.
Let $\Effect{\pi} \defeq z_1 + \ldots + z_n$,
$\Effectp{\pi} \defeq \sum_{i = 1}^{n}
\max(0, z_i)$ and $\Effectn{\pi} \defeq \sum_{i = 1}^n
\min(0, z_i)$, with
$\Effect{\varepsilon} = \Effectp{\varepsilon} = \Effectn{\varepsilon}
\defeq 0$. In particular, $\Effectn{\pi} \leq
0 \leq \Effectp{\pi}$. Moreover, scaling the positive or negative
transitions of a path $\pi$ arbitrarily close to zero yields a run of
effect arbitrarily close to $\Effectn{\pi}$ or $\Effectp{\pi}$.


%

We write $p(a) \steps{\pi} q(b)$ to denote the existence of a run
$\rho$ such that $p(a) \steps{\rho} q(b)$ and $\upath{\rho} = \pi$. As
for runs, we may omit states and simply write $a \steps{\pi}
b$. The \emph{reachability function} given by $\pi$ is defined as
$\Post{\pi}{a} \defeq \set{b \in \Q \mid p(a) \steps{\pi} q(b)}$. We
generalise this notion to sets of paths and numbers:
\begin{align*}
  \Post{S}{A} &\defeq \bigcup_{\pi \in S} \bigcup_{a \in A} \Post{\pi}{a}.
\end{align*}
If $S = \Paths{p}{q}$, we write
$\Post{p,q}{a}$ and $\Post{p,q}{A}$. For example, for the guarded COCA
of \autoref{fig:coca:example}, the following holds:
$\mathrm{Post}_{p,q}(a) = (10, 18) \cup [19, 100)$ if $a = 15$; $(a-5,
a+3)$ if $a \in [-5, 15)$; and $\emptyset$ otherwise.

Finally, we define the set of starting points as
$\enab{\pi} \defeq \set{a \in \Q \mid \Post{\pi}{a} \neq \emptyset}$
and $\enab{S} \defeq \bigcup_{\pi \in S} \enab{\pi}$.

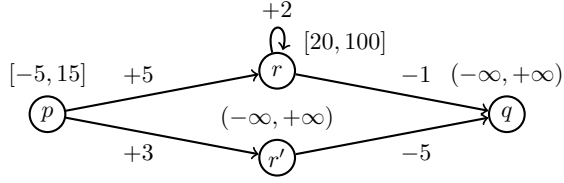
\begin{figure}
  \begin{center}
    \begin{tikzpicture}[auto, thick, transform shape, scale=0.9]
  \tikzstyle{astate} = [state, minimum size=15pt, inner sep=0pt];
  \node[astate]                                  (p) {$p$};
  \node[astate, above right=0.25cm and 3cm of p] (r) {$r$};
  \node[astate, below right=0.25cm and 3cm of p] (s) {$r'$};
  \node[astate, below right=0.25cm and 3cm of r] (q) {$q$};

  \node[above=0pt of p] {$[-5, 15]$};
  \node[above=0pt of s] {$(-\infty, +\infty)$};
  \node[above=0pt of q] {$(-\infty, +\infty)$};
  \node[above=0pt of r, xshift=1cm, yshift=-5pt] {$[20, 100]$};

  \path[->]
  (p) edge node       {$+5$} (r)
  (r) edge node       {$-1$} (q)
  (p) edge node[swap] {$+3$} (s)
  (s) edge node[swap] {$-5$} (q)

  (r) edge[loop above] node {$+2$} ()
  ;  
\end{tikzpicture}
  \end{center}
  \caption{A guarded COCA; each state $s$ is labeled with the
    interval $\tau(s)$.}\label{fig:coca:example}
\end{figure}





\subsection{Our contribution}

In this work, we study the \emph{reachability problem} that asks the
following question: Given a COCA or a guarded COCA $\W$ with
configurations $p(a)$ and $q(b)$, is there an admissible run from
$p(a)$ to $q(b)$? In other words, by abbreviating ``$\Paths{p}{q}$''
with ``$*$'', the problem asks whether $p(a) \steps{*} q(b)$
holds. For parametric COCAs, the \emph{(existential) reachability
  problem} asks whether $p(a) \steps{*} q(b)$ for some parameter
valuation.

We will establish the following complexity results:

\begin{theorem}\label{main_results}
  The reachability problem is:
  \begin{enumerate}
  \item in $\NC^2$ for COCAs;


  \item in $\P$ for guarded COCAs; and

  \item $\NP$-hard and in $\SIGTWO$ for parametric COCAs.
  \end{enumerate}
\end{theorem}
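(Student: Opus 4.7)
The three items require different techniques, so I sketch each separately.

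For item (1), my plan is to reduce admissible reachability in a COCA to a combinatorial question on the underlying graph together with a simple arithmetic test. The crucial feature is that the bound interval $\tau$ is \emph{global}: by scaling positive or negative transitions arbitrarily close to zero, any path can be traversed while the counter drifts only by an arbitrarily small amount from its current value, as long as some positive- (resp.\ negative-) increment state is reachable. Consequently, $p(a) \steps{*} q(b)$ should reduce to: $a, b \in \tau$, there is a path in the underlying graph from $p$ to $q$, and $b - a$ lies in an interval of effects achievable by scaling path effects, augmented by reachable positive/negative cycles along the way. Graph reachability (via transitive closure, e.g., matrix powering on the adjacency matrix) and the associated arithmetic tests are all in $\NC^2$.

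For item (2), the scaling argument from (1) breaks because each state has its own interval. My plan is an explicit polynomial-time algorithm computing, for each state $q$, the set $R(q) \subseteq \tau(q)$ of counter values reachable at $q$ from the source configuration. I would argue that each $R(q)$ is always a finite union of intervals whose endpoints come from a polynomially bounded set (the source value, the endpoints of the various $\tau(q')$, and sums of effects of simple path segments and cycles), and that a saturation procedure extending $R(q)$ through transitions converges in polynomially many rounds, with each step implementable via interval arithmetic.

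For item (3), \NP-hardness follows by reducing from 3-SAT: introduce one parameter $x_i$ per Boolean variable, design a gadget that forces each $\mu(x_i)$ to one of two canonical values, and chain clause gadgets so that the target is reachable iff every clause is satisfied under $\mu$. For the $\SIGTWO$ upper bound, I use $\SIGTWO = \NP^\NP$: the outer $\NP$ guesses a polynomial-sized structural skeleton of the run (a sequence of simple path segments together with cycle annotations and relative proportions), and the inner $\NP$ oracle verifies existence of parameter values, cycle counts, and scaling coefficients realizing the skeleton, which reduces to feasibility of a polynomial-size system of linear arithmetic constraints over $\Q$.

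The most delicate step I anticipate is item (1): the characterization of admissibility must correctly handle boundary cases (where $a$ or $b$ lies on the edge of $\tau$, where only a trivial path between $p$ and $q$ exists, or where no increment-enabling cycles are reachable), and all of it must be expressible as a test computable in $\NC^2$. The $\SIGTWO$ upper bound for item (3) is the other nontrivial point, since the skeleton guessed in the outer phase must faithfully summarize exponentially many possible run shapes; items (2) and the \NP-hardness of (3) should follow relatively standard patterns once the combinatorial structure of admissible runs is pinned down.
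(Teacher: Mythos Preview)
Your sketch for item~(1) is in the right spirit and matches the paper's approach: the paper indeed exploits scaling to show that $\cl{\Post{p,q}{a}}$ is a single closed interval, identifies its endpoints via the presence or absence of reachable positive/negative cycles (when a positive $(a,p,q)$-admissible cycle exists, $\sup\cl{\Post{p,q}{a}}=\sup\tau$; otherwise the endpoint is $a+\max\{\Effectp{\pi}\}$ over short paths), and reduces everything to graph-reachability and shortest/longest-path computations that sit in $\NC^2$. The boundary cases you flag are exactly what the paper handles through a careful case analysis (\autoref{lemma:adm-paths}, \autoref{prop:a_nc2}, \autoref{prop:endpoints_nc2}), so your anticipation is accurate.

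There is a genuine gap in item~(2). Your saturation procedure is precisely the paper's $\TheBetterPost$ operator, and the paper explicitly observes that iterating it need not terminate in polynomially many (or even finitely many) steps: if some $\reach_{p(a)}(q)$ is unbounded, the sequence $S_0\preceq S_1\preceq\cdots$ never stabilises. The paper's fix is an \emph{acceleration} step: after polynomially many $\TheBetterPost$ applications one can detect a positively or negatively \emph{expanding cycle} and, in one shot, extend some $R(p_j)$ all the way to the guard boundary $\sup\tau(p_j)$ or $\inf\tau(p_j)$. Bounding how often acceleration can occur requires the MIUN-closure machinery (\autoref{lem:phi}, \autoref{lemma:boundedunion}) and the notion of \emph{progressing extensions} (\autoref{lemma:progress}). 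Your claim that the endpoints lie in a fixed polynomial set is not how the argument goes; what is bounded is the \emph{number of maximal intervals} in each $R(q)$, not the set of possible endpoints.

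There is also a gap in the $\SIGTWO$ upper bound of item~(3). Guessing a polynomial-size run skeleton works for the \emph{acyclic} case (where every run has length at most $|Q|$, and the paper indeed obtains \NP-completeness this way), but for general parametric COCA you have not argued why a polynomial skeleton suffices, and this is essentially the hard part. The paper takes a different route: it characterises $\reach_{p(a)}$ as the $\preceq$-least \emph{reachability candidate} (a mapping $R$ with $a\in R(p)$ and $\TheBetterPost(R)\preceq R$), uses the interval bound from item~(2) to know that $\reach_{p(a)}\in\mathcal{R}_Q^n$ for polynomial $n$, and then writes ``there exist parameters $\mu$ and a candidate $R\in\mathcal{R}_Q^n$ such that for all candidates $R'\in\mathcal{R}_Q^n$ we have $R\preceq R'$ and $b\in R(q)$'' as a polynomial-size $\exists\forall$-sentence of $\mathrm{FO}(\mathbb{Q},+,<)$, which is decidable in $\SIGTWO$. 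So the $\SIGTWO$ bound is obtained not by guessing run structure but by a least-fixpoint encoding, and it depends essentially on the interval-count bound proved for item~(2). Your \NP-hardness reduction from 3-SAT is the same as the paper's.
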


Recall that $\SIGTWO$ is the level of the polynomial hierarchy
that corresponds to $\NP^\NP$. Moreover, $\NC$ is the class of
problems solvable in polylogarithmic parallel time, \ie\ $\NC =
\bigcup_{i \geq 0} \NC^i$ where $\NC^i$ is the class of problems
decidable by logspace-uniform families of circuits of polynomial size,
depth $\Oh(\log^i n)$ and bounded fan-in
(\eg, see~\cite{papadimitriou94,ab09} for a more thorough definition). It
is well-known that $\NL \subseteq \NC^2 \subseteq \P$.
We also refer to the
functional variant of $\NC^i$ as $\NC^i$.




The two first results of \autoref{main_results} are obtained by
characterising reachability functions and by showing how to
efficiently compute their representation. More precisely, we show:

\begin{proposition}
  Let $\W$ be a COCA or a guarded COCA. It is the case~that:
  \begin{enumerate}
  \item If $\W$ is a COCA, then $\Post{p,q}{a}$ consists of at most
    two intervals whose representations are computable in $\NC^2$;

  \item $\Post{p,q}{a}$ is made of $|\W|^{\bigO(1)}$ intervals, and
    a representation of $\Post{p,q}{a}$ is computable in polynomial time.

  \end{enumerate}
\end{proposition}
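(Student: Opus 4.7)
The plan is to reduce both claims to computing, for every state pair $(p,q)$ and source value $a$, a small data summary that determines $\Post{p,q}{a}$ as a union of intervals.

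For part (1), since admissibility is governed by a single global interval $\tau$, the only information needed about a path $\pi$ from $p$ to $q$ is $\Effect{\pi}$ together with $\Effectp{\pi}$ and $\Effectn{\pi}$. I would first establish a single-path lemma: for any path $\pi$ and any $a \in \tau$, the set $\Post{\pi}{a}$ is an interval whose endpoints are explicit functions of $a$, $\Effectp{\pi}$, $\Effectn{\pi}$ and the endpoints of $\tau$, with careful bookkeeping of open versus closed endpoints arising from the constraint $\alpha_i \in (0,1]$. I would then bring cycles into the picture: a cycle $C$ on a state jointly reachable from $p$ and co-reachable to $q$ lets us inflate the positive or negative contribution of a run freely, up to whatever $\tau$ allows. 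Consequently $\Post{p,q}{a}$ splits into an ``above-$a$'' piece and a ``below-$a$'' piece---each a single interval of the form described above---possibly merging into one when the potential gap at $a$ can be filled by combining a positive with a negative cycle. Algorithmically, every ingredient lies in $\NC^2$: Boolean transitive closure and graph reachability are standard, while per-pair maxima of $\Effectp{\pi}$ and minima of $\Effectn{\pi}$, together with the detection of positive- or negative-effect cycles between the relevant states, all reduce to iterated matrix products in an appropriate semiring, which is also in $\NC^2$.

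For part (2), each endpoint of $\Post{p,q}{a}$ must correspond either to $a$ shifted by the effect of a canonical witnessing sub-run, or to a guard boundary $\tau(s)$ being saturated along some traversal. The algorithm I envision is a fix-point computation that iteratively refines, for each state, the union of intervals of values reachable there, propagating along single transitions and along guard-respecting cycles. Polynomial bounds on the representation should follow from a normal-form argument showing that every admissible run decomposes into polynomially many segments---a simple path plus polynomially many simple cycle insertions---so that only polynomially many distinct breakpoints arise.

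The main obstacle I anticipate is precisely controlling this combinatorial explosion in part (2) when many guard intervals interact. Ruling out pathological amplification, where the composition of guard-constrained cycles generates super-polynomially many intervals, will likely be the most delicate step, and will probably hinge on showing that every pair of consecutive breakpoints of the reachability function can be ``charged'' either to a distinct guard endpoint or to a constant-size combination of them. Once such a normal form is in place, both the polynomial bound on the number of intervals and the polynomial-time algorithm that outputs their explicit endpoints should fall out of the fix-point computation.
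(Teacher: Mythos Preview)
Your plan for part~(1) is essentially the paper's approach: the key observation is that scaling an admissible run by any $\beta\in(0,1]$ keeps it admissible, so $\cl{\Post{p,q}{a}}$ is a single closed interval and $\Post{p,q}{a}$ can only differ from it at the three points $\{\inf\cl{\Post{p,q}{a}},\,a,\,\sup\cl{\Post{p,q}{a}}\}$. The endpoints are then located via the presence or absence of positive/negative cycles reachable between $p$ and $q$, and in their absence via $\max\Effectp{\pi}$ (resp.\ $\min\Effectn{\pi}$) over short paths, all computable in $\NC^2$. Your sketch matches this.

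For part~(2) there is a genuine gap. Your proposed route is a normal form on \emph{runs} (simple path plus polynomially many simple-cycle insertions) and a charging of breakpoints to guards. But the object you must control is the \emph{union over all runs}, and a per-run structural bound says nothing a priori about how many maximal intervals that union has; two runs with few breakpoints each can still interleave badly. The paper avoids run decompositions entirely and instead proves an algebraic invariant: it defines a closure class (``MIUN-closure'') of subsets of $\Q$ generated from $[a,a]$ by Minkowski sums with $(0,z]$ or $[-z,0)$, intersections with the guard intervals $\tau(q)$, unions, and a controlled ``new interval'' operation, and shows that \emph{every} interval in the maximal decomposition of any set in this class has a guard endpoint (or $a$) in its closure. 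Since a point lies in the closure of at most two disjoint intervals, this immediately caps the number of intervals at $O(|Q|)$, uniformly along the whole fix-point computation. This is exactly the ``charging'' you are after, but established as an invariant of the operator, not extracted from run structure.

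There is a second missing piece: the plain successor fix-point need not terminate (think of a self-loop with $+1$ and $\tau(q)=[0,\infty)$). The paper interleaves the successor with an \emph{acceleration} step that detects an ``expanding cycle'' after polynomially many successor applications and, when found, jumps the relevant interval all the way to the guard boundary in one shot. A separate progress measure (``progressing extensions'') bounds the number of accelerations by a polynomial in $|Q|$. Your sketch mentions propagating along cycles but does not provide a termination argument; without acceleration plus such a progress bound, the algorithm as you describe it may not halt.
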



\noindent To derive the third result, \ie\ the $\SIGTWO$ upper
bound, we borrow ideas from the above technical results to reduce
reachability to determining the truth value of linear arithmetic
$\Sigma_2$-sentences.


\section{COCA reachability}
\label{sec:nc2}
In this section, 
we prove that
the reachability problem for COCAs belongs in $\NC^2$ by showing how
to compute a representation of $\Post{p,q}{a}$ from some $a$. In the
remainder, we fix a COCA $\V = (Q, T, \tau)$.

\subsection{Testing emptiness}

\newcommand{\first}[1]{\mathrm{first}(#1)}
\newcommand{\last}[1]{\mathrm{last}(#1)}

We first aim to show that deciding whether $\Post{p,q}{a} \neq
\emptyset$ can be checked in $\NC^2$. To this end, we first state some
simple graph properties checkable in $\NC^2$. For a path $\pi$, let us
write $\first{\pi}$ (resp.\ $\last{\pi}$) to denote the first
(resp.\ last) index such that $\Effect{\pi_i} \neq 0$ if any, and
$\first{\pi} = \last{\pi} \defeq \infty$ if none. We naturally extend
the notations $\Effectp{\pi}$, $\Effectn{\pi}$, $\Effect{\pi}$,
$\first{\pi}$ and $\last{\pi}$ to weighted multigraphs.

This lemma follows from standard results on $\NC^2$:

\begin{restatable}{lemma}{lemmaGraphReach}\label{lemma:graph-reach-nc2}
  Let $G = (Q, E)$ be a weighted multigraph whose weights are encoded
  in binary, and let $p, q \in Q$ be nodes.  
  Deciding
  whether $S = \emptyset$ is in $\NC^2$,
  where
  $S$ is the set of paths $\pi \in \Paths{p}{q}$ that satisfy a fixed
  subset of these conditions\footnote{The set of conditions may be empty, in which case $S = \Paths{p}{q}$.}:
  \begin{enumerate}[label=(\alph*)]
  \item $\Effectp{\pi} \neq 0$ (resp.\ $\Effectn{\pi} \neq
    0$);\label{itm:has-pos}
    

  \item $\Effectp{\pi} = 0$ (resp.\ $\Effectn{\pi} = 0$);\label{itm:no-neg}
      
  
  \item $\Effect{\first{\pi}} < 0$ (resp.\ $\Effect{\first{\pi} > 0}$);\label{itm:first:pos}
    
  \item $\Effect{\last{\pi}} < 0$ (resp.\ $\Effect{\last{\pi} > 0}$).\label{itm:last:neg}

  \end{enumerate}
  Furthermore, for any such set $S$, the following value can be
  computed in $\NC^2$: $\mathrm{opt}\{w(\pi) \mid \pi \in S \text{ and
  } \abs{\pi} \leq \abs{Q})\}$, where $\mathrm{opt} \in \{\min,
  \max\}$ and $w \in \{\Delta^+, \Delta^-\}$.
\end{restatable}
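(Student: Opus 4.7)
My plan is to reduce each part of the lemma to two standard $\NC^2$ primitives: directed graph reachability (which lies in $\NL \subseteq \NC^2$), and iterated matrix multiplication over a tropical semiring. The conditions (a)--(d) are all ``regular'' structural constraints on paths, so a polynomial-size product-graph construction suffices for the emptiness test, and the optimization is then a shortest/longest path computation over the product.

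For emptiness, I would build an auxiliary graph $G'$ whose vertices augment the nodes of $G$ with a few boolean flags tracking the relevant history along the path: a flag that turns on the first time a positive-weight edge is traversed (to reason about $\Effectp{\pi}$), an analogous flag for negative-weight edges, and a ``seen any nonzero edge yet'' flag used to identify $\first{\pi}$. Condition (a) ``$\Effectp{\pi} \neq 0$'' becomes ``the positive-flag is on at $q$''; condition (b) its negation; and condition (c) ``$\Effect{\first{\pi}} < 0$'' becomes ``the nonzero-flag flipped on via a negative-weight edge''. For condition (d) about $\last{\pi}$, I would guess the last nonzero-effect edge $e = (u,z,v)$ with the desired sign and split the path into a prefix from $p$ to $u$ (satisfying the other imposed conditions in $G'$), the edge $e$, and a suffix from $v$ to $q$ restricted to the subgraph of zero-weight edges. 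A fixed subset of conditions then becomes a conjunction of polynomially many reachability queries in polynomial-size graphs, solvable in $\NL \subseteq \NC^2$.

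For the optimization, I would reuse the product graph $G'$, assign edge $(u,z,v)$ the weight $\max(0,z)$ when $w = \Delta^+$ and $\min(0,z)$ when $w = \Delta^-$, and compute the $\min$ or $\max$ over all $p$-to-$q$ walks of length at most $|Q|$ in $G'$ by matrix powering in the $(\mathrm{opt},+)$ tropical semiring. Form the adjacency matrix $M$ and compute $M, M^2, \ldots, M^{2^{\lceil \log |Q| \rceil}}$ by $\Oh(\log |Q|)$ rounds of iterated squaring; the $(p,q)$-entry of the componentwise optimum gives the desired value. Matrix multiplication of polynomial-size matrices with polynomially-bit-bounded entries is in $\NC^1$, so the $\Oh(\log |Q|)$ rounds yield $\NC^2$ overall. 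The length cap $|\pi| \leq |Q|$ keeps every intermediate matrix entry within polynomial bit-length.

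The main subtle point is condition (d) on $\last{\pi}$: unlike ``first'', it is not naturally expressible by a forward-updated flag in a single left-to-right sweep, so the flag trick alone does not capture it. The cleanest fix is the split construction above---guess the pivot edge and constrain the suffix to zero-weight edges---which meshes well with matrix powering, since the suffix constraint is merely a restriction to a subgraph: one multiplies a prefix matrix power in $G'$ with the pivot edge's weight contribution and a suffix matrix power in the zero-edge subgraph. Everything else is routine composition of standard $\NC^2$ building blocks.
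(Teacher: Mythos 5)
Your proposal is correct and follows essentially the same route as the paper: a polynomial-size product construction (the paper uses a two-copy graph where sign-specific edges switch copies, you use explicit boolean flags plus a guessed pivot edge for the $\last{\pi}$ condition, which amounts to the same thing) reducing emptiness to graph reachability in $\NL \subseteq \NC^2$, followed by $\Oh(\log|Q|)$ rounds of tropical matrix squaring for the optimisation. Both treatments correctly exploit that the optimisation is over not-necessarily-simple paths of length at most $|Q|$, so no further gap remains.
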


In essence, conditions~\ref{itm:has-pos}--\ref{itm:last:neg} above can be
checked through preprocessing and standard graph reachability,
\eg\ for \ref{itm:first:pos}, we make a copy of $G$ restricted to zero
transitions, which branches into a full copy of $G$ via negative
(resp.\ positive) transitions. Optimising the value of such paths can
be done by a typical ``divide-and-conquer procedure'' that optimizes
paths of length $1, 2, 4, \ldots, |Q|$ (hence $\log |Q|$ levels of
$\NC^1$ computations).


\begin{lemma}\label{lemma:adm-paths}
  Let $a \in \Q$, $p, q \in Q$, and $\pi \in \Paths{p}{q}$. 
  We have $a \in \enab{\pi}$ iff $a \in \tau$
  and any of these conditions hold:
  \begin{enumerate}[label=(\alph*)]
  \item $a \notin \{\inf \tau, \sup \tau\}$;\label{itm:adm:a}

  \item $a = \inf \tau = \sup \tau$, $\first{\pi} = \infty$;\label{itm:adm:b}

  \item $a = \inf \tau < \sup \tau$, $\first{\pi} \neq \infty \implies
    \effect{\pi_{\first{\pi}}} > 0$;\label{itm:adm:c}

  \item $a = \sup \tau > \inf \tau$, $\first{\pi} \neq \infty \implies
    \effect{\pi_{\first{\pi}}} < 0$.\label{itm:adm:d}
    
    
    
    
  \end{enumerate}
\end{lemma}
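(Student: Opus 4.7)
The plan is to prove both implications of the equivalence, with the forward direction being almost immediate from admissibility and the backward direction requiring a scaling construction.

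For the forward direction $(\Rightarrow)$, I would argue that if $a \in \enab{\pi}$, then the starting configuration $a$ must lie in $\tau$ by definition of admissibility. If $\tau = \{a\}$ (case~\ref{itm:adm:b}), then every intermediate configuration must also equal $a$, forcing every transition effect in the run to be $0$, hence $\first{\pi} = \infty$. If $a = \inf \tau < \sup \tau$ (case~\ref{itm:adm:c}), then any admissible run's first nonzero-effect transition must move into $\tau$; since the only room is upward, its effect must be positive. Case~\ref{itm:adm:d} is symmetric. Case~\ref{itm:adm:a} imposes no extra condition, which matches the disjunctive form of the statement.

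For the backward direction $(\Leftarrow)$, the plan is to construct an admissible run by scaling every transition of $\pi$ by a sufficiently small positive factor. Concretely, writing $\pi = t_1 \cdots t_n$ with $t_i = (q_{i-1}, z_i, q_i)$, I would choose scalars $\alpha_1, \ldots, \alpha_n \in (0, 1]$ so that all partial sums $a_i = a + \sum_{j \le i} \alpha_j z_j$ remain in $\tau$. In case~\ref{itm:adm:a}, $a$ lies strictly between $\inf \tau$ and $\sup \tau$, so there is room on both sides; taking $\alpha_j = \beta$ uniformly for sufficiently small $\beta > 0$ keeps every $a_i$ in the interior of $\tau$. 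In case~\ref{itm:adm:b}, every $z_i$ must be $0$ (as $\first{\pi} = \infty$), so any choice of scalars yields $a_i = a \in \tau$.

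For case~\ref{itm:adm:c}, if $\first{\pi} = \infty$ the construction is as in~\ref{itm:adm:b}. Otherwise, let $k = \first{\pi}$; I would set $\alpha_j = 1$ for $j < k$ (these transitions have effect $0$, so $a_j = a \in \tau$), then choose $\alpha_k$ small enough that $a_k = a + \alpha_k z_k \in (\inf\tau, \sup\tau)$, which is possible since $z_k > 0$ and $\inf\tau < \sup\tau$. Finally, set $\alpha_j = \beta$ for $j > k$ with $\beta$ small enough that the cumulative perturbation $\beta \sum_{k < j \le i} z_j$ stays within the positive gap between $a_k$ and the boundary of $\tau$, ensuring every remaining $a_i$ lies in the interior. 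Case~\ref{itm:adm:d} is handled symmetrically.

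The main subtlety will be the boundary cases~\ref{itm:adm:c} and~\ref{itm:adm:d}: one must first move strictly into the interior via the first nonzero-effect transition, and only then may the remaining scalars be chosen uniformly small. I expect no serious obstacle, as the construction relies only on basic continuity of the partial-sum map and the existence of the required slack.
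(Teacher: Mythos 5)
Your proof is correct and takes essentially the same approach as the paper's: the forward direction is the same boundary analysis (the paper merely phrases it as a proof by contradiction), and your backward direction is the paper's induction on $|\pi|$ unrolled into an explicit one-shot choice of scalars --- a small step into the interior of $\tau$ via the first nonzero-effect transition, then uniformly small scalars keeping all subsequent partial sums in the interior. No gaps.
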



\begin{proof}
  Having $a \in \tau$ is obviously necessary, so we assume it
  holds throughout the proof.

  $\Leftarrow$) We proceed by induction on $|\pi|$. If $|\pi| = 0$, then
  the claim is trivial as the empty path is admissible
  from $a$.
  Assume $|\pi| = n > 0$
  and $\pi$ satisfies a condition. Let $t \defeq \pi_1$
  and $\sigma \defeq \FromTo{\pi}{2}{n}$. If~\ref{itm:adm:a} holds,
  then $a \steps{\beta t} a'$ for some $a' \in \tau \setminus \{\inf
  \tau, \sup \tau\}$ and sufficiently small $\beta \in (0,
  1]$. If~\ref{itm:adm:b} holds, then $a \steps{t} a' = a$ as
    $\Effect{t} = 0$. If~\ref{itm:adm:c} or~\ref{itm:adm:d} holds,
    then either $a \steps{t} a' = a$ if $\Effect{t} = 0$, or $a
    \steps{\beta t} a'$ for some $a' \in \tau \setminus \{\inf \tau,
    \sup \tau\}$ and sufficiently small $\beta \in (0, 1]$
      otherwise. In all cases, $\sigma$ satisfies one of the
      conditions w.r.t.\ value $a'$. Thus, were are done by the induction
      hypothesis.

  $\Rightarrow$) Towards a contradiction, let us assume that $a \in
      \enab{\pi}$ and that no condition is satisfied. If $a = \inf
      \tau = \sup \tau$ and $\first{\pi} \neq \infty$, then there is obviously a
      contradiction. Otherwise, either (i)~$a = \inf \tau$ and
      $\Effect{\pi_{\first{\pi}}} < 0$; or (ii)~$a = \sup \tau$ and
      $\Effect{\pi_{\first{\pi}}} > 0$. We only consider~(ii) as~(i) is
      symmetric. Let $a \steps{\FromTo{\pi}{1}{\first{\pi}-1}} a'$. We have $a'
      = a$ by definition of $\first{\cdot}$. Moreover, $a' + \beta \cdot
      \Effect{\pi_{\first{\pi}}} > a' = a = \sup \tau$ for any $\beta \in (0,
      1]$. Since exceeding $\sup \tau$ is forbidden, we obtain the
        contradiction $a \notin \enab{\FromTo{\pi}{1}{\first{\pi}}} \supseteq
        \enab{\pi}$.
\end{proof}

\begin{corollary}\label{lemma:reachability-list}
  Given $a \in \Z$ and $p, q \in Q$, deciding whether $a \in
  \enab{\Paths{p}{q}}$, or equivalently $\Post{p,q}{a} \neq
  \emptyset$, is in~$\NC^2$.
\end{corollary}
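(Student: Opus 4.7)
The plan is to reduce the problem to a constant number of queries to \autoref{lemma:graph-reach-nc2}, using \autoref{lemma:adm-paths} as a characterization. The test $a \in \tau$ amounts to two comparisons of $a$ against the (rational) endpoints of $\tau$ and is in $\NC^1$; if it fails we reject. Otherwise, comparing $a$ to $\inf \tau$ and $\sup \tau$ places us in exactly one of the four cases of \autoref{lemma:adm-paths}, and we answer each case by an appropriate invocation of \autoref{lemma:graph-reach-nc2}.

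In case (a), the task reduces to plain reachability from $p$ to $q$, which is the empty-condition instance of \autoref{lemma:graph-reach-nc2}. In case (b), we need a path $\pi \in \Paths{p}{q}$ with $\first{\pi} = \infty$; equivalently, with $\Effectp{\pi} = 0$ and $\Effectn{\pi} = 0$, which is obtained by conjoining the two variants of condition \ref{itm:no-neg}. In case (c), we need a path satisfying either the preceding condition (forcing $\first{\pi} = \infty$) or the positive variant of condition \ref{itm:first:pos} (forcing $\Effect{\pi_{\first{\pi}}} > 0$); we issue both queries in parallel and take their disjunction. Case (d) is handled symmetrically using the negative variants of conditions \ref{itm:no-neg} and \ref{itm:first:pos}.

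Each branch performs $\Oh(1)$ invocations of \autoref{lemma:graph-reach-nc2} followed by a Boolean combination of their outputs, and $\NC^2$ is closed under such combinations, so the whole procedure remains in $\NC^2$. There is no significant obstacle: the heavy lifting has been done in the two preceding lemmas, and the only point requiring care is checking that the disjunctions used in cases (c) and (d) faithfully capture the implication ``$\first{\pi} \neq \infty \implies \effect{\pi_{\first{\pi}}}$ has the prescribed sign'' from \autoref{lemma:adm-paths}, which is immediate from the law of excluded middle applied to $\first{\pi} = \infty$.
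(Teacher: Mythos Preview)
Your proposal is correct and follows essentially the same approach as the paper: reduce via \autoref{lemma:adm-paths} to a case distinction on the position of $a$ relative to $\inf\tau$ and $\sup\tau$, and handle each case with one or two emptiness queries from \autoref{lemma:graph-reach-nc2}. Your explicit unpacking of the implication in cases~(c) and~(d) into a disjunction of ``$\first{\pi}=\infty$'' and ``$\Effect{\pi_{\first{\pi}}}$ has the prescribed sign'' is exactly what the paper does (it defines the sets $S_0,S_+,S_-$ and appeals to \autoref{lemma:graph-reach-nc2} for each), so there is no substantive difference.
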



\begin{proof}
  We report ``empty'' if $a \notin \tau$. Otherwise, let
  \begin{align*}
    S_0 &\defeq \{\pi \in \Paths{p}{q} \mid \Effectp{\pi} =
    \Effectn{\pi} = 0\}, \\
    S_+ &\defeq \{\pi \in \Paths{p}{q} \mid \first{\pi} \neq \infty \implies
    \Effect{\pi_{\first{\pi}}} > 0\}, \\
    S_- &\defeq \{\pi \in \Paths{p}{q} \mid \first{\pi} \neq \infty \implies
    \Effect{\pi_{\first{\pi}}} < 0\}.
  \end{align*}

  By \autoref{lemma:adm-paths}, it suffices if one the following
  holds:
  \begin{enumerate}[label=(\alph*)]
  \item $a \notin \{\inf \tau, \sup \tau\}$ and $\Paths{p}{q} \neq
    \emptyset$;

  \item $a = \inf \tau = \sup \tau$ and $S_0 \neq \emptyset$;

  \item $a = \inf \tau < \sup \tau$ and $S_+ \neq \emptyset$;

  \item $a = \sup \tau > \sup \tau$ and $S_- \neq \emptyset$.
  \end{enumerate}
  All of the above can be checked in $\NC^2$ by
  \autoref{lemma:graph-reach-nc2}.
\end{proof}

\subsection{Characterisation of reachability sets}

As a step towards computing a representation of 
$\Post{p,q}{a}$, we characterise $\Post{p,q}{a}$ in terms of its
closure. To this end, we note that admissible runs remain
admissible whenever they are scaled down. Consequently,  
 $\cl{\Post{p,q}{a}}$ is a closed interval that differs from
$\Post{p,q}{a}$ in at most three points.

\begin{restatable}[{Adapted from~\cite[Lemma 4.2(c)]{bh17}}]{proposition}{propScaleDown}\label{claim:scaling-down-gives-run}
  Let $\beta \in (0, 1]$ and let $\rho$ be an admissible run from
    configuration $p(a)$. It is the case that run $\beta \rho$ is also
    admissible from $p(a)$.
\end{restatable}


\begin{restatable}{lemma}{lemmaIntervals}\label{lemma:intervals}
  For every $b \in \cl{\Post{p,q}{a}}$, it is the case that $(a, b)
  \subseteq \Post{p,q}{a}$ and $(b, a) \subseteq \Post{p,q}{a}$.
\end{restatable}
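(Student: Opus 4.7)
The plan is to reduce the statement to a simple scaling argument that invokes Proposition~\ref{claim:scaling-down-gives-run}. First, I would dispose of the trivial case $a = b$, where both $(a,b)$ and $(b,a)$ are empty. Then, by symmetry, I would assume $a < b$ (the case $a > b$ is handled analogously by interchanging the roles), so that only $(a,b) \subseteq \Post{p,q}{a}$ needs to be shown.

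The key observation is that if one could reach exactly $b$, then every value strictly between $a$ and $b$ would be reachable by uniformly scaling the run: a run $\rho$ from $p(a)$ with effect $b - a$ admits a scaled run $\beta\rho$ of effect $\beta(b-a)$ for any $\beta \in (0,1]$, which is still admissible by Proposition~\ref{claim:scaling-down-gives-run}. The only subtlety is that $b$ itself need not belong to $\Post{p,q}{a}$; but membership in the closure suffices, since we only need points strictly below $b$.

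Concretely, let $c \in (a, b)$. Because $b \in \cl{\Post{p,q}{a}}$ and $c < b$, there exists some $b' \in \Post{p,q}{a}$ with $c < b' \leq b$ (pick any point of $\Post{p,q}{a}$ within distance $b-c$ of $b$). Fix an admissible run $\rho$ from $p(a)$ to $q(b')$ and set
\[
  \beta \defeq \frac{c-a}{b'-a} \in (0, 1].
\]
By Proposition~\ref{claim:scaling-down-gives-run}, $\beta \rho$ is admissible from $p(a)$, and its underlying path is the same as that of $\rho$, so it ends at state $q$. Its terminal counter value is $a + \beta \cdot \Effect{\rho} = a + \beta(b'-a) = c$, witnessing $c \in \Post{p,q}{a}$.

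I do not expect a real obstacle here: the only thing to be careful about is to choose $b'$ strictly greater than $c$ (which closure plus $c < b$ guarantees), so that $\beta$ lies in $(0,1]$ and the scaled run reaches exactly~$c$. The symmetric case $a > b$ works identically, picking $b' < c$ from the closure.
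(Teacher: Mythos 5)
Your proposal is correct and follows essentially the same argument as the paper's proof: pick $c \in (a,b)$, use the closure to find $b' \in \Post{p,q}{a}$ with $b' \geq c$, and scale a witnessing run by $\beta = (c-a)/(b'-a) \in (0,1]$, invoking \autoref{claim:scaling-down-gives-run} to preserve admissibility. The only cosmetic difference is that the paper allows $b' = c$ (taking $\beta = 1$) whereas you insist on $b' > c$; both choices work.
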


The above lemma holds as a run from $a$ to $b$ can be scaled
to reach an arbitrary value from $(a, b)$ and remain admissible.

\begin{corollary}\label{corollary:intervals}
  Set $\cl{\Post{p,q}{a}}$ is a closed interval. Moreover,
  $\cl{\Post{p,q}{a}} \setminus \Post{p,q}{a} \subseteq \{\inf
  \cl{\Post{p,q}{a}}, a, \sup \cl{\Post{p,q}{a}}\}$.
\end{corollary}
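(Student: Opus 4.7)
The plan is to set $L \defeq \inf \cl{\Post{p,q}{a}}$ and $U \defeq \sup \cl{\Post{p,q}{a}}$ and to establish the single key inclusion $(L, U) \setminus \{a\} \subseteq \Post{p,q}{a}$. Both assertions of the corollary then drop out by a short sandwich: taking closures in
\[
  (L, U) \setminus \{a\} \;\subseteq\; \Post{p,q}{a} \;\subseteq\; \cl{\Post{p,q}{a}} \;\subseteq\; [L, U]
\]
forces $\cl{\Post{p,q}{a}} = [L, U]$, which is a closed interval; and the points of $[L, U]$ missing from $(L, U) \setminus \{a\}$ form a subset of $\{L, a, U\}$, which bounds the set difference $\cl{\Post{p,q}{a}} \setminus \Post{p,q}{a}$.

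To establish the key inclusion, I first dispose of the degenerate case $\Post{p,q}{a} = \emptyset$, where everything is vacuous. Otherwise I take an arbitrary $c \in (L, U) \setminus \{a\}$ and split on the sign of $c - a$. When $c > a$, the strict inequality $U > c$ yields some $b \in \Post{p,q}{a}$ with $b > c$; applying Lemma \ref{lemma:intervals} to this $b$ gives $(a, b) \subseteq \Post{p,q}{a}$, and by construction $c \in (a, b)$. The case $c < a$ is entirely symmetric: $L < c$ furnishes a witness $b \in \Post{p,q}{a}$ with $b < c$, and Lemma \ref{lemma:intervals} then provides $c \in (b, a) \subseteq \Post{p,q}{a}$.

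There is no real obstacle here: Lemma \ref{lemma:intervals} already does the actual work, and the rest is purely topological bookkeeping. The only points worth checking briefly are the boundary situations --- $L = U$, $L = -\infty$ or $U = +\infty$, and the case where $a$ lies outside $[L, U]$ (in which the clause ``$\setminus \{a\}$'' is simply redundant) --- but none of these affect the structure of the argument.
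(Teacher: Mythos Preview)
Your proof is correct and follows essentially the same approach as the paper: both invoke Lemma~\ref{lemma:intervals} to show that $(L, U) \setminus \{a\} \subseteq \Post{p,q}{a}$, from which the two assertions follow. Your version is slightly more careful---picking a finite witness $b \in \Post{p,q}{a}$ rather than applying the lemma directly to the possibly infinite endpoints, and making the sandwich argument for the ``closed interval'' claim explicit---but the core idea is identical.
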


\begin{proof}
  Let $b \defeq \inf \cl{\Post{p,q}{a}}$ and $c \defeq \sup
  \cl{\Post{p,q}{a}}$. For the sake of contradiction, suppose there
  is some $v \in \cl{\Post{p,q}{a}} \setminus \Post{p,q}{a}$ such
  that $v \notin \{b, a, c\}$. By \autoref{lemma:intervals}, we have
  $(a, b) \cup (a, c) \cup (b, a) \cup (c, a) \subseteq \Post{p,q}{a}
  \subseteq \cl{\Post{p,q}{a}}$.  Since $v \in (b, c) \setminus
  \{a\}$, we obtain $v \in \Post{p,q}{a}$, which is a contradiction.
\end{proof}


\subsection{Identifying the endpoints}

We now show that a representation of the interval $\cl{\Post{p,q}{a}}$
can be obtained by identifying its endpoints in $\NC^2$.
Some simple observations 
follow from \autoref{claim:scaling-down-gives-run} and 
\autoref{lemma:intervals}:

\begin{restatable}{proposition}{propObs}\label{prop:obs}
  The following statements hold:
  \begin{enumerate}[label=(\alph*)]
  \item If $\Post{p,q}{\inf \tau} \neq \emptyset$, then
    $\inf{\cl{\Post{p,q}{\inf \tau}}} = \inf \tau$.\label{itm:a=inf}

  \item If $\Post{p,q}{\sup \tau} \neq \emptyset$, then
  $\sup{\cl{\Post{p,q}{\sup \tau}}} = \sup \tau$.\label{itm:a=sup}

  \item Let $v \in \tau \setminus \{\inf \tau, \sup \tau\}$ and let
    $\rho$ be a run. There exists $\epsilon \in (0, 1]$ such that for
    all $\beta \in (0, \epsilon]$ there exists $v_\beta > 0$ such that
      $v \steps{\beta \rho} v_\beta$. Moreover, $\lim_{\beta \to 0}
      v_{\beta} = v$.\label{claim:scale}
  \end{enumerate}
\end{restatable}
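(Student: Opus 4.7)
The plan is to prove the three items in turn, each exploiting the scaling invariance from \autoref{claim:scaling-down-gives-run} together with the fact that $\tau$ is an interval.

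For part~\ref{itm:a=inf}, let $c \defeq \inf \tau$. Every admissible run keeps its configurations in $\tau$, so every element of $\Post{p,q}{c}$ lies above $c$, giving $\inf \cl{\Post{p,q}{c}} \geq c$ for free. For the reverse inequality, pick any $b \in \Post{p,q}{c}$ witnessed by an admissible run $\rho$. By \autoref{claim:scaling-down-gives-run}, $\beta\rho$ is admissible from $c$ for every $\beta \in (0, 1]$, and its endpoint is $c + \beta \Effect{\rho} = c + \beta(b - c)$. Letting $\beta \to 0^+$ shows that values arbitrarily close to $c$ belong to $\Post{p,q}{c}$, so $c \in \cl{\Post{p,q}{c}}$, which combined with the previous bound yields $\inf \cl{\Post{p,q}{c}} = c$. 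Part~\ref{itm:a=sup} is entirely symmetric, using $c \defeq \sup \tau$ and noting that admissibility still forces endpoints below $c$.

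For part~\ref{claim:scale}, the key observation is that $v$ lies in the topological interior of $\tau$, so there is some $\delta > 0$ with $[v - \delta, v + \delta] \subseteq \tau$. Unlike in the first two items, $\rho$ is not assumed admissible from any particular configuration, so \autoref{claim:scaling-down-gives-run} cannot be invoked directly; the admissibility of $\beta\rho$ from $v$ must be established from scratch. The $i$-th intermediate configuration of $\beta\rho$ starting at $v$ has value $v + \beta \Effect{\FromTo{\rho}{1}{i}}$. Setting $M \defeq \max_{1 \leq i \leq |\rho|} \abs{\Effect{\FromTo{\rho}{1}{i}}}$ and $\epsilon \defeq \min\!\bigl(1,\, \delta/(M+1)\bigr)$, every $\beta \in (0, \epsilon]$ keeps all intermediate values within $\delta$ of $v$, hence inside $\tau$. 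Thus $\beta\rho$ is admissible from $v$, producing the endpoint $v_\beta \defeq v + \beta \Effect{\rho}$, and $v_\beta \to v$ as $\beta \to 0$.

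The main obstacle I anticipate is exactly the one just flagged: in part~\ref{claim:scale}, the hypothesis gives an arbitrary path $\rho$ with no prior admissibility guarantee, so the proof must manufacture admissibility by a uniform bound on prefix effects rather than by reusing an existing admissible run. Once one observes that the interior assumption on $v$ leaves a two-sided slack of width $\delta$ and that the prefixes have only finitely many effects to bound, the choice of $\epsilon$ becomes routine and the limit claim is immediate from $v_\beta = v + \beta \Effect{\rho}$.
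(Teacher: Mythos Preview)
Your proposal is correct and follows essentially the same approach as the paper. The only cosmetic difference is that for part~\ref{itm:a=inf} the paper invokes \autoref{lemma:intervals} (which packages the scaling argument you spell out), and for part~\ref{claim:scale} the paper first establishes admissibility of $\epsilon\rho$ and then cites \autoref{claim:scaling-down-gives-run} to extend to all $\beta \in (0,\epsilon]$, whereas you argue the uniform bound directly; both routes are equivalent.
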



The two forthcoming lemmas characterise the endpoints of
$\cl{\Post{p,q}{a}}$ through so-called admissible cycles. We say that
a cycle $\theta$ is $(a, p, q)$-\emph{admissible} if its first
transition $t$ satisfies $\Effect{t} \neq 0$, $a \in
\enab{\Paths{p}{\In{t}}}$ and $\Paths{\In{t}}{q} \neq \emptyset$. We
say that such an admissible cycle is \emph{positive} if
$\Effect{t} > 0$, and \emph{negative} if $\Effect{t} < 0$. Such cycles
can be iterated to approach the endpoints of $\tau$, by scaling all
transitions but $t$ arbitrarily close to zero.


\begin{restatable}{lemma}{lemmaSupfInf}\label{lemma:supfinf}
  If $\Post{p,q}{a} \neq \emptyset$ and $\V$ has an $(a, p,
  q)$-admissible cycle $\theta$, then the following holds:
  \begin{enumerate}[label=(\alph*)]
  \item\label{l:e} $\inf \cl{\Post{p,q}{a}} = \inf \tau$, if $\theta$
    is negative;

  \item\label{l:f} $\sup \cl{\Post{p,q}{a}} = \sup \tau$, if $\theta$
    is positive.
  \end{enumerate}
\end{restatable}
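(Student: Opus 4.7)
I prove (b); part (a) is symmetric. The bound $\sup\cl{\Post{p,q}{a}} \le \sup\tau$ is immediate since admissibility confines the counter to $\tau$. If $a = \sup\tau$, item~\ref{itm:a=sup} of \autoref{prop:obs} concludes directly, and the case $\inf\tau = \sup\tau$ is trivial from $\Post{p,q}{a} \ne \emptyset$; so I assume $a < \sup\tau$ and $\inf\tau < \sup\tau$. Write $\theta = t\sigma$, where $t$ is the first transition of $\theta$ with $\Effect{t} = d > 0$ and $r = \In{t}$. Fix an admissible run $\rho_1$ from $p(a)$ to some $r(a_0)$ (which exists by $a \in \enab{\Paths{p}{r}}$) and a path $\pi_2 \in \Paths{r}{q}$ (which exists by $\Paths{r}{q} \ne \emptyset$). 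By \autoref{claim:scaling-down-gives-run} I may replace $\rho_1$ by a scaled variant, so without loss of generality $a_0 \in \tau$ and $a_0 < \sup\tau$.

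For any $\epsilon > 0$, the plan is to construct an admissible run of the form $\rho_1 \cdot \theta^{(\gamma_1,\beta_1)} \cdots \theta^{(\gamma_K,\beta_K)} \cdot (\beta \cdot \pi_2)$ reaching $q(c)$ with $c > \sup\tau - \epsilon$, where $\theta^{(\gamma,\beta)}$ denotes $\theta$ with $t$ scaled by $\gamma \in (0,1]$ and every remaining transition of $\sigma$ scaled by $\beta \in (0,1]$. Starting at $r(v)$ with $v < \sup\tau$, such an iteration ends at $v + \gamma d + \beta\Effect{\sigma}$ and keeps all intermediate values in $[v + \gamma d - \beta U,\, v + \gamma d + \beta U]$, where $U$ bounds the absolute partial sums of $\sigma$ at full scale. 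Consequently, for any target increment in $(0, d]$ and any tolerance $\eta > 0$, one can choose $\gamma \in (0,1]$ and $\beta$ small enough that the iteration is admissible and its endpoint lies within $\eta$ of $v + \gamma d$.

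Iterating this primitive pushes the counter from $a_0$ up to some $v_K \in (\sup\tau - \epsilon/2,\, \sup\tau)$ in $K = \lceil(\sup\tau - a_0)/d\rceil$ iterations: the first iterations take $\gamma_i = 1$ with tiny $\beta_i$ to gain approximately $d$ each, and a final iteration is tuned with $\gamma_K < 1$ so its endpoint lands in the target window and its intermediate fluctuation stays strictly below $\sup\tau$. Since $v_K \in \tau \setminus \{\inf\tau, \sup\tau\}$, item~\ref{claim:scale} of \autoref{prop:obs} yields a scaling $\beta \in (0,1]$ for which $\beta \cdot \pi_2$ is admissible from $r(v_K)$ and reaches $q(c)$ with $|c - v_K| < \epsilon/2$, so $c > \sup\tau - \epsilon$, as required. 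The main technical obstacle is coordinating the $\gamma_i$ and $\beta_i$ so that (i) every intermediate value of every iteration stays strictly inside $\tau$, (ii) no iteration overshoots $\sup\tau$, and (iii) the cumulative drift contributed by the $\beta_i\Effect{\sigma}$ terms does not push the endpoint out of the target window $(\sup\tau - \epsilon/2, \sup\tau)$.
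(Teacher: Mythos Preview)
Your argument follows the same template as the paper's: reach $r$ via a scaled prefix run, iterate $\theta$ with the first transition contributing a guaranteed positive increment while the rest of the cycle is scaled small enough to stay inside $\tau$, and finish with a small-scaled path to $q$ via \autoref{prop:obs}\ref{claim:scale}. The only real difference is cosmetic: the paper fixes one tiny scaling for the whole cycle (so each iteration moves by at most $\epsilon/2$ and one iterates many times), whereas you keep $\gamma_i = 1$ for all but the last iteration and tune $\beta_i$; both work.

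There is one genuine gap: you never treat the case $\sup\tau = +\infty$. Your entire quantitative scheme is phrased as ``reach $c > \sup\tau - \epsilon$'' and sets $K = \lceil(\sup\tau - a_0)/d\rceil$, both of which are meaningless when $\sup\tau = +\infty$. In that case the claim to prove is that $\Post{p,q}{a}$ is unbounded above, and the argument must be rephrased accordingly (for any target $M$, iterate $\theta$ enough times to push the counter above $M + |\Effectn{\pi_2}|$, then append $\pi_2$). The paper handles this by an explicit case split on whether $\sup\tau$ is finite; you should do the same.
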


\begin{lemma}\label{lemma:effinite}
  Let $\Post{p,q}{a} \neq \emptyset$, $b \defeq \inf
  \cl{\Post{p,q}{a}}$ and $c \defeq \sup \cl{\Post{p,q}{a}}$. If $\V$
  has no $(a, p, q)$-admissible cycle which~is:
  \begin{enumerate}[label=(\alph*)]
  \item\label{l:negative} negative,
    then $b \neq -\infty$ and $b = \max(\inf \tau, a +
    \min\{\Effectn{\pi} \mid \pi \in \Paths{p}{q}, a \in
    \enab{\pi}\})$;

  \item\label{l:positive} positive,
    then $c \neq +\infty$ and $c = \min(\sup \tau, a +
    \max\{\Effectp{\pi} \mid \pi \in \Paths{p}{q}, a \in
    \enab{\pi}\})$.
  \end{enumerate}
\end{lemma}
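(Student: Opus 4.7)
The plan is to prove part \ref{l:negative}; part \ref{l:positive} follows by an entirely symmetric argument. Let
\[
m \defeq \inf\{\Effectn{\pi} \mid \pi \in \Paths{p}{q},\ a \in \enab{\pi}\};
\]
the infimum is over a nonempty set since $\Post{p,q}{a} \ne \emptyset$. I split the proof into three steps: (i)~$m$ is a finite minimum, attained by some $\pi^\star$; (ii)~the lower bound $b \ge \max(\inf\tau,\, a + m)$; and (iii)~the matching upper bound.

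For (i), I would argue contrapositively. Suppose some $\pi \in \Paths{p}{q}$ enabled from $a$ satisfies $\Effectn{\pi} < \min_{\pi_0}\Effectn{\pi_0}$, where $\pi_0$ ranges over the (finitely many) simple paths in $\Paths{p}{q}$. Decomposing $\pi$ as a simple backbone $\pi_0$ together with inserted cycles $\theta_1,\dots,\theta_r$, the additivity of $\Effectn$ gives $\Effectn{\pi} = \Effectn{\pi_0} + \sum_j \Effectn{\theta_j}$, so some $\theta_j$ must contain a negative transition $u$. Rotating $\theta_j$ so that $u$ is its first transition produces a cycle $\theta'$ at $\In{u}$ with $\Effect{u} < 0$. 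Writing $\pi = \sigma_1 \alpha u \beta \sigma_2$ with $\theta_j = \alpha u \beta$ inserted between $\sigma_1$ and $\sigma_2$, the prefix $\sigma_1 \alpha$ is enabled from $a$ (any prefix of an enabled run is enabled) and $u \beta \sigma_2$ witnesses $\Paths{\In{u}}{q} \ne \emptyset$, so $\theta'$ is $(a,p,q)$-admissible and negative, contradicting the hypothesis. Hence $m \ge \min_{\pi_0}\Effectn{\pi_0} > -\infty$, and since $\Effectn{\pi} \in \Z$, the infimum is attained by some $\pi^\star$.

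For (ii), any admissible run $\rho = \alpha_1 t_1 \cdots \alpha_n t_n$ from $p(a)$ to $q(c)$ with underlying path $\pi$ satisfies $c \ge \inf\tau$ by admissibility, and since each $\alpha_i \in (0,1]$,
\[
c - a \;=\; \sum_{i=1}^n \alpha_i z_i \;\ge\; \sum_{z_i<0} z_i \;=\; \Effectn{\pi} \;\ge\; m,
\]
giving $b \ge a + m$ and therefore $b \ge \max(\inf\tau,\, a+m)$.

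For (iii), which is the main technical step, I would scale $\pi^\star$ asymmetrically. Consider the run with underlying path $\pi^\star$ whose factors are $\alpha_i = \epsilon$ on positive transitions, $\alpha_i = \gamma$ on negative ones, and $\alpha_i = 1$ on zero-effect transitions, for parameters $\epsilon, \gamma \in (0,1]$. Its total effect is $\epsilon\Effectp{\pi^\star} + \gamma m$, and a direct bookkeeping of partial sums shows every intermediate configuration lies in the interval $[a + \gamma m,\, a + \epsilon\Effectp{\pi^\star}]$. Choosing $\gamma = 1$ when $a + m \ge \inf\tau$, or $\gamma = (\inf\tau - a)/m$ (or slightly below, if $\inf\tau$ is excluded from $\tau$) when $a + m < \inf\tau$, and letting $\epsilon \to 0^+$, drives the final value to $\max(\inf\tau,\, a + m)$ from above while keeping the run admissible. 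The chief obstacle is admissibility when $a$ coincides with an endpoint of $\tau$: here I would invoke \autoref{prop:obs}\ref{claim:scale} together with the enabledness of $\pi^\star$ from $a$, which pins down the sign of its first nonzero transition, to conclude that every intermediate configuration stays within $\tau$ once $\epsilon$ is taken small enough.
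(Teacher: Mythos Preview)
Your argument is correct and follows the same three-step template as the paper: show the optimum is finite by extracting an admissible cycle from any too-long path, establish the easy inequality directly, then build a one-parameter family of asymmetrically scaled runs along an optimising path converging to the target value.

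Two small imprecisions are worth flagging. In step~(i), writing ``$\pi = \sigma_1 \alpha u \beta \sigma_2$ with $\theta_j = \alpha u \beta$'' presumes the extracted cycle $\theta_j$ is contiguous in $\pi$, which fails when cycles nest; the paper sidesteps this by a direct pigeonhole on a \emph{repeated} negative transition, which automatically yields a contiguous cycle starting with that transition. Your conclusion still holds because $u$ does occur somewhere in $\pi$, and the prefix of $\pi$ up to that occurrence (rather than $\sigma_1\alpha$) already certifies $a \in \enab{\Paths{p}{\In{u}}}$. In step~(iii), the case $a = \inf\tau$ is dispatched immediately by \autoref{prop:obs}\ref{itm:a=inf} (giving $b = \inf\tau = \max(\inf\tau,a+m)$ since $m\le 0$), and for $a = \sup\tau$ the sign of the first nonzero transition of $\pi^\star$ is pinned down by \autoref{lemma:adm-paths}, not \autoref{prop:obs}\ref{claim:scale} (which only applies away from the endpoints). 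With these references corrected your boundary analysis goes through exactly as in the paper.
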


\begin{proof}
  We only prove~\ref{l:positive} as~\ref{l:negative} is
  symmetric. Assume $\V$ has no positive $(a, p, q)$-admissible
  cycle. Let $D^+ \defeq \{\Effectp{\pi} \mid \pi \in \Paths{p}{q}, a
  \in \enab{\pi} \}$. We show that $\max D^+$ is well-defined. For the
  sake of contradiction, suppose that $D^+$ is infinite. By a
  pigeonhole argument, we obtain a run $\rho$ admissible from $a$ and
  such that $\rho$ contains at least two occurrences of a transition
  $t$ with $\Effect{t} > 0$. Let $\upath{\rho} = \pi t \pi' t \pi''$
  where $\pi, \pi', \pi''$ are paths. The cycle $\theta \defeq t \pi'$
  is a positive admissible cycle, which yields a contradiction.

  Note that $c \leq \min(\sup \tau, a + \max D^+)$, so $c \neq
  +\infty$. It remains to show that $c = \min(\sup \tau, a + \max
  D^+)$. Let $\pi \in \Paths{p}{q}$ be such that $a \in \enab{\pi}$
  and $\Effectp{\pi} = \max D^+$. By definition, there exists a run
  $\rho = \alpha_1 t_1 \cdots \alpha_n t_n$ admissible from $p(a)$ and
  such that $\upath{\rho} = \pi$. Since $a \in \tau$, there exists
  $\lambda \in (0, 1]$ such that $a + \lambda \cdot \max D^+ =
  \min(\sup \tau, a + \max D^+)$. For all $\epsilon \in (0, 1)$, let
  $\rho_\epsilon \defeq \alpha_1' t_1 \cdots \alpha_n' t_n$ be the run
  such that
  \[
  \alpha_i' \defeq
  \begin{cases}
    (1 - \epsilon) \cdot \lambda &
    \text{if } \Effect{t_i} \geq 0, \\
    \epsilon \cdot (1 / |\Effect{t_i}|) \cdot (1 / n) &
    \text{otherwise}.
  \end{cases}
  \]
  Informally, if were allowed to scale transitions by $0$, then we
  would be done by using $\rho_0$ from $a$, as it would never decrease
  and reach exactly $a + \lambda \cdot \max D^+ = \min(\sup \tau, a +
  \max D^+)$.

  Formally, we choose a small $\epsilon \in (0, 1]$ as
  follows. If $a > \inf \tau$, then we pick $\epsilon$ so that $a -
  \epsilon \geq \inf \tau$. Otherwise, we pick $\epsilon$ so that $(1
  - \epsilon) \cdot\lambda \geq \epsilon$. We claim that the run
  $\rho_\delta$ is admissible from $a$ for every $\delta \in (0,
  \epsilon]$.
  First note that the top guard is never exceeded since $a +
  \Effectp{\rho_\delta} = a + (1 - \delta) \cdot \lambda \cdot \max
  D^+ \leq a + \lambda \cdot \max D^+ \leq \sup \tau$.  Let us now
  consider the bottom guard.

  If $a > \inf \tau$, then $a + \Effectn{\rho_\delta} \geq a - \delta
  \geq a - \epsilon \geq \inf \tau$. Otherwise, if $a = \inf \tau$,
  then either $\Effectn{\rho_\delta} = \Effectp{\rho_\delta} = 0$, in
  which case admissibility is trivial, or the first transition $t_i$
  such that $\Effect{t_i} \neq 0$ is such that $\Effect{t_i} \geq
  1$. In that case, the following holds for every $j \geq i$:
  \begin{align*}
    a + \Effect{\FromTo{\rho_\delta}{1}{j}}
    &\geq a + (1 - \delta) \cdot \lambda \cdot \Effect{t_i}
    + \Effectn{\FromTo{\rho_\delta}{i+1}{j}} \\
    &\geq a + (1 - \epsilon) \cdot \lambda
    + \Effectn{\FromTo{\rho_\delta}{i+1}{j}} \\ 
    &\geq a + (1 - \epsilon) \cdot \lambda
    - \epsilon \geq a = \inf \tau.
  \end{align*}
  This shows the admissibility of $\rho_\delta$. Thus, for all $\delta
  \in (0, \epsilon]$, we have $a \steps{\rho_\delta} a_\delta$ where
    $a_\delta \geq \min(\sup \tau, a + \max D^+) - \delta \cdot
    \lambda \cdot \max D^+$. We are done since $\lim_{\delta \to 0}
    a_\delta = \min(\sup \tau, a + \max D^+)$.
\end{proof}

In the forthcoming propositions, we show how the previous
characterisations can be turned into $\NC^2$ procedures.

\begin{lemma}\label{lemma:adm-cycle-ef}
  On input $a \in \Z$ and $p, q \in Q$, deciding if $\V$ has a
  positive or negative $(a, p, q)$-admissible cycle is in $\NC^2$.
\end{lemma}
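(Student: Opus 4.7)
The plan is to unfold the definition of an admissible cycle and reduce it to a polynomial disjunction of three independently checkable conditions. Concretely, $\V$ has a positive $(a, p, q)$-admissible cycle iff there exists a transition $t = (r, z, r') \in T$ with $z > 0$ such that (i)~$a \in \enab{\Paths{p}{r}}$, (ii)~$\Paths{r'}{r} \neq \emptyset$ (so that prepending $t$ to some $\sigma \in \Paths{r'}{r}$ yields a cycle at $r$ whose first transition is $t$), and (iii)~$\Paths{r}{q} \neq \emptyset$. The negative case is symmetric, using transitions $t$ with $z < 0$.

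I would then check the three conditions in parallel for every transition $t \in T$ and take the disjunction. Condition~(i) is decidable in $\NC^2$ by \autoref{lemma:reachability-list}. Conditions~(ii) and~(iii) are instances of reachability in the underlying directed multigraph induced by $T$, hence decidable in $\NL \subseteq \NC^2$. Running these $\NC^2$ subcircuits in parallel over the $O(|T|)$ choices of $t$ and combining their outcomes via a bounded fan-in OR tree of depth $O(\log |T|)$ keeps the overall computation in $\NC^2$.

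The only conceptual point worth stating explicitly is that the definition of an $(a,p,q)$-admissible cycle imposes admissibility only on the prefix from $p$ to $\In{t}$, not on the cycle itself; this is exactly what allows us to reduce the cycle-closure requirement to the purely graph-theoretic condition $\Paths{r'}{r} \neq \emptyset$ (with $\sigma = \varepsilon$ permitted when $r = r'$, covering self-loops). Once this is observed, there is no genuine obstacle: the lemma is a direct composition of \autoref{lemma:reachability-list} with standard $\NL$-reachability, and the main (minor) thing to verify is that the composition does not leave $\NC^2$.
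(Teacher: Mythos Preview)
Your proposal is correct and essentially identical to the paper's proof: the paper also iterates over all transitions $t$ with $\Effect{t} > 0$, checks the same three conditions (your $r, r'$ are the paper's $\In{t}, \Out{t}$, and your conditions (i)--(iii) match the paper's (i), (iii), (ii) respectively), and invokes \autoref{lemma:reachability-list} for (i) and $\NL \subseteq \NC^2$ graph reachability for the other two. Your additional remark that admissibility is only required on the prefix, not the cycle, is a useful clarification the paper leaves implicit.
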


\begin{proof}
  We consider the positive case; the negative one is
  symmetric. Testing whether there is a positive $(a, p,
  q)$-admissible cycle beginning with a transition $t$ with
  $\effect{t} > 0$ amounts to testing whether (i)~$a \in
  \enab{\Paths{p}{\In{t}}}$, (ii)~$\Paths{\In{t}}{q} \neq \emptyset$,
  and (iii)~$\Paths{\Out{t}}{\In{t}} \neq \emptyset$. Condition~(i)
  can be checked in $\NC^2$ by
  \autoref{lemma:reachability-list}. Conditions~(ii) and~(iii) are
  graph reachability queries which can be tested in $\NL \subseteq
  \NC^2$. There are at most $\abs{T}$ transitions with positive
  effect, so the conditions can be tested in parallel for each $t \in T$.
\end{proof}





\begin{proposition}\label{computing:ef}
  On input $a \in \Z$ and states $p, q$, the values $\inf
  \cl{\Post{p,q}{a}}$ and $\sup \cl{\Post{p,q}{a}}$ can be computed in
  $\NC^2$.
\end{proposition}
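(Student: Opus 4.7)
We compute $\sup \cl{\Post{p,q}{a}}$; the computation of $\inf \cl{\Post{p,q}{a}}$ is symmetric. First, run the $\NC^2$ emptiness test of \autoref{lemma:reachability-list}: if $\Post{p,q}{a} = \emptyset$, then $\cl{\Post{p,q}{a}}$ is undefined. Otherwise, use \autoref{lemma:adm-cycle-ef} to decide whether $\V$ has a positive $(a,p,q)$-admissible cycle. If one exists, then by \autoref{lemma:supfinf}\ref{l:f} we output $\sup \tau$.

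Otherwise, by \autoref{lemma:effinite}\ref{l:positive} the answer equals $\min(\sup \tau, a + M)$ where
\[
M \defeq \max\{\Effectp{\pi} \mid \pi \in \Paths{p}{q}, a \in \enab{\pi}\}.
\]
To compute $M$ we apply \autoref{lemma:graph-reach-nc2} with weight $w = \Delta^+$. By \autoref{lemma:adm-paths}, together with $a \in \tau$, the set of admissible paths from $a$ is either all of $\Paths{p}{q}$ (when $a \notin \{\inf \tau, \sup \tau\}$), or the set of paths with $\Effectp{\pi} = \Effectn{\pi} = 0$ (when $a = \inf \tau = \sup \tau$), or the union of the latter with $\{\pi : \Effect{\first{\pi}} > 0\}$ (when $a = \inf \tau < \sup \tau$), or with $\{\pi : \Effect{\first{\pi}} < 0\}$ (when $a = \sup \tau > \inf \tau$). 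In each case the set of admissible paths is a union of a constant number of families of the form handled by \autoref{lemma:graph-reach-nc2}, so a constant number of parallel $\NC^2$ invocations of that lemma, followed by a max, yields $M$.

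The one subtlety is that \autoref{lemma:graph-reach-nc2} maximises $\Delta^+$ only over paths of length at most $|Q|$, whereas $M$ ranges over all admissible paths. We argue that under the no-positive-admissible-cycle assumption an optimum is witnessed by a simple path. Let $\pi$ be admissible from $a$ with $\Effectp{\pi} = M$, and suppose it repeats a state, so $\pi = \pi_1 \sigma \pi_2$ with $\sigma$ a cycle of positive length. If $\sigma$ contained a positive transition $t$, rotating $\sigma$ to start at $t$ would yield a positive $(a,p,q)$-admissible cycle --- admissibility of the prefix of $\pi$ up to $t$ witnesses $a \in \enab{\Paths{p}{\In{t}}}$, and the suffix from $t$ reaches $q$ --- contradicting our case assumption. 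Hence $\sigma$ contains no positive transition, so $\Effectp{\pi_1 \pi_2} = \Effectp{\pi} = M$, and a short inspection of the four cases of \autoref{lemma:adm-paths} confirms that $\pi_1 \pi_2$ remains admissible from $a$ (the first non-zero transition of $\pi_1 \pi_2$ is either the same as that of $\pi$ or lies in $\pi_2$ with the same sign). Iterating this excision produces a simple admissible witness of length at most $|Q|$, so the value returned by \autoref{lemma:graph-reach-nc2} indeed equals $M$.
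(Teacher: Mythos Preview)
Your proof follows essentially the same approach as the paper's: test emptiness, test for a positive admissible cycle, and otherwise reduce to a short-path optimisation via \autoref{lemma:graph-reach-nc2}. The one structural difference is that the paper first disposes of the case $a = \sup\tau$ directly via \autoref{prop:obs}\ref{itm:a=sup}, so that the subsequent simple-path reduction only needs cases~\ref{itm:adm:a} and~\ref{itm:adm:c} of \autoref{lemma:adm-paths}.

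Your cycle-excision argument has a small gap precisely in case~\ref{itm:adm:d}, \ie\ $a = \sup\tau > \inf\tau$. There, admissibility requires the first non-zero transition of $\pi$ to be negative. If that first non-zero transition lies inside the excised cycle $\sigma$ (which is possible: $\sigma$ may well contain negative transitions), then after excision the first non-zero transition of $\pi_1\pi_2$ is the first non-zero of $\pi_2$, which can be positive --- so $\pi_1\pi_2$ need not be admissible, contrary to your parenthetical. Concretely: take $\tau = [0,10]$, $a = 10$, states $p,r,q$, transitions $(p,0,r)$, $(r,-1,r)$, $(r,1,q)$; the path $(p,0,r)(r,-1,r)(r,1,q)$ is admissible from $a$ with $\Effectp{} = 1$, but excising the self-loop yields $(p,0,r)(r,1,q)$, which is not admissible from $\sup\tau$. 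This does not actually break your algorithm, since when $a = \sup\tau$ one has $\min(\sup\tau, a + M') = \sup\tau$ for any $M' \geq 0$, so whatever \autoref{lemma:graph-reach-nc2} returns, the output is correct; but the justification as written is incomplete. The cleanest fix is the paper's: short-circuit the case $a = \sup\tau$ before entering the simple-path argument.
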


\begin{proof}
  We explain how to compute
  $c \defeq \sup \cl{\Post{p,q}{a}}$ in $\NC^2$. 
   The procedure for $\inf \cl{\Post{p,q}{a}}$ is
  symmetric. By \autoref{lemma:reachability-list}, testing whether
  $\Post{p,q}{a} = \emptyset$ is in $\NC^2$. If it holds, then trivially
  $c = -\infty$. Otherwise, assume that $\Post{p,q}{a} \neq
  \emptyset$, and hence $a \in \tau$. Additionally, if $a = \sup \tau$,
  then $c = \sup \tau$ by \autoref{prop:obs}\ref{itm:a=sup}. So we assume $a < \sup \tau$.
  
  By \autoref{lemma:adm-cycle-ef}, 
  it can be decided in $\NC^2$ whether there exists a positive
  $(a, p, q)$-admissible cycle. 
  If such a cycle exists, then $c = \sup \tau$ by
  \autoref{lemma:supfinf}. Otherwise, by \autoref{lemma:effinite}, we
  have $c = \min(\sup \tau, a + \max D^+)$ where $D^+ \defeq
  \{\Effectp{\pi} \mid \pi \in \Paths{p}{q}, a \in \enab{\pi}\}$.

  Consider a path $\pi \in \Paths{p}{q}$ that satisfies $a \in
  \enab{\pi}$ and which can be decomposed as $\pi = \sigma \theta
  \sigma'$ where $\theta$ is a cycle. As $\theta$ is $(a, p, q)$-admissible
  by definition, it cannot contain a positive transition.
  Otherwise, $\mathcal{V}$ would admit a positive $(a, p,
  q)$-admissible cycle, which is a contradiction.
  Hence $\Effectp{\pi} =
  \Effectp{\sigma \sigma'}$.
   We show that $a
  \in \enab{\sigma \sigma'}$. Recall that $a \neq \sup \tau$. 
  Hence $a \in
  \enab{\pi}$ follows from 
  \autoref{lemma:adm-paths}\ref{itm:adm:a} or \ref{itm:adm:c}.
  Note that neither condition can be violated by removing a
  nonpositive transition from $\pi$,
  so $a \in \enab{\sigma \sigma'}$.

  The above shows that there is a simple path $\pi_{max}$ such that
  $\max D^+ = \Effectp{\pi_{max}}$ and $a \in \enab{\pi_{max}}$.
  Therefore, to obtain $\max D^+$, it is sufficient to compute $\max E^+$
  where $E^+ \defeq \{\Effectp{\pi} \mid \pi \in \Paths{p}{q}, a \in
  \enab{\pi}, \abs{\pi} \leq \abs{Q}\}$.
  
  Now, let us make a case distinction based on whether $a = \inf \tau$.
  Assume this is true. By \autoref{lemma:adm-paths}, $\max E^+$ equals $\max($
  \begin{alignat*}{2}
    &\max\{\Effectp{\pi} \mid \pi \in \Paths{p}{q},
    \first{\pi} = \infty,
    \abs{\pi} \leq \abs{Q}\}, \\   
    &\max\{\Effectp{\pi} \mid \pi \in \Paths{p}{q},
    \Effect{\pi_{\first{\pi}}} > 0, \abs{\pi} \leq \abs{Q}\}\,
  \end{alignat*}$)$.
  By \autoref{lemma:graph-reach-nc2}, the above is the
  maximum of two values that can be computed in $\NC^2$. If $a \neq
  \inf \tau$, then we have $a \in \tau \setminus \{\inf \tau, \sup
  \tau\}$, so by \autoref{lemma:adm-paths}\ref{itm:adm:a} all paths
  are admissible from $a$, and hence $\max E^+ = \max\{\Effectp{\pi} \mid \pi \in
  \Paths{p}{q} \wedge \abs{\pi} \leq \abs{Q}\}$. This value can be
  computed in $\NC^2$ by \autoref{lemma:graph-reach-nc2}.
\end{proof}

\subsection{Computing the representation}

To obtain a representation of $\Post{p,q}{a}$, it remains to
explain how to check in $\NC^2$ which of the three limit elements $\inf
\cl{\Post{p,q}{a}}$, $\sup \cl{\Post{p,q}{a}}$ and $a$ belong to
$\Post{p,q}{a}$.

\begin{proposition}\label{prop:a_nc2}
  Testing whether $a \in \Post{p,q}{a}$ is in $\NC^2$.
\end{proposition}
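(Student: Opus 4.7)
The plan is to reduce testing $a\in\Post{p,q}{a}$ to an emptiness check over $\Paths{p}{q}$ of the form handled by \autoref{lemma:graph-reach-nc2}, via a case analysis on the position of $a$ inside $\tau$. If $a\notin\tau$, return ``no''. If $\inf\tau<a<\sup\tau$, every path is admissible from $a$ by \autoref{lemma:adm-paths}\ref{itm:adm:a}; the claim is that $a\in\Post{p,q}{a}$ iff some $\pi\in\Paths{p}{q}$ satisfies $\Effectp{\pi}=\Effectn{\pi}=0$ or $\Effectp{\pi}\neq 0\wedge\Effectn{\pi}\neq 0$. Sufficiency: for such a $\pi$ of the second kind, scale positive transitions by $\beta^+$ and negative ones by $\beta^-$ with $\beta^+/\beta^-=|\Effectn{\pi}|/\Effectp{\pi}$ (so the scaled effect is $0$), both chosen small enough that all prefix sums stay in a neighborhood of $a$ inside $\tau$, which is possible because $a$ is strictly interior. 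Necessity is immediate because a run whose underlying path has only positive (resp.\ only negative) non-zero transitions has strictly positive (resp.\ negative) effect. If $\tau=\{a\}$, admissibility forces $\first{\pi}=\infty$ by \autoref{lemma:adm-paths}\ref{itm:adm:b}, and any such path automatically has effect $0$, so the test again becomes $\Effectp{\pi}=\Effectn{\pi}=0$.

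The heart of the argument is the case $a=\inf\tau<\sup\tau$ (with $a=\sup\tau>\inf\tau$ handled symmetrically). I will show $a\in\Post{p,q}{a}$ iff some $\pi\in\Paths{p}{q}$ satisfies $\first{\pi}=\infty$, or both $\effect{\pi_{\first{\pi}}}>0$ and $\effect{\pi_{\last{\pi}}}<0$. For necessity, in any admissible run of effect $0$ from $a=\inf\tau$ all prefix sums of scaled effects lie in $[0,\sup\tau-a]$. The sum just before $\first{\pi}$ is $0$, so the sum at position $\first{\pi}$ equals $\alpha_{\first{\pi}}\cdot\effect{\pi_{\first{\pi}}}\geq 0$, forcing $\effect{\pi_{\first{\pi}}}>0$ (the value is non-zero by definition of $\first{\pi}$). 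Symmetrically, since all transitions after $\last{\pi}$ have zero effect and the total effect is $0$, the sum just before $\last{\pi}$ equals $-\alpha_{\last{\pi}}\cdot\effect{\pi_{\last{\pi}}}\geq 0$, forcing $\effect{\pi_{\last{\pi}}}<0$.

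The main obstacle is sufficiency here, since naive uniform scaling can fail: for instance, $+1,-10,-1$ cannot be balanced while staying $\geq 0$ under equal scalings. I will fix this with a non-uniform scaling using three magnitudes. Writing $i=\first{\pi}$, $j=\last{\pi}$, $c_i=\effect{\pi_i}>0$ and $c_j=\effect{\pi_j}<0$, set $\alpha_i=\lambda|c_j|+\delta$, $\alpha_j=\lambda c_i$, $\alpha_k=\gamma$ for every $k$ with $i<k<j$, and $\alpha_k=1$ for the remaining positions (which have zero effect). Pick $\lambda\in(0,1]$ small enough that $\lambda c_i|c_j|\leq\sup\tau-a$ and $\lambda c_i\leq 1$, then $\gamma$ much smaller than $\lambda$, and finally $\delta=-\gamma\bigl(\sum_{i<k<j}\effect{\pi_k}\bigr)/c_i$, which is $O(\gamma)$ and enforces total effect exactly $0$. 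One checks that the prefix sum is $0$ before $i$, within $O(\gamma)$ of the strictly positive value $\lambda c_i|c_j|$ between positions $i$ and $j-1$, exactly $0$ at $j$ (since post-$j$ transitions contribute nothing while the total is $0$), and stays $0$ thereafter, yielding an admissible run. Each of the existence conditions on $\pi$ used above is a combination of conditions covered by \autoref{lemma:graph-reach-nc2}, so the whole procedure runs in $\NC^2$.
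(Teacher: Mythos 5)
Your proof is correct and follows essentially the same route as the paper: both reduce the question to the existence of a path that is either all-zero or contains a positive and a negative transition, with the additional requirement when $a \in \{\inf \tau, \sup \tau\}$ that the first (resp.\ last) nonzero transition have the appropriate sign, and both then discharge these path conditions via \autoref{lemma:graph-reach-nc2}. The only quibble is with the constants in your explicit scaling: the stated constraints on $\lambda$ do not by themselves guarantee $\alpha_i = \lambda\abs{c_j} + \delta \leq 1$, nor that the intermediate prefix sums stay at most $\sup\tau - a$ when $\delta > 0$, but both are repaired by additionally requiring, say, $\lambda\abs{c_j} \leq \nicefrac{1}{2}$ and $\lambda c_i \abs{c_j} \leq (\sup\tau - a)/2$.
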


\begin{proof}
  By \autoref{lemma:reachability-list}, $\Post{p,q}{a} \neq \emptyset$
  can be tested in $\NC^2$. Thus, we assume that it is nonempty. It
  is easy to show that $a \in \Post{p,q}{a}$ iff at least one of these
  conditions~holds: \begin{enumerate}[label=(\alph*)] \item\label{a:1}
  there exists a path $\pi \in \Paths{p}{q}$ whose transitions are all
  zero, \ie\ $\Effect{\pi} = \Effectp{\pi} = \Effectn{\pi} = 0$;
    
  \item\label{a:2} there exist $\pi \in \Paths{p}{q}$ and $i,
    j$ such that $\Effect{\pi_i} > 0$ and $\Effect{\pi_j} < 0$. If $a
    = \inf \tau$, then we also require $\Effect{\pi_k} = 0$ for
    all $k < i$ and $k > j$. Similarly, if $a = \sup \tau$, then we
    also require $\Effect{\pi_k} = 0$ for all $k < j$ and $k > i$.
  \end{enumerate}
  %
    
  It remains to argue that both conditions can be checked in $\NC^2$.
  We can check condition~\ref{a:1} in $\NC^2$ via
  \autoref{lemma:graph-reach-nc2}\ref{itm:no-neg}. If $a \not\in
  \{\inf \tau, \sup \tau\}$, then condition~\ref{a:2} can be checked
  in $\NC^2$ via \autoref{lemma:graph-reach-nc2}\ref{itm:has-pos}. If
  $a \in \{\inf \tau, \sup \tau\}$, then condition~\ref{a:2} can be
  checked in $\NC^2$ via
  \autoref{lemma:graph-reach-nc2}\ref{itm:last:neg} and
  \autoref{lemma:graph-reach-nc2}\ref{itm:first:pos}.
\end{proof}

\begin{proposition}\label{prop:endpoints_nc2}
  On input $a \in \Z$ and $p, q \in Q$, computing
  $\Post{p,q}{a} \cap \{\inf \cl{\Post{p,q}{a}}, \sup \cl{\Post{p,q}{a}}\}$
  is in $\NC^2$.
\end{proposition}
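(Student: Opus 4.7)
The plan is to handle $b \defeq \inf \cl{\Post{p,q}{a}}$ and $c \defeq \sup \cl{\Post{p,q}{a}}$ independently. By symmetry (negating transition effects and exchanging the roles of $\inf \tau$ and $\sup \tau$) it suffices to describe the procedure for $c$. First I would detect whether $\Post{p,q}{a} = \emptyset$ via \autoref{lemma:reachability-list} and in that case output the empty set. Otherwise compute $c$ using \autoref{computing:ef}. If $c = a$, then the question reduces to \autoref{prop:a_nc2}. So I may assume $c > a$.

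For $c > a$, my plan is to reduce the test ``$c \in \Post{p,q}{a}$'' to a fixed number of $\NC^2$ primitives from \autoref{lemma:graph-reach-nc2} and \autoref{lemma:adm-cycle-ef}, by first giving a combinatorial characterization. The starting point is the split from \autoref{lemma:supfinf} and \autoref{lemma:effinite}: either $\V$ admits a positive $(a,p,q)$-admissible cycle (in which case $c = \sup \tau$) or $c = \min(\sup \tau, a + \max D^+)$. The membership test diverges along this split. When no positive admissible cycle exists and $c = a + \max D^+ < \sup \tau$, achieving effect \emph{exactly} $c - a$ along some admissible run requires a path $\pi$ whose positive effect meets the maximum, i.e. $\Effectp{\pi} = c - a$, \emph{and} whose negative effect vanishes ($\Effectn{\pi} = 0$), since otherwise some positive slack would remain unusable. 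When a positive admissible cycle is available, one can iterate it to synthesise a path with $\Effectp{\pi}$ strictly exceeding $c - a = \sup \tau - a$, leaving room to absorb negative transitions via small but positive scaling while still landing at $\sup \tau$.

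Given the characterization, I would verify it in $\NC^2$ as follows. The existence of a positive $(a,p,q)$-admissible cycle is decided by \autoref{lemma:adm-cycle-ef}. In the first case above, the existence of a path $\pi \in \Paths{p}{q}$ with $a \in \enab{\pi}$, $\Effectn{\pi} = 0$ and $\Effectp{\pi} = c - a$ is a maximum-weight reachability query restricted to transitions of non-negative effect, combined with the admissibility conditions of \autoref{lemma:adm-paths}; both are of the form treated in \autoref{lemma:graph-reach-nc2}. In the second case, the existence of an appropriate admissible simple path through a positive cycle reduces to a conjunction of graph reachability queries plus the same admissibility conditions on the prefix, again resolved by \autoref{lemma:graph-reach-nc2}.

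The main obstacle, and what most of the proof effort will go into, is establishing the characterization carefully in the corner cases of \autoref{lemma:adm-paths}, namely when $a \in \{\inf \tau, \sup \tau\}$. There, admissibility restricts the sign of the first non-zero transition of $\pi$, which interacts non-trivially with the requirement that the chosen scaling land exactly at $c$ rather than merely approach it. Ensuring that the characterization is stated in terms that fit the primitives of \autoref{lemma:graph-reach-nc2}\ref{itm:first:pos}--\ref{itm:last:neg} (sign constraints on the first/last non-zero transition) is the crux; once this is in place, the $\NC^2$ bound is immediate.
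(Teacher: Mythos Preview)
Your overall plan---compute $c$, reduce $c=a$ to \autoref{prop:a_nc2}, then characterise ``$c\in\Post{p,q}{a}$'' combinatorially and check it with the primitives of \autoref{lemma:graph-reach-nc2} and \autoref{lemma:adm-cycle-ef}---is the same as the paper's. However, your characterisation has a genuine gap.

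In the case where a positive $(a,p,q)$-admissible cycle exists (so $c=\sup\tau$), you claim one can iterate the cycle to make $\Effectp{\pi}$ exceed $c-a$ and then ``absorb negative transitions via small but positive scaling while still landing at $\sup\tau$''. This fails when the only way to reach $q$ ends in a negative transition. Concretely, take $\tau=[0,10]$, states $p,q$, a self-loop $(p,+1,p)$, and the edge $(p,-1,q)$. From $p(5)$ there is a positive $(5,p,q)$-admissible cycle, so $c=10$; yet every run into $q$ ends with a strictly negative step from some value $\le 10$, hence $10\notin\Post{p,q}{5}$. No amount of cycle iteration helps. The missing ingredient is a structural condition on the \emph{suffix} of the path: the segment from some intermediate state $r$ to $q$ must satisfy $\Effectn{\sigma}=0$ and $\Effectp{\sigma}>0$. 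The paper's characterisation is built around exactly this: one fixes $r$, requires such a $\sigma\in\Paths{r}{q}$, and then imposes one of three conditions on the prefix $\sigma'\in\Paths{p}{r}$ (roughly: $\Effectn{\sigma'}=0$ with $\Effectp{\sigma'\sigma}\ge c-a$; or $\Effectp{\sigma'\sigma}>c-a$; or a positive $(a,p,r)$-admissible cycle). Enumerating $r$ in parallel and invoking \autoref{lemma:graph-reach-nc2} on each side is what makes this $\NC^2$.

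Two smaller points. First, your case split omits ``no positive admissible cycle yet $c=\sup\tau$'' (i.e.\ $\max D^+ \ge \sup\tau - a$); this is precisely where the paper's condition $\Effectp{\pi}>c-a$ with a nonnegative suffix comes in. Second, you locate the main difficulty at $a\in\{\inf\tau,\sup\tau\}$, but the paper disposes of that by assuming $b<a<c$ (so $a\notin\{\inf\tau,\sup\tau\}$) and defers $a\in\{b,c\}$ to \autoref{prop:a_nc2}; the real crux is the suffix decomposition, not the first-transition corner cases.
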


\begin{proof}
  By \autoref{lemma:reachability-list}, we can check if $\Post{p,q}{a}
  \neq \emptyset$ in $\NC^2$. Thus, we assume it is
  nonempty. Let $b \defeq \inf \cl{\Post{p,q}{a}}$ and $c \defeq \sup
  \cl{\Post{p,q}{a}}$, which can be computed in $\NC^2$ by
  \autoref{computing:ef}. We check whether $b = c$. If it is,
  we return $\{b, c\}$ since $\Post{p,q}{a} \neq
  \emptyset$. Otherwise, we explain how to check whether $c \in
  \Post{p,q}{a}$; the case of $b$ can be handled symmetrically. We
  assume that $b < a < c$, as \autoref{prop:a_nc2} handles the
  case $a \in \{b, c\}$ when checking membership of $a$.

  If $c \not\in \tau$, then $c \not\in \Post{p,q}{a}$. Otherwise, it
  can be shown that $c \in \Post{p,q}{a}$ iff there is a state $r \in Q$ and
  a path $\sigma \in \Paths{r}{q}$ that satisfy $\Effectp{\sigma} >
  0$, $\Effectn{\sigma} = 0$ and either of the following holds:
  \begin{enumerate}[label=(\roman*)]
  \item there exists a path $\sigma' \in \Paths{p}{r}$ such that
    $|\sigma|, |\sigma'| \leq |Q|$, $\Effectn{\sigma'} = 0$ and
    $\Effectp{\pi} \geq c - a$ where $\pi \defeq \sigma'
    \sigma$;\label{itm:sup:simple:a}

  \item there exists a path $\sigma' \in \Paths{p}{r}$ such that
    $|\sigma|, |\sigma'| \leq |Q|$ and $\Effectp{\pi} > c - a$ where
    $\pi \defeq \sigma' \sigma$;\label{itm:sup:simple:b}

  \item there is a positive $(a, p, r)$-admissible cycle
    $\theta$.\label{itm:sup:cycle}
  \end{enumerate}

  It remains to show that the conditions can be tested in~$\NC^2$. There are $\abs{Q}$ choices for state $r$, so we can test
  the conditions for all choices in parallel. Let $S \defeq \{\sigma
  \in \Paths{r}{q} \mid \Effectp{\sigma} > 0 \text{ and }
  \Effectn{\sigma} = 0\}$. We first check whether $S \neq
  \emptyset$, which can be done in $\NC^2$ by
  \autoref{lemma:graph-reach-nc2}\ref{itm:has-pos}--\ref{itm:no-neg}.
  Moreover, condition~\ref{itm:sup:cycle} can be checked in $\NC^2$ by
  \autoref{lemma:adm-cycle-ef}. We proceed as follows to
  check~\ref{itm:sup:simple:a}. Let
  \begin{alignat*}{2}
    W
    &\defeq \{\Effectp{\sigma} &&\mid \sigma \in S,
    |\sigma| \leq |Q|\}, \\
    W'
    &\defeq \{\Effectp{\sigma'} && \mid \sigma' \in \Paths{p}{r},
    |\sigma'| \leq |Q|,\Effectn{\sigma'} = 0\}.
  \end{alignat*}
  By \autoref{lemma:graph-reach-nc2}, we can compute $m \defeq \max W
  + \max W'$ in $\NC^2$ and check that $m \geq c - a$.
  Lastly, we define $W'' \defeq
  \{\Effectp{\sigma'} \mid \sigma' \in \Paths{p}{r}, |\sigma'| \leq
  |Q|\}$ and test whether $\max W + \max W'' > c - a$ to verify
   condition~\ref{itm:sup:simple:b}.
\end{proof}

\begin{theorem}\label{thm:nc2}
  Given $a, a' \in \Z$ and $p, q \in Q$, the following can be done in
  $\NC^2$: obtaining a representation of $\Post{p,q}{a}$ and testing
  whether $a' \in \Post{p,q}{a}$.
\end{theorem}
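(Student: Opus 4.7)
The plan is to combine the characterisation and $\NC^2$ subroutines established in the previous subsections into a single assembly argument. By \autoref{corollary:intervals}, the closure $\cl{\Post{p,q}{a}}$ is a closed interval $[b, c]$, and the set $\cl{\Post{p,q}{a}} \setminus \Post{p,q}{a}$ is contained in the three-element set $\{b, a, c\}$. Hence the representation of $\Post{p,q}{a}$ is entirely determined by (i) the endpoints $b \defeq \inf \cl{\Post{p,q}{a}}$ and $c \defeq \sup \cl{\Post{p,q}{a}}$, and (ii) the subset of $\{b, a, c\}$ that actually lies in $\Post{p,q}{a}$. This means that at most two subintervals with explicit open/closed endpoint flags suffice: if $b < a < c$ and $a \notin \Post{p,q}{a}$, we output something of the shape $[b, a) \cup (a, c]$ with the endpoints $b, c$ excluded as appropriate; in every other case a single interval, or the empty set, is enough.

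To compute the representation, I would first invoke \autoref{lemma:reachability-list} to decide whether $\Post{p,q}{a} = \emptyset$, in which case we output the empty set. Otherwise, \autoref{computing:ef} yields $b$ and $c$ in $\NC^2$, \autoref{prop:a_nc2} tests whether $a \in \Post{p,q}{a}$, and \autoref{prop:endpoints_nc2} determines which of $b$ and $c$ belong. A final constant-size case analysis on these bits assembles the output intervals. Since each subroutine runs in $\NC^2$ and the assembly adds only $\NC^1$ post-processing, the whole construction stays in $\NC^2$.

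For the membership test $a' \in \Post{p,q}{a}$, I would run the same precomputation to obtain $b$, $c$, and the three membership bits, and then perform the following case analysis. If $\Post{p,q}{a} = \emptyset$ or $a' \notin [b, c]$, return \emph{no}. If $a' \in \{b, a, c\}$, return the answer already obtained from \autoref{prop:a_nc2} or \autoref{prop:endpoints_nc2} for that particular point. Otherwise $a'$ lies strictly in the interior of $[b, c]$ and differs from $a$, so by \autoref{corollary:intervals} we have $a' \in \Post{p,q}{a}$ and return \emph{yes}.

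Since the theorem is essentially an assembly of results already established, there is no substantive obstacle; the only thing to verify is that the final case analysis on the precomputed values $b$, $c$, and the three membership bits is simple enough not to push the overall computation beyond $\NC^2$, which is immediate as this last step involves only comparisons and Boolean combinations of $\Oh(1)$ precomputed values.
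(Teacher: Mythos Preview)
Your proposal is correct and follows essentially the same approach as the paper: compute $b$ and $c$ via \autoref{computing:ef}, determine which of $\{b,a,c\}$ lie in $\Post{p,q}{a}$ via \autoref{prop:a_nc2} and \autoref{prop:endpoints_nc2}, and use \autoref{corollary:intervals} to conclude that $\Post{p,q}{a} = [b,c] \setminus S$ for the resulting set $S$ of excluded points. The paper's version is just slightly terser and phrases the membership test as ``$b \le a' \le c$ and $a' \notin S$'' rather than spelling out the case analysis.
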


\begin{proof}
  By \autoref{computing:ef}, we can compute $b \defeq \inf
  \cl{\Post{p,q}{a}}$ and $c \defeq \sup \cl{\Post{p,q}{a}}$ in
  $\NC^2$. By \autoref{corollary:intervals}, \autoref{prop:a_nc2} and
  \autoref{prop:endpoints_nc2}, the set $S \defeq \cl{\Post{p,q}{a}}
  \setminus \Post{p,q}{a}$, of size at most three, can be computed in
  $\NC^2$. By \autoref{corollary:intervals}, this yields the
  representation $\Post{p,q}{a} = [b, c] \setminus S$. Thus,
  $a' \in \Post{p,q}{a}$ iff $b \leq a' \leq c$ and $a' \not\in S$.
\end{proof}

\subsection{Equality tests}

\newcommand{\eqguard}{\phi}

A \emph{COCA with equality tests} is a tuple $\V = (Q,
T, \allowbreak \tau, \eqguard)$, where $(Q,T,\tau)$ is a COCA
and $\eqguard \colon Q \to \{[z, z] \mid z \in \Z\} \cup \Q$. We say
that a run of $\V$ is \emph{admissible} if each of its configurations
$q(a)$ satisfies $a \in \tau \cap \eqguard(q)$.

Using the previous results, we can extend the $\NC^2$ membership of
the reachability problem $p(a) \steps{*} q(b)$ to COCA
with equality tests. The proof relies on
the fact that each equality test is passed by exactly one
configuration. For this reason, we can construct a reachability graph
between equality tests using a quadratic number of COCA
reachability queries.

Let us assume that $p$
has no incoming edges; if it does,
we can simply add a new initial state $p'$ and
a single transition $(p', 0, p)$.
Similarly, we can assume $q$ has no outgoing edges.

We will reason about reachability in $\V$ where we avoid all equality
tests.  For every states $p', q' \in Q$, let us define the
COCA $\V_{p',q'} \defeq (Q_{p',q'}, T_{p',q'})$ where
$Q_{p',q'} \defeq \{s \in Q \mid \eqguard(s) = \Q\} \cup \{p', q'\}$.
We treat $p'$ as a dedicated input state, and $q'$ as a dedicated
output state. That is, \[T_{p',q'} \defeq \{t \in T \mid \In{t} \in
Q_{p',q'}\setminus \{q'\}, \Out{t} \in Q_{p',q'} \setminus \{p'\}\}.\]

Let us define a graph $\mathcal{G} \defeq (V, E)$, where
$V \defeq \{p(a), q(b)\} \allowbreak \cup \allowbreak  \{r(z) \mid r \in Q, \phi(r) = [z, z],
z \in \tau\}$. If $a \notin \tau \cap \phi(p)$ or
$b \notin \tau \cap \phi(q)$, then we trivially conclude that $p(a)$
cannot reach $q(b)$. Hence, $\abs{V} \leq \abs{Q}$
holds. Intuitively, the nodes of $\mathcal{G}$ correspond to the
initial and final configurations, plus, for each equality test, the
configuration that passes this test. Let us define $E \defeq \{(p'(x),
q'(y)) \mid p'(x) \steps{*} q'(y) \text{ in $\V_{p', q'}$}\}$.

\begin{restatable}{lemma}{lemmaReachEquivalent}\label{lemma:reach-equivalent}
  It is the case that $p(a) \steps{*} q(b)$ in $\V$ if and only if
  there is a path from $p(a)$ to $q(b)$ in $\mathcal{G}$.
\end{restatable}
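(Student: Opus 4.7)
}

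The proof is a straightforward decomposition/composition argument, once one observes the key property of equality tests: whenever the run visits a state $r$ with $\phi(r) = [z, z]$, the counter value must be exactly $z$, so there is only one admissible configuration at $r$, namely $r(z)$. The nodes of $\mathcal{G}$ are therefore in bijection with the ``checkpoints'' that any admissible run of $\V$ is forced to pass through, plus the fixed initial and final configurations.

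For the forward direction ($\Rightarrow$), I would start from an admissible run $\rho$ witnessing $p(a) \steps{*} q(b)$ in $\V$ and look at the indices $i_0 < i_1 < \cdots < i_k$ such that $\rho$ is at an equality-test state after the $i_j$-th transition (with $i_0 = 0$ and $i_k = |\rho|$ by convention, to pin down the endpoints $p(a)$ and $q(b)$). Because each equality-tested state admits only one value, the configuration at each index $i_j$ is a node of $\mathcal{G}$. The sub-run $\FromTo{\rho}{i_j+1}{i_{j+1}}$ visits no equality-test state internally (only possibly at its endpoints), so it is an admissible run in $\V_{p_j, q_j}$, where $p_j$ and $q_j$ are the source and target states of that segment. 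This yields an edge in $\mathcal{G}$ between the two corresponding nodes, and stitching these edges together gives the desired path.

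For the backward direction ($\Leftarrow$), given a path $v_0 v_1 \cdots v_k$ in $\mathcal{G}$ with $v_0 = p(a)$ and $v_k = q(b)$, each edge $(v_j, v_{j+1}) = (p_j(x_j), q_j(y_j))$ is witnessed by an admissible run $\rho_j$ in $\V_{p_j, q_j}$. I would concatenate the $\rho_j$ to obtain a single sequence $\rho$ of transitions of $\V$ (noting $T_{p_j, q_j} \subseteq T$ and that the counter value agrees at each junction by definition of the edges). Admissibility in $\V$ follows from a case split on the intermediate configurations: at states without an equality test, admissibility with respect to $\tau$ is inherited from admissibility of each $\rho_j$ in its $\V_{p_j, q_j}$; at the checkpoint configurations $v_j$, the equality test at that state is satisfied because the value is exactly the constant coded by the node, by construction of $V$.

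The main subtlety, and the only real place where care is needed, is the treatment of the endpoints when $p$ or $q$ already carries an equality test: one must ensure the nodes $p(a)$ and $q(b)$ of $\mathcal{G}$ are consistent with the equality constraints (which is exactly why the lemma's preamble restricts to $a \in \tau \cap \phi(p)$ and $b \in \tau \cap \phi(q)$), and that the designated role of $p_j$ and $q_j$ as input/output of $\V_{p_j, q_j}$ prevents the sub-runs from ``leaking'' through the equality tests at the boundaries. Given the explicit assumption that $p$ has no incoming and $q$ has no outgoing transitions in $\V$, these corner cases reduce to the uniform argument above, and the rest of the proof is mechanical.
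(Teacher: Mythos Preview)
Your proposal is correct and follows essentially the same decomposition/composition argument as the paper: split an admissible run of $\V$ at the positions where an equality-tested state is visited (the paper phrases this as the unique maximal decomposition $\sigma_1\cdots\sigma_n$ with $\phi(\Out{\sigma_i})\neq\Q$), observe that each segment is an admissible run in the corresponding $\V_{p_j,q_j}$, and conversely concatenate the witnessing runs for the edges of a $\mathcal{G}$-path. The only small detail you omit that the paper makes explicit is the assumption that $\rho$ has no repeated configurations, which rules out the degenerate case $p_j=q_j$ where $T_{p_j,q_j}$ would forbid the first transition of a nontrivial segment; this is easily added and does not affect your argument.
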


An edge between two nodes $p'(a')$ and $q'(b')$ in $\mathcal{G}$
implies reachability between $p'(a')$ and $q'(b')$ in
$\V_{p',q'}$, from which reachability between 
$p'(a')$ and $q'(b')$ in $\V$ follows.
Hence the ``if'' direction follows by the fact that reachability is transitive.

Further, a run from $p(a)$ to $q(b)$ in $\V$ can
be split into runs that each
witness reachability between two equality tests
(or the initial/target configuration),
but avoid all equality tests in intermediate configurations.
Therefore, these runs are also runs in $\V_{p',q'}$
for some $p',q'$, so the corresponding edges exist in $\mathcal{G}$.
Hence the ``only if'' direction follows.



\begin{theorem}
  The reachability problem for COCA with equality tests is in $\NC^2$.
\end{theorem}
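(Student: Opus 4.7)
The plan is to reduce reachability in a COCA with equality tests to reachability in the graph $\mathcal{G}$ defined just before the theorem, and then to show that both constructing $\mathcal{G}$ and checking reachability within it can be carried out in $\NC^2$. The correctness of the reduction is exactly what Lemma~\ref{lemma:reach-equivalent} provides, so it suffices to turn the reduction into an $\NC^2$ algorithm.

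First, I would compute the vertex set $V$ of $\mathcal{G}$. This requires only scanning the input: identifying the states $r$ with $\phi(r) = [z,z]$ and checking $z \in \tau$, together with checking $a \in \tau \cap \phi(p)$ and $b \in \tau \cap \phi(q)$. These are trivial arithmetic comparisons on binary numbers and can be performed in $\NC^1$. Note that $|V| \leq |Q|+2$, and there are at most $|V|^2 = O(|Q|^2)$ candidate edges.

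Next, I would compute the edge set $E$ by launching, in parallel, one query per ordered pair $(p'(x), q'(y)) \in V \times V$. Each such query asks whether $p'(x) \steps{*} q'(y)$ in the auxiliary COCA $\V_{p',q'}$, which is an ordinary (guard-free) COCA with global interval $\tau$. By \autoref{thm:nc2}, this reachability question is decidable in $\NC^2$. Running $O(|Q|^2)$ such queries in parallel still yields an $\NC^2$ procedure, since $\NC^2$ is closed under taking polynomially many parallel instances.

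Finally, with $\mathcal{G}$ in hand, the original question reduces to directed $s$--$t$ reachability in a graph of polynomial size, which is in $\NL \subseteq \NC^2$. Composing the $\NC^2$ stage that builds $\mathcal{G}$ with the $\NC^2$ stage that performs graph reachability on it yields an $\NC^2$ algorithm overall, using the standard fact that $\NC^2$ is closed under composition (equivalently, under $\NC^1$ many-one reductions). The only mild subtlety is that the output of the first stage must be formatted so that the second stage can read it as an adjacency list of a poly-size graph over $V$; this is routine since the vertices of $\mathcal{G}$ are determined by the input and indexable in logspace. Together with \autoref{lemma:reach-equivalent}, this gives the theorem.
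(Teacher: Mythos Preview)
Your proposal is correct and follows essentially the same approach as the paper: construct the graph $\mathcal{G}$ by running $O(|Q|^2)$ parallel COCA reachability queries via \autoref{thm:nc2}, then invoke \autoref{lemma:reach-equivalent} and solve directed graph reachability in $\NL \subseteq \NC^2$. The extra remarks you add about computing $V$, formatting the output, and closure of $\NC^2$ under composition are sound and merely flesh out details the paper leaves implicit.
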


\begin{proof}
  Let us first argue that the graph $\mathcal{G}$ can be constructed
  in $\NC^2$. There are at most $\abs{Q}$ nodes in $\mathcal{G}$, and
  hence at most $\abs{Q}^2$ edges. Note that an edge $(p'(x), q'(y))$
  is present iff $p'(x) \steps{*} q'(y)$ in $\V_{p', q'}$,
  which can be decided in $\NC^2$ by \autoref{thm:nc2},
  as $\V_{p',q'}$ is a COCA. By running
  these $\abs{Q}^2$ queries in parallel, it follows that $\mathcal{G}$
  can be obtained in $\NC^2$.

  Once the graph $\mathcal{G}$ has been constructed, by
  \autoref{lemma:reach-equivalent}, it suffices to test reachability
  from $p(a)$ to $q(b)$ in $\mathcal{G}$. Since graph reachability is
  in $\NL \subseteq \NC^2$, we are done.
\end{proof}

\section{Guarded COCA reachability}
\label{sec:ptime}
We now turn to the reachability problem $p(a) \rightarrow_* q(b)$ for a
guarded COCA $\W = (Q, T, \tau )$.

In contrast to COCAs, for guarded COCAs $\Post{p,q}{a}$ does not necessarily
admit a decomposition into a constant number of intervals. Nevertheless, we
show that it can always be decomposed into a linear number of intervals with
respect to the number of states. Throughout this section, we write
$\intervals(R)$ to denote the unique decomposition of a set $R \subseteq \mathbb{Q}$
into maximal disjoint nonempty intervals. For example, $\intervals([3, 4] \cup (4, 5) \cup (5, +\infty)) = \{[3, 5), (5, +\infty)\}$ and
$\intervals(\emptyset) = \emptyset$.


\subsection{Controlling the number of intervals}\label{sec:control-ints}
%
We will prove that, for every $k$, the set $\{b \in \mathbb{Q} \mid
p(a) \rightarrow_\rho q(b), |\rho| = k\}$ decomposes into a polynomial number
of intervals. To do so, we will bound the size of the 
decomposition of sets obtained by updating $A =
[a,a]$ with operations which suffice to implement continuous
counter updates and guard tests. More precisely, the following operations: Minkowski sums,
intersections (with elements of $\mathcal{L} = \{\tau(q) \mid q \in Q\}$), and
unions (with sets constructed similarly).
For technical reasons, we also consider a fourth
operation.

\newcommand{\itmcmd}[1]{\makebox[10pt][l]{#1}}

Let us fix a bounded interval $A \in \intervals$ and 
$\mathcal{L} \subseteq \intervals$. We write
$P_{\mathcal{L}} \defeq \set{\inf I, \sup I \mid I \in \mathcal{L}}$
to denote the set of endpoints within $\mathcal{L}$. We define the
\emph{MIUN-closure} (short for Minkowski sum, Intersection, Union, and New), of interval $A$
w.r.t.\ $\mathcal{L}$, as the smallest collection $\mathcal{C} \subseteq 2^\Q$ such that $A \in \mathcal{C}$ and:
\begin{itemize}
    \item \itmcmd{M:} if $B \in \mathcal{C}$ and $z \in \Q_{> 0}$ then $B + (0,z], B + [-z,0) \in \mathcal{C}$;
    \item \itmcmd{I:} if $B \in \mathcal{C}$ and $L \in \mathcal{L}$ then $B \cap        
      L \in \mathcal{C}$;
    \item \itmcmd{U:} if $B,B' \in \mathcal{C}$ then $B \cup B' \in \mathcal{C}$;
    \item \itmcmd{N:} if $B \in \mathcal{C}$ and $I \in \intervals$ s.t.\ $\overline{I} \cap P_{\mathcal{L}} \neq \emptyset$ then $B \cup I \in \mathcal{C}$.
\end{itemize}

The forthcoming lemma forms the basis of our bound. It is based on so-called
\emph{indicator functions} which give us, for every interval $I$, the set of endpoints of $\mathcal{L}$ and $A$ that belong to
the closure of $I$. As we will see later, the set of endpoints needed to
analyse COCA is small. Furthermore, all MIUN-operations are such that sets of
$\mathcal{C}$ decompose into intervals whose closure contains at least one
such endpoint.

More formally, for all $B \in \mathcal{C}$, let $\phi_B \colon \intervals(B)
\to 2^{P_{\mathcal{L}} \cup P_A}$ be the function defined as $\phi_B(I) \defeq \overline{I}
\cap (P_{\mathcal{L}} \cup P_A)$.

\begin{lemma}\label{lem:phi}
    We have $\phi_B(I) \neq
\emptyset$ for all $B \in \mathcal{C}$ and $I \in \intervals(B)$.
\end{lemma}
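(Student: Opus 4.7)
The plan is to prove the lemma by structural induction on how $B$ is constructed in $\mathcal{C}$. In the base case $B = A$: since $A$ is a bounded interval, $\intervals(A) = \{A\}$, and $\phi_A(A) \supseteq \overline{A} \cap P_A = \{\inf A, \sup A\} \neq \emptyset$.

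For the inductive step I would handle each of the four closure rules. A single observation handles three of them: for any $I' \in \intervals(B')$ of the freshly built set $B'$, I show that $I'$ contains, as a subset, some maximal interval $J$ of an inductively constructed set, so that $\overline{J} \subseteq \overline{I'}$ and the witness supplied by the induction hypothesis $\phi(J) \neq \emptyset$ already lies in $\overline{I'}$. This applies directly to rule (U), since any $I' \in \intervals(B_1 \cup B_2)$ must contain some $J \in \intervals(B_1) \cup \intervals(B_2)$ in order to be nonempty; and to rule (N), where either the same argument works or $I' = I$, in which case the side condition $\overline{I} \cap P_{\mathcal{L}} \neq \emptyset$ supplies the witness. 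For rule (M), I would first observe that $\overline{J} \subseteq \overline{J + (0,z]} = [\inf J,\ \sup J + z]$, and symmetrically for $B + [-z,0)$; then every $I' \in \intervals(B + (0,z])$ contains some such shifted $J$, so $\overline{J} \subseteq \overline{I'}$ and the inductive witness transfers.

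The main obstacle is rule (I), where $B' = B \cap L$ and each nonempty $I' \in \intervals(B')$ arises as $J \cap L$ for some $J \in \intervals(B)$. Here $\overline{I'}$ can be strictly smaller than $\overline{J}$, so the witness guaranteed by $\phi_B(J) \neq \emptyset$ may fall outside $\overline{I'}$, and the uniform argument above breaks. I would resolve this by a short case split on the endpoints of $I'$: if $\inf I' = \inf L$ or $\sup I' = \sup L$, then that endpoint lies in $P_{\mathcal{L}} \cap \overline{I'}$ and we are done; otherwise both $\inf I'$ and $\sup I'$ must coincide with $\inf J$ and $\sup J$ respectively, which forces $\overline{I'} = \overline{J}$, and the induction hypothesis on $J$ transfers verbatim to $I'$. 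This exhausts all rules and completes the induction.
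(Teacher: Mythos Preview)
Your proposal is correct and follows essentially the same structural induction as the paper; the base case and rules M, U, N are argued identically (the paper even dismisses N as ``trivial''). The only cosmetic difference is rule I: the paper first reduces $L$ to a half-line so that at most one interval of $\intervals(B)$ is actually cut, whereas you do a direct case split on whether an endpoint of $I'$ comes from $L$ (giving a witness in $P_{\mathcal{L}}$) or from $J$ (forcing $\overline{I'}=\overline{J}$ and invoking the induction hypothesis)---just be careful to restrict the first case to \emph{finite} endpoints of $L$, since $\inf L=-\infty$ or $\sup L=+\infty$ lie in $P_{\mathcal{L}}$ but never in $\overline{I'}$.
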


\begin{proof}
    We proceed by induction on the definition of MIUN-closures. We 
    define $\mathcal{C}_0 \defeq
    \set{A}$ and $\mathcal{C}_{i+1}$ as $\mathcal{C}_i$ extended with all sets obtained
    by applying the MIUN-operations applied to any $B, B' \in \mathcal{C}_i$. We will show
    that the lemma holds for all $\mathcal{C}_i$, which will conclude the
    proof since $\mathcal{C} = \bigcup_{i\in \N}\mathcal{C}_i$.

    We have $\intervals(A) =
    \{A\}$ and the claim holds since $P_A \subseteq \phi_A(A)$.  For the
    induction step, we suppose the claim holds for $\mathcal{C}_i$. We have to
    prove that for all $C \in \mathcal{C}_{i+1}$ and all $I \in \intervals(C)$
    it holds that $\phi_C(I) \neq \emptyset$. Notice that this is trivial if $C$ is obtained
    from $B \in \mathcal{C}_i$ by application of the New operation.
              
    First, we consider the Minkowski sum. Consider some $B \in \mathcal{C}_i$
    with the function $\phi_B$ and let $I \in \set{(0, z], [-z,0)}$ for some
    $z \in \Q_{>0}$. Let $C \defeq B + I$. For all $J \in \intervals(C)$
    there exists $K_B \in \intervals(B)$ such that $K_B \subseteq J$. Thus,
    $\phi_B(K_B) \subseteq \phi_C(J)$ and the claim holds by the inductive
    hypothesis for $\phi_B$.
      
    Second, we consider intersections. We only deal with intervals of
    the form $[\ell,+\infty)$, $(\ell,+\infty)$, $(-\infty,\ell)$, or
    $(-\infty,\ell]$, since intersection with any interval can be expressed by
    at most two consecutive intersections with intervals of this form. Let $B \in
    \mathcal{C}_i$ and $L \in \mathcal{L}$.  Suppose that $\overline{L} =
    [\ell,+\infty)$ and let $C \defeq B \cap L$.  Recall that $\ell \in
    P_{\mathcal{L}}$. Observe
    that $\intervals(B)$ contains at most one interval $I$ such that $\ell \in
    \overline{I}$. If such an $I$ exists, then $\ell \in \phi_C(\overline{I
    \cap L})$. For all other intervals $J \in \intervals(B)$, we have that $J
    \cap L$ is either $J$ or $\emptyset$. If the intersection is nonempty,
    then $\phi_C(J) = \phi_B(J)$ and the claim holds by inductive
    hypothesis. If $\cl{L}$ is instead of the form
    $(-\infty,\ell]$, then we proceed similarly.
          
    Finally, we consider unions. Let $B,B' \in \mathcal{C}_i$ and 
    $I \in \intervals(B \cup B')$. By definition, there exists $J \in \intervals(B)
    \cup \intervals(B')$ with $J \subseteq I$. Therefore, either $\phi_B(J)$ or
    $\phi_{B'}(J)$ is nonempty and contained in $\phi_{B \cup B'}(I)$.
\end{proof}

%


\begin{lemma}\label{lemma:nothreesome}
  For every set $B \subseteq \Q$ and every pairwise distinct intervals
  $I_1, I_2, I_3 \in \intervals(B)$, it is the case that $\cl{I_1}
  \cap \cl{I_2} \cap \cl{I_3} = \emptyset$.
\end{lemma}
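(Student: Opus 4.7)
The plan is to argue by contradiction: suppose there exists a point $p \in \cl{I_1} \cap \cl{I_2} \cap \cl{I_3}$ and derive a contradiction with pairwise disjointness and maximality of the intervals in $\intervals(B)$.

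First I would rule out the possibility that $p$ is an interior point of any of the three intervals. Concretely, if some $I_j$ contains an open neighborhood $(p-\epsilon,p+\epsilon) \cap \Q$ of $p$, then since $p \in \cl{I_k}$ for the other two indices $k$, each such $I_k$ must contain points arbitrarily close to $p$, and hence points inside this neighborhood; this contradicts $I_j \cap I_k = \emptyset$. As a consequence, for every $j \in \{1,2,3\}$ the point $p$ is either outside $I_j$ or lies at an endpoint of $I_j$. In particular, each $I_j$ is contained in $(-\infty,p]$ or in $[p,+\infty)$ (and possibly in both, in the degenerate case $I_j = \{p\}$).

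Next I would apply the pigeonhole principle: two of the three intervals, say $I_2$ and $I_3$, lie on the same side of $p$, say $I_2,I_3 \subseteq [p,+\infty)$. The combination of $p \in \cl{I_j}$ with $I_j \subseteq [p,+\infty)$ forces $\inf I_j = p$ for $j=2,3$. The main (and easy) obstacle is handling the endpoint carefully: since $I_2$ and $I_3$ are disjoint, at most one of them contains $p$, so at least one, say $I_3$, satisfies $I_3 \subseteq (p,+\infty)$ and $\inf I_3 = p$.

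Finally I would derive the contradiction by exploiting convexity of intervals. Pick any $q_2 \in I_2$ with $q_2 > p$ (such a point exists because otherwise $I_2 = \{p\}$, making $I_2 \cup I_3$ an interval strictly containing either $I_2$ or $I_3$, contradicting maximality in $\intervals(B)$). Then pick $q_3 \in I_3$ with $p < q_3 < q_2$, which is possible because $\inf I_3 = p$. Since $I_2$ is an interval containing $p$-neighborhood approach from above and contains $q_2 > q_3 > p$, convexity together with $\inf I_2 = p$ gives $q_3 \in I_2$, contradicting $I_2 \cap I_3 = \emptyset$. The symmetric argument handles the case where both intervals lie in $(-\infty,p]$, completing the proof.
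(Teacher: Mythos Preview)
Your argument is correct and is essentially the detailed case analysis the paper only sketches (the paper merely remarks that a point lies in at most one of the disjoint intervals and in at most two closures, and that maximality rules out a third). One sentence in your final step is garbled (``containing $p$-neighborhood approach from above''), but the intended reasoning---$\inf I_2 = p$ guarantees some $r \in I_2$ with $p < r < q_3$, so convexity with $q_2 \in I_2$ forces $q_3 \in I_2$---is clear and sound.
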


A point can belong to at most one interval among disjoint
intervals. Moreover, a point
can belong to at most two closures, \eg consider $[0,1)$ and $(1,2]$. This is no
longer possible for three intervals due to maximality of intervals in
$\intervals(B)$. Thus, the proof of \autoref{lemma:nothreesome} follows from a
simple case analysis.

Now, we show that if $\mathcal{L}$ is finite, then there is a polynomial bound on the number of intervals within the
decomposition of any set from the MIUN-closure $\mathcal{C}$. More formally:

\begin{lemma} \label{lemma:boundedunion}
  If $\mathcal{L}$ is finite, then
$\intervals(B)$ consists of at most $4(|\mathcal{L}| + 1)$ intervals, for every $B \in \mathcal{C}$
\end{lemma}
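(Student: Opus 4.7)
The plan is to combine the two preceding lemmas via a simple double-counting argument. First I would observe that $P_A$ has at most two elements (the endpoints of the bounded interval $A$) and $P_{\mathcal{L}}$ has at most $2|\mathcal{L}|$ elements, so the total ``budget'' of endpoints available is $|P_{\mathcal{L}} \cup P_A| \leq 2(|\mathcal{L}| + 1)$.

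Next, I would use Lemma~\ref{lem:phi} to associate to each interval $I \in \intervals(B)$ a nonempty set $\phi_B(I) \subseteq P_{\mathcal{L}} \cup P_A$; in particular, each such $I$ can be ``charged'' to at least one endpoint that lies in its closure. To control how many intervals can be charged to the same endpoint, I would invoke Lemma~\ref{lemma:nothreesome}: three pairwise distinct maximal intervals of $\intervals(B)$ cannot share a common closure point. Hence every endpoint of $P_{\mathcal{L}} \cup P_A$ appears in $\phi_B(I)$ for at most two intervals $I \in \intervals(B)$.

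Putting these two observations together yields
\[
|\intervals(B)| \;\leq\; 2 \cdot |P_{\mathcal{L}} \cup P_A| \;\leq\; 2 \cdot 2(|\mathcal{L}|+1) \;=\; 4(|\mathcal{L}|+1),
\]
which is exactly the claimed bound. There is no real obstacle here; both lemmas have already done the heavy lifting. The only minor subtlety worth mentioning is that one should be slightly careful about what ``endpoints'' of $A$ and of elements of $\mathcal{L}$ contribute when these sets are unbounded (so $\pm\infty$ appears), but since $\pm\infty$ never belongs to $\overline{I}$ for any bounded or half-bounded interval $I$ of $\intervals(B)$, such infinite ``endpoints'' simply do not contribute to any $\phi_B(I)$ and the counting bound is unaffected.
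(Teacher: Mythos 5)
Your argument is exactly the paper's: charge each maximal interval of $\intervals(B)$ to a point of $P_{\mathcal{L}} \cup P_A$ in its closure (nonempty by \autoref{lem:phi}), note that no point lies in three such closures (\autoref{lemma:nothreesome}), and multiply $2 \cdot 2(|\mathcal{L}|+1)$. The proposal is correct and matches the paper's proof.
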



\begin{proof}
  By \autoref{lemma:nothreesome}, there are at most two pairwise disjoint intervals that share a point in their
  closure. By \autoref{lem:phi}, the indicator function guarantees that
  $\overline{J} \cap (P_{\mathcal{L}} \cup P_A) \neq \emptyset$ for
  all $J \in \intervals(B)$. Thus, $\intervals(B)$ has at most
  $2(2|\mathcal{L}| + 2)$ intervals. Otherwise, by the pigeonhole principle, a point of $P_{\mathcal{L}} \cup P_A$ would belong to at
  least three closures of intervals from $\intervals(B)$.
\end{proof}

%


\subsection{Approximations of the reachability function}

It will be convenient to
manipulate mappings from states to (under-approxi\-ma\-tions of) their
reachability functions.
We consider the mappings 
$\mathcal{R}_Q \defeq \{ R \colon Q \to 2^\mathbb{Q}\}$.
An example of such a mapping is $\reach_{p(a)}$ defined as
$\reach_{p(a)}(q) \defeq \Post{p,q}{a}$.
Given $R,R' \in \mathcal{R}_Q$, we write $R \preceq R'$ iff $R(q) \subseteq R'(q)$
for all $q \in Q$.
We seek to define a sequence of mappings $R_0 \preceq R_1 \preceq \cdots$ such that $R_n = \reach_{p(a)}$ for some $n \in \N$.


For all state $q \in Q$, we define the \emph{successor mapping-update function}
$\TheBetterPost_{q}{} \colon \mathcal{R}_Q
\to 2^{\mathbb{Q}}$ as follows:
\begin{alignat*}{2}
  \TheBetterPost_{q}(R) & {} \defeq R(q)\\
    &\cup \bigcup\{(R(r) + (0,z]) \cap \tau(q) \mid (r,z,q) \in T, z > 0\}\\
    &\cup \bigcup\{(R(r) + [z,0)) \cap \tau(q) \mid (r,z,q) \in T, z < 0\} \\
    &\cup \bigcup\{R(r) \cap \tau(q) \mid (r,0,q) \in T\}.
\end{alignat*}

Let $\TheBetterPost \colon \mathcal{R}_Q \to \mathcal{R}_Q$
be defined as $\TheBetterPost(R)(q) \defeq \TheBetterPost_{q}(R)$.
Below, we state the key property enjoyed by $\TheBetterPost$. In words, its
$i$-fold composition coincides with the set of configurations reachable
via runs of length at most $i$. It can be easily proved by induction on the
definition of $\TheBetterPost$.
\begin{restatable}{lemma}{postSucks}
\label{lem:post-succs}
Let $S_0 \in \mathcal{R}_Q$ and $S_i \defeq
\TheBetterPost(S_{i-1})$ for all $i \ge 1$. The following holds:
\[
  S_i(q) = 
    \bigcup_{p \in Q} \{b \in \mathbb{Q} \mid a \in S_0(p),
    p(a) \rightarrow_{\rho} q(b) \text{ and } |\rho| \leq i\}.
\]
\end{restatable}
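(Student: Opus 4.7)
The plan is to proceed by induction on $i$. For the base case $i = 0$, the only admissible run of length at most zero is the empty one, which is admissible from $p(a)$ to $q(b)$ iff $p = q$, $a = b$ and $a \in \tau(p)$; the right-hand side therefore reduces to $S_0(q) \cap \tau(q)$, which coincides with $S_0(q)$ under the natural hypothesis $S_0(p) \subseteq \tau(p)$ (without which the statement should be read modulo this cleanup). For the inductive step, I would show both inclusions using the induction hypothesis applied to $S_{i-1}$.

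For the $(\supseteq)$ direction, take $b$ witnessed by some $a \in S_0(p)$ and an admissible run $p(a) \rightarrow_\rho q(b)$ with $|\rho| \leq i$. If $|\rho| \leq i-1$ then the IH places $b \in S_{i-1}(q)$, and the identity clause ``$R(q)$'' of $\TheBetterPost_q$ lifts this to $S_i(q)$. Otherwise $|\rho| = i$, and I split off the final step as $\rho = \sigma \cdot (\alpha t)$ with $t = (r, z, q) \in T$; admissibility of any prefix of an admissible run gives $p(a) \rightarrow_\sigma r(b')$ with $b = b' + \alpha z$ and $b \in \tau(q)$, and the IH yields $b' \in S_{i-1}(r)$. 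A case split on the sign of $z$ places $b$ in the matching clause of $\TheBetterPost_q$: for $z > 0$ we have $\alpha z \in (0, z]$ so $b \in (S_{i-1}(r) + (0, z]) \cap \tau(q)$, and the $z < 0$ and $z = 0$ cases are analogous.

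For the $(\subseteq)$ direction, take $b \in S_i(q) = \TheBetterPost_q(S_{i-1})$ and dispatch on which of the four clauses contributed $b$. The identity clause is handled by the IH directly. In each of the three transition-driven clauses there is an incoming $(r, z, q) \in T$ and $b' \in S_{i-1}(r)$ with $b = b' + y$, where $y \in (0, z]$, $y \in [z, 0)$, or $y = 0$ matches the sign of $z$; in each case $y = \alpha z$ for a unique $\alpha \in (0, 1]$, and $b \in \tau(q)$ by the $\cap \tau(q)$ in the definition. Applying the IH to $b'$ yields an admissible run of length at most $i-1$ from some $p(a)$ with $a \in S_0(p)$ to $r(b')$, which I extend by the scaled transition $\alpha t$ to obtain the required admissible run to $q(b)$ of length at most $i$. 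The only real care needed, and hence the crux of the bookkeeping, is verifying the bijection between the Minkowski-sum intervals $(0, z]$, $[z, 0)$ used in $\TheBetterPost_q$ and the scalings $\alpha \in (0, 1]$ allowed on a transition; intermediate admissibility at the fresh configuration is then automatic from the $\cap \tau(q)$ factor in each non-identity clause.
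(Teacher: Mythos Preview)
Your proposal is correct and follows precisely the approach the paper indicates: the paper does not spell out a proof but simply remarks that the lemma ``can be easily proved by induction on the definition of $\TheBetterPost$,'' and your argument is exactly that induction, splitting off the final transition of a run for one inclusion and dispatching on the defining clauses of $\TheBetterPost_q$ for the other. Your observation that the base case only literally matches $S_0(q)$ under the hypothesis $S_0(p) \subseteq \tau(p)$ is a genuine (minor) technical point the paper glosses over; it is harmless in context, since every application of the lemma in the paper initialises $S_0$ with values inside the relevant guards.
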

%
%

Now we can state a proposition that shows how the previous section relates to
these definitions. Let us fix a
configuration $p(a)$. We will focus on the MIUN-closure $\mathcal{C}$
of $A \defeq [a,a]$ with respect
to $\mathcal{L} \defeq \set{\guard{q} \mid q \in Q}$.
We say that a mapping $S
\in \mathcal{R}_Q$ is \emph{$\mathcal{C}$-valid} if $S(q) \in \mathcal{C}$ for
all $q \in Q$.

\begin{proposition}\label{proposition:succ}
Let $S \in \mathcal{R}_Q$ be a $\mathcal{C}$-valid mapping. We have $S \preceq
\TheBetterPost(S)$ and $\TheBetterPost(S)$ is a $\mathcal{C}$-valid mapping.
\end{proposition}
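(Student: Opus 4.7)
The plan is to verify both claims by direct unfolding of the definitions, with no serious obstacles beyond checking that each of the four families of summands in $\TheBetterPost_q(S)$ is produced by an allowed sequence of MIUN-operations.

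First, for $S \preceq \TheBetterPost(S)$, I would simply observe that the very first term in the union defining $\TheBetterPost_q(R)$ is $R(q)$ itself. Hence $S(q) \subseteq \TheBetterPost_q(S) = \TheBetterPost(S)(q)$ for every $q \in Q$, which is exactly $S \preceq \TheBetterPost(S)$.

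Second, for $\mathcal{C}$-validity of $\TheBetterPost(S)$, I would fix $q \in Q$ and argue that each summand in the definition of $\TheBetterPost_q(S)$ belongs to $\mathcal{C}$, so that $\TheBetterPost_q(S)$ does too by repeated application of rule U. By $\mathcal{C}$-validity of $S$, we have $S(q) \in \mathcal{C}$ and $S(r) \in \mathcal{C}$ for every $r \in Q$. Then I would handle the three remaining families case by case using the definition of MIUN-closure:
\begin{itemize}
  \item For a transition $(r,z,q) \in T$ with $z > 0$, apply rule M to $S(r) \in \mathcal{C}$ with the positive rational $z$ to obtain $S(r) + (0,z] \in \mathcal{C}$; then apply rule I with $\tau(q) \in \mathcal{L}$ to get $(S(r) + (0,z]) \cap \tau(q) \in \mathcal{C}$.
  \item For a transition $(r,z,q) \in T$ with $z < 0$, apply rule M with the positive rational $-z$ to obtain $S(r) + [z,0) \in \mathcal{C}$, and then intersect with $\tau(q)$ via rule I.
  \item For a transition $(r,0,q) \in T$, apply rule I directly to $S(r)$ and $\tau(q)$ to obtain $S(r) \cap \tau(q) \in \mathcal{C}$.
\end{itemize}
Finally, since $T$ is finite and $\mathcal{C}$ is closed under binary unions by rule U, the full expression $\TheBetterPost_q(S)$ is a finite union of elements of $\mathcal{C}$, hence itself in $\mathcal{C}$. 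Doing this for each $q \in Q$ gives $\mathcal{C}$-validity of $\TheBetterPost(S)$.

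The argument is essentially a bookkeeping exercise; the one subtlety to flag is that rule M requires $z \in \Q_{>0}$, which is why the case $z = 0$ is separated out and handled purely by the intersection rule, and why for $z < 0$ one uses the positive rational $|z|$ (so that $S(r) + [z,0) = S(r) + [-|z|,0)$ matches the template of M). No part of the proof requires \autoref{lem:phi} or \autoref{lemma:boundedunion}: those bounds on the number of intervals come into play only in the later induction on run length, whereas the present proposition is a purely closure-theoretic statement about $\mathcal{C}$ and $\TheBetterPost$.
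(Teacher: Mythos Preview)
Your proof is correct and follows exactly the same approach as the paper: the paper's proof is a two-sentence sketch noting that $S \preceq \TheBetterPost(S)$ holds ``directly from the definition'' and that $\mathcal{C}$-validity follows because $\TheBetterPost_q(S)$ is built from Minkowski sums, intersections, and unions. Your version simply spells out the case analysis (and the $z=0$ subtlety) that the paper leaves implicit.
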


\begin{proof}
We have $S \preceq \TheBetterPost(S)$ directly from the definition of
$\TheBetterPost$. To prove that $\TheBetterPost(S)(q) \in \mathcal{C}$, it
suffices to observe that $\TheBetterPost_q(S)$ is defined using Minkowski
sums, intersections, and unions which are building blocks of MIUN-closures.
\end{proof}

\subsection{Accelerations}

Unfortunately, applying $\TheBetterPost$ might not give us $\reach_{p(a)}$ in a
small or even finite number of steps, \eg\ if $\reach_{p(a)}(q)$ is
unbounded for some $q \in Q$. We introduce another operation on mappings
to resolve this. We start by defining some special form of cycles.

Let us fix a mapping $S_0 \in \mathcal{R}_Q$ and let $S_{i+1} \defeq
\TheBetterPost(S_i)$ for every $i \geq 0$.
We say that a run $\rho = \alpha_1t_1 \cdots \alpha_n t_n$ is a \emph{positively expanding
cycle} from $S_0$ if it is admissible and there exist configurations $p_0(a_0), p_1(a_1), \dots, p_n(a_n)$ such that:
\begin{enumerate}[label=(\arabic*)]
 \item $p_0 = p_n$ and $\Effect{\rho} > 0$;
  \item $a_0 \in S_0(p_0)$
    and $p_0(a_0) \steps{\rho_i} p_i(a_i)$ for all $i \ge 1$; and

 \item $a_i \in S_i(p_i) \setminus S_{i-1}(p_i)$ for all $i \ge 1$.
\end{enumerate}
Moreover, letting $I_0,\dots,I_n$ be the sequence of intervals such that $a_i \in
I_i \in \intervals(S_i(p_i))$ for all $i \in \set{0,\dots,n}$, we require:
\begin{enumerate}[label=(\arabic*)]
 \setcounter{enumi}{3}
 \item $I_0 \subseteq I_n$;

 \item for all $i \ge 1$, there is a unique interval $I_i' \in
   \intervals(S_{i-1}(p_{i}))$ such that $I_{i}' \subseteq I_{i}$; and

 \item $a_i \ge \sup (I_{i}')$ for every $i \ge 1$.
\end{enumerate}
Intuitively, the third condition states that each $a_i$ is a ``new'' value, and the
fifth and sixth conditions state that $a_i$ expands some interval towards the
top.

Similarly, we say that $\rho$ is a \emph{negatively expanding cycle}
from $S_0$ if in the first item we replace $\Effect{\rho} > 0$ with
$\Effect{\rho} < 0$; and in the last item we replace $a_i \ge \sup (I_{i}')$
with $a_i \le \inf (I_{i}')$.

The following property follows from the definitions:

\begin{lemma}\label{lem:same-int}
  It holds that $a_0,a_n \in I_n \subseteq \reach_{p_0(a_0)}(p_0)$.
\end{lemma}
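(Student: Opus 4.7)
The plan is to dispatch the two claims in the statement separately, both following readily from the preceding definitions.

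For $a_0, a_n \in I_n$: I would note that $a_n \in I_n$ is immediate, since $I_n$ is chosen as the (unique) interval in $\intervals(S_n(p_n))$ containing $a_n$. For $a_0 \in I_n$, I would combine the analogous fact $a_0 \in I_0$ (from how $I_0$ is named) with condition~(4) of the positively (resp.\ negatively) expanding cycle definition, which asserts $I_0 \subseteq I_n$.

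For $I_n \subseteq \reach_{p_0(a_0)}(p_0)$: I would apply \autoref{lem:post-succs} to $S_n(p_n)$, obtaining that every value therein is reachable in at most $n$ steps from some configuration $p(a)$ with $a \in S_0(p)$. Since the framework of this subsection fixes the configuration $p_0(a_0)$ and builds everything from $A = [a_0, a_0]$, the mapping $S_0$ under-approximates $\reach_{p_0(a_0)}$ (in particular, $a_0 \in S_0(p_0)$). Composing the runs from \autoref{lem:post-succs} with this under-approximation via transitivity of reachability yields $S_n(p_n) \subseteq \reach_{p_0(a_0)}(p_0)$, and hence $I_n \subseteq \reach_{p_0(a_0)}(p_0)$ since $I_n \subseteq S_n(p_n)$ by definition.

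The main obstacle, such as it is, lies in the second inclusion: its correctness hinges not on any property specific to expanding cycles, but on the convention that $S_0$ under-approximates reachability from $p_0(a_0)$, inherited from the surrounding acceleration framework. I would therefore first confirm that this convention is in force before formalising the argument, since otherwise $I_n$ could a priori contain ``spurious'' values of $S_n(p_n)$ inherited from other starting configurations in $S_0$.
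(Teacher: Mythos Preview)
Your argument for $a_0, a_n \in I_n$ is fine. The gap lies in the second inclusion. You assert that ``the framework of this subsection fixes the configuration $p_0(a_0)$ and builds everything from $A = [a_0,a_0]$'', but this conflates two distinct objects: the configuration $p(a)$ fixed earlier in \autoref{sec:ptime} (from which $A = [a,a]$ and the MIUN-closure $\mathcal{C}$ are built) and $p_0(a_0)$, which is the \emph{initial configuration of the expanding cycle}, determined only after the mapping $S_0$ has been fixed. Nothing in the definitions forces $S_0 \preceq \reach_{p_0(a_0)}$, and in fact this can fail: take a single state $p_0$ with a self-loop of effect $+1$ and $\tau(p_0) = \Q$, set $S_0(p_0) = [0,10]$, and consider the one-step cycle from $p_0(10)$ to $p_0(11)$; it is positively expanding with $I_1 = [0,11]$, yet $\reach_{p_0(10)}(p_0) = [10,+\infty)$, so $I_1 \not\subseteq \reach_{p_0(10)}(p_0)$.

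You were right to flag the hypothesis as needing confirmation---it is not available in the stated generality. Both invocations of the lemma in the proof of \autoref{lemma:acceleration} actually use only that values in $[a_0,a_n]$ lie in $\reach_{p(a)}(p_0)$, under that lemma's standing hypothesis $S \preceq \reach_{p(a)}$. For that conclusion your route via \autoref{lem:post-succs} works once it is combined with $S_0 \preceq \reach_{p(a)}$ (not $\reach_{p_0(a_0)}$); alternatively, $[a_0,a_n] \subseteq \reach_{p_0(a_0)}(p_0)$ follows directly by scaling the admissible cycle $\rho$ via \autoref{claim:scaling-down-gives-run}, after which one composes with $a_0 \in S(p_0) \subseteq \reach_{p(a)}(p_0)$.
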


It transpires that the $\TheBetterPost$ function always yields expanding
cycles after a polynomial number of applications. The proof of this claim
relies on our bounds for interval decompositions of sets from the
MIUN-closure. The computational part is a simple backward construction of a
run containing a cycle. A full proof is given in \autoref{sec:proof-cycle} of the appendix.

\begin{restatable}{proposition}{findCycle}\label{proposition:cycle}
For some $n$ polynomially
bounded in $|Q|$, at least one of the following holds: $S_n = S_{n+1}$,
\begin{itemize}
 \item there is a positively expanding cycle $\rho$ from $S_n$, or
 \item there is a negatively expanding cycle $\rho$ from $S_n$.
\end{itemize}
Moreover, it can be determined in time $|Q|^{\Oh(1)}$ whether the second
or third case hold, and then $\rho$ and its witnessing configurations can be
computed in time $|Q|^{\Oh(1)}$.
\end{restatable}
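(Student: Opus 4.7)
The plan is to iterate $\TheBetterPost$ for a polynomial number $N$ of steps; if no fixed point is reached, to backtrack through a chain of freshly added values; and then to use a pigeonhole argument on the interval structure guaranteed by \autoref{lemma:boundedunion} to extract an expanding cycle. I fix $N$ to be a polynomial in $|Q|$ determined by the pigeonhole bound below, and compute $S_0, S_1, \ldots, S_N$. By \autoref{proposition:succ} each $S_i$ is $\mathcal{C}$-valid, hence by \autoref{lemma:boundedunion} has $\Oh(|Q|^2)$ intervals total. If $S_n = S_{n+1}$ for some $n < N$ then the first case of the proposition holds. Otherwise pick any $a_N \in S_N(p_N) \setminus S_{N-1}(p_N)$. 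Inspecting the four cases in the definition of $\TheBetterPost_{p_N}(S_{N-1})$ yields a transition $(p_{N-1}, z_N, p_N) \in T$, a scalar $\alpha_N \in (0,1]$, and a value $a_{N-1} \in S_{N-1}(p_{N-1})$ with $a_N = a_{N-1} + \alpha_N z_N \in \tau(p_N)$. Moreover $a_{N-1} \notin S_{N-2}(p_{N-1})$: otherwise the same computation applied at step $N-1$ would place $a_N \in \TheBetterPost_{p_N}(S_{N-2}) = S_{N-1}(p_N)$, contradicting freshness of $a_N$. Iterating down produces an admissible run $\rho = \alpha_1 t_1 \cdots \alpha_N t_N$ through configurations $p_0(a_0), p_1(a_1), \ldots, p_N(a_N)$ with $a_0 \in S_0(p_0)$ and $a_k \in S_k(p_k) \setminus S_{k-1}(p_k)$ for each $k \ge 1$, already giving conditions~(2) and~(3) of the expanding-cycle definition at every position of the chain.

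For the pigeonhole step, for each $k$ let $I_k \in \intervals(S_k(p_k))$ be the interval containing $a_k$; when a unique pre-image $I_k' \in \intervals(S_{k-1}(p_k))$ with $I_k' \subseteq I_k$ exists, call step $k$ \emph{regular}, and otherwise (mergers, or de novo interval creations) \emph{degenerate}. Monotonicity of $S$ combined with the global $\Oh(|Q|^2)$ interval-count bound of \autoref{lemma:boundedunion} caps the total number of degenerate steps along any chain by a polynomial in $|Q|$. Attach to every regular $k$ the label $\ell_k$ encoding (i)~the state $p_k$; (ii)~the bounding endpoints of $I_k$ and of $I_k'$, drawn from $P_{\mathcal{L}} \cup P_A$ by \autoref{lem:phi}; and (iii)~whether $a_k$ lies above $\sup I_k'$ or below $\inf I_k'$. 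The label space has polynomial size. Choose $N$ large enough that, after discarding degenerate steps, some label repeats at indices $i < j$ with the same extremum marker throughout $k \in \set{i+1,\ldots,j}$. The subrun $\rho' \defeq \rho[i+1..j]$ is then the sought cycle: condition~(1) holds via $p_i = p_j$ together with a further binary split on the sign of $\Effect{\rho'} = a_j - a_i$ separating the positive and negative expansion cases; condition~(4) follows from $S_i \preceq S_j$ plus matching endpoint labels, which force $I_i \subseteq I_j$; condition~(5) is by construction; and condition~(6) is exactly the recorded extremum marker.

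On the algorithmic side, each $S_i$ is representable as $\Oh(|Q|^2)$ intervals with polynomial-bit encodings, each $\TheBetterPost$ step is polynomial-time computable, and the backward walk plus pigeonhole scan over index pairs runs in $|Q|^{\Oh(1)}$, so the whole procedure meets the claimed time bound; determining which of the positive or negative case occurs is the binary split mentioned above. The principal obstacle is the combinatorial bookkeeping of $\ell_k$: it must carry enough information to force all six expanding-cycle conditions whenever two positions share a label, yet remain over a polynomially-sized alphabet so that the pigeonhole length $N$ stays polynomial in $|Q|$. Handling the degenerate steps, where no unique pre-image interval exists, is resolved by the bounded-count argument above and costs only an additional polynomial factor in $N$.
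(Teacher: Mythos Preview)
Your high-level plan matches the paper's: iterate $\TheBetterPost$, backtrack a chain of fresh values, and extract a cycle by pigeonhole on state-plus-interval data. The backtracking construction and the verification of conditions~(2)--(3) are correct. However, there is a real gap at the core of the pigeonhole step.

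You require that the extracted window $[i{+}1,j]$ carry the \emph{same} extremum marker (``above $\sup I_k'$'' or ``below $\inf I_k'$'') at every position---this is precisely what you invoke for condition~(6). But nothing in your argument rules out the marker alternating along the chain, and if it can alternate freely there is no polynomial $N$ guaranteeing a long monochromatic window, so your pigeonhole does not close. The paper resolves exactly this point, and it is the main technical content of its proof: after first passing to a polynomially-reachable regime where no extension $S_i(q) \subseteq S_{i+1}(q)$ is progressing (via \autoref{lemma:progress}) \emph{and} where the sets $\act(q,i,j) = \{t \in T \mid I^{q,i}_j \cap \enab{t} \neq \emptyset\}$ have stabilised, it proves by a backward induction that once $v_k \ge \sup(I^{q_k,k-1}_{j_k})$ holds at the top of the chain, then $v_i \ge \sup(I^{q_i,i-1}_{j_i})$ holds for \emph{every} smaller $i$. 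The inductive step uses $\act$-stability in an essential way: if $v_i$ were instead ``below'', the transition $t_{i+1}$ would already be enabled from the smaller pre-image interval $I^{q_i,i-1}_{j_i}$, producing a value $\overline{w}$ in the same target interval as $v_{i+1}$ but with $\overline{w} > v_{i+1}$ and $\overline{w} \in S_i(q_{i+1})$, contradicting $v_{i+1} \ge \sup(I^{q_{i+1},i}_{j_{i+1}})$. Your label $\ell_k$ carries no information that forces this monotonicity, and you have not supplied an alternative argument for it.

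There is also a secondary imprecision in your verification of condition~(4). Recording ``the bounding endpoints of $I_k$ and $I_k'$ drawn from $P_{\mathcal{L}} \cup P_A$'' does not pin down the interval: two distinct maximal intervals of $S_j(p_j)$ may share the same indicator set (for instance $[0,1)$ and $(1,2]$ both have indicator $\{1\}$), so a label match need not give $I_i \subseteq I_j$. The paper avoids this by working in the non-progressing regime, where \autoref{lemma:bijection} supplies canonical bijections $\intervals(S_i(q)) \to \intervals(S_{i+1}(q))$; pigeonholing on the pair (state, interval index) then makes the containment $I^{q,i}_{j} \subseteq I^{q,i'}_{j}$ automatic.
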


We are ready to define the \emph{acceleration} operation. 
Let $\rho$ be a positively or negatively expanding cycle from $S \in
\mathcal{R}_Q$ and let $p_0(a_0), p_1(a_1), \dots, p_n(a_n)$ be the
configurations witnessing the run. 
Let $I_0,\dots, I_n$ be the intervals given by
the definition of expanding cycles. If $\rho$ is
positively expanding, then we define $\delta^+_i \defeq \sup \guard{p_i} - a_i$ for all
$i \in \set{1,\dots, n}$. If $\rho$ is a negatively expanding
cycle, then we define $\delta^-_i \defeq a_i - \inf \guard{p_i}$. Let $j \in
\set{1,\dots,n}$ be such that:
\[
\delta^+_j = \min\set{\delta^+_i \mid 1 \leq i \leq n} \text{ or } \delta^-_j = \min\set{\delta^-_i \mid 1 \leq i \leq n}.
\]
We define $\Acceleration$ so that, given $\rho$ and the mapping $S$, it outputs a
new mapping $\Acceleration(S,\rho) = S'$. If $\rho$ is positively expanding
from $S$ then $S'(q) \defeq
S(q)$ for all $q \neq p_j$ and:
\[
  S'(p_j) \defeq S(p_j) \cup \underbrace{I_j \cup \left( \guard{p_j} \cap
  [a_j,+\infty)\right)}_{{} = K \subseteq \mathbb{Q}}.
\]
Recall that $a_j \in I_j$, and $a_j \in \guard{p_j}$ since $a_0 \rightarrow_{\rho[1..j]} a_j$, so $K$ is an interval.
Also, since $\rho$ is positively expanding and $j \geq 1$, we have $a_j \not\in
S(p_j)$ and $S'(p_j) \setminus S(p_j) \neq \emptyset$.
Similarly, if $\rho$ is negatively expanding then $S'(q) \defeq S(q)$ for all $q \neq p_j$ and:
\[
  S'(p_j) \defeq S(p_j) \cup I_j \cup \left((-\infty,a_j] \cap \guard{p_j} \right).
\]


%
%
%
%
%

\begin{lemma}\label{lemma:acceleration}
  Let $S \in \mathcal{R}_Q$ be a $\mathcal{C}$-valid mapping such that $S
  \preceq \reach_{p(a)}$ and let $\rho$ be a positively or negatively expanding cycle from $S$.
  If $S' = \Acceleration(S,\rho)$, then $S \preceq S'$, $S'$ is a $\mathcal{C}$-valid mapping, and $S' \preceq \reach_{p(a)}$.
\end{lemma}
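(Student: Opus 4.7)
The plan is to prove the three claims of the lemma in sequence, focusing on the positively expanding case (the negative case is symmetric). Monotonicity $S \preceq S'$ is immediate from the definition of $\Acceleration$, which only enlarges the component at $p_j$. The other two claims concern only that single coordinate, so most of the work reduces to analysing $S'(p_j)$.

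For $\mathcal{C}$-validity, only $S'(p_j) \in \mathcal{C}$ is nontrivial. Since $I_j \subseteq S(p_j)$, we have $S'(p_j) = S(p_j) \cup (\guard{p_j} \cap [a_j,+\infty))$, so it suffices to produce this set via MIUN operations starting from $S(p_j) \in \mathcal{C}$. The added interval has $\sup \guard{p_j}$ as an endpoint of its closure, which lies in $P_{\mathcal{L}}$ whenever it is finite, so the N operation of the MIUN-closure applies. When $\sup \guard{p_j} = +\infty$, the same interval can be obtained by combining the I operation on $\guard{p_j} \in \mathcal{L}$ with a prior Minkowski shift of $I_j$.

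For the semantic inclusion $S' \preceq \reach_{p(a)}$, the entries at $q \neq p_j$ and values in $I_j$ are handled by the hypothesis $S \preceq \reach_{p(a)}$. So fix $v \in \guard{p_j} \cap (a_j, \sup \guard{p_j}]$ and put $\Delta \defeq v - a_j \in (0, \delta^+_j]$. By hypothesis, there is an admissible run from $p(a)$ to $p_0(a_0)$, so it suffices to exhibit an admissible run $\sigma$ from $p_0(a_0)$ to $p_j(v)$. My strategy iterates $\rho$: after $k$ full iterations the configuration is $p_0(a_0 + k\Effect{\rho})$, and admissibility follows because at position $i$ of the $l$-th iteration the value $a_i + (l-1)\Effect{\rho}$ lies in $\guard{p_i}$, using $(l-1)\Effect{\rho} \leq \lfloor \delta^+_j/\Effect{\rho}\rfloor \cdot \Effect{\rho} \leq \delta^+_i$, where the last inequality is the minimality of $\delta^+_j$. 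Appending $\rho_j$ then yields $p_j(a_j + k\Effect{\rho})$, which handles the case $\Delta \in \Effect{\rho} \cdot \mathbb{N}$.

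The main obstacle is reaching values $v$ where $\Delta$ is not an integer multiple of $\Effect{\rho}$: one must insert a scaled residual iteration producing the remainder $r \defeq \Delta - \lfloor \Delta/\Effect{\rho}\rfloor \cdot \Effect{\rho} \in [0,\Effect{\rho})$. In pure COCA this would follow from \autoref{claim:scaling-down-gives-run}, but that proposition does not transfer to guarded COCA, where state-local guards can be tight enough to be violated by an arbitrary scaling of the cycle. The resolution exploits the structure of the positively expanding cycle: by the minimality of $\delta^+_j$ and the admissibility of $\rho$ at each $a_i$, every $p_i$ has slack $\delta^+_i \geq \delta^+_j$ above $a_i$, and a suitably mild scaling of the residual iteration keeps every intermediate value within $\guard{p_i}$ throughout. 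Composing the prefix from $p(a)$ with the integer iterations, the scaled residual, and the final $\rho_j$ produces the desired admissible run to $p_j(v)$.
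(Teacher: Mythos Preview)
Two issues. First, you assert $I_j \subseteq S(p_j)$, but this is false: by definition $I_j \in \intervals(\TheBetterPost^j(S)(p_j))$ with $j\geq 1$, and condition~(3) of expanding cycles gives $a_j \in I_j$ while $a_j \notin S_{j-1}(p_j) \supseteq S(p_j)$. So your simplification of $S'(p_j)$ is invalid, and the claim that ``values in $I_j$ are handled by the hypothesis $S\preceq\reach_{p(a)}$'' also fails; one needs \autoref{lem:same-int} (or equivalently $\TheBetterPost^j(S)\preceq\reach_{p(a)}$) there. For $\mathcal{C}$-validity, the paper applies the New operation directly to the full interval $K = I_j \cup (\guard{p_j}\cap[a_j,+\infty))$, whose closure contains $\sup\guard{p_j}\in P_{\mathcal{L}}$.

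The second issue is the real gap. Your ``scaled residual iteration'' is not justified and does not work as stated. The slack you invoke, $\delta^+_i \geq \delta^+_j$, only bounds how far \emph{above} $a_i$ one may go; it says nothing about lower guards. Uniformly scaling $\rho$ by $\beta<1$ from $p_0(a_0+k\Effect{\rho})$ puts the counter at $a_0 + k\Effect{\rho} + \beta(a_i - a_0)$ in state $p_i$; when $a_i > a_0$ and $\inf\guard{p_i}$ sits close to $a_i$ (nothing in the definition of expanding cycles excludes this), this value falls below $\inf\guard{p_i}$. There is no ``mild'' uniform scaling that repairs this, since the total effect of the residual copy is pinned to the remainder $r$.

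The paper avoids scaling altogether in the finite-$\delta^+_j$ case. Writing $\Delta = m\Effect{\rho}+c$ with $0\le c<\Effect{\rho}$, it observes $a_0 + c \in [a_0,a_n] \subseteq I_n$, so by \autoref{lem:same-int} the shifted start $p_0(a_0+c)$ already lies in $\reach_{p(a)}(p_0)$. From there, $m+1$ \emph{unscaled} copies of $\rho$ are admissible --- the lower guards hold because each intermediate value $a_i + c + l\Effect{\rho} \geq a_i$, and the upper guards hold by exactly the inequality $\delta^+_i\geq\delta^+_j$ you identified --- and this run passes through $p_j(v)$.
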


\begin{proof}
We have $S \preceq S'$ directly from the definition of $\Acceleration$. Similarly, $S'$ is $\mathcal{C}$-valid because the operation to define $S'(p_j)$ is the ``New'' operation since the closure of the added interval always contains one of the endpoints from $\guard{p_j}$. It remains to prove that $S' \preceq \reach_{p(a)}$.

We assume that $\Effect{\rho} > 0$; the proof is similar for the other case.
Let $\rho = \alpha_1t_1 \cdots \alpha_nt_n$. Let $p_0(a_0), p_1(a_1), \dots,
p_n(a_n)$ and $I_0,\dots, I_n$ be the configurations and intervals given by the
definition of positively expanding cycles. Let $j$ be the index minimising
$\delta^+_j$. We must prove that $S'(p_j) \preceq \reach_{p(a)}$. Since $S
\preceq \reach_{p(a)}$ and $a_0 \in S(p_0)$ by definition of positively
expanding cycles, it suffices to show that for every $b \in S'(p_j) \setminus
S(p_j)$ there is an admissible run from $p_0(a_0)$ to $p_j(b)$.

If $\delta^+_j = +\infty$, then $\sup \guard{p_i} = +\infty$ for all $i \in
\set{1, \dots, n}$. Since $\Effect{\rho} > 0$, 
for all $\alpha, \beta
\in (0,1]$ and $m \in \N$ the run $\rho' \defeq (\beta \rho)^{m}
\alpha \FromTo{\rho}{1}{j}$ is admissible from any $p_n(a')$ with $a' \geq
a_n$, to state $p_j$. Note that $\Effect{\rho'}
= m\beta\Effect{\rho} + \alpha\Effect{\rho[1..j]}$, which can be any positive
rational number by properly choosing $\alpha$, $\beta$ and $m$. Thus, $b \in
\reach_{p(a)}(p_j)$ for every $b > a_n$.

It remains to consider the case 
$a_j \leq a_n$ to prove the claim for every $b \in
[a_j,a_n]$. Let $\epsilon \in (0, a_n - a_0]$.
Since $a_0 < a_0 +
\epsilon \leq a_n$, we have $a_0 + \epsilon \in I_n \subseteq \reach_{p(a)}(p_0)$ where the latter follows from
\autoref{lem:same-int}.
Recall that $\rho'$ is admissible from all $p_n(a')$ with $a'
\geq a_n$, and hence from $a_n + \epsilon$. Thus,
$p(a) \rightarrow_* p_0(a_0 + \epsilon) \rightarrow_{\rho} p_n(a_n +
\epsilon)$, and $p_0(a_0 + \epsilon) \rightarrow_{\rho[1..j]} p_j(a_j +
\epsilon)$. This shows that $b \in \reach_{p(a)}(p_j)$
for all $b \in [a_j, a_n]$.


Now, suppose $\delta^+_j < +\infty$. If $a_j = \sup \guard{p_j}$, then
we are done because $a_j \in I_j \subseteq
\reach_{p(a)}(p_j)$. Otherwise, let $b \in [a_j,+\infty) \cap \tau(p_j)$. We
need to prove that $b \in \reach_{p(a)}(p_j)$. Note that, by definition, we have $0
\le b - a_j \le \delta^+_j$.

%
Let $m \in \N$ and $c \in \Q_{\ge 0}$ be the unique numbers that satisfy $b - a_j
= m \cdot \Effect{\rho} + c$ and $c < \Effect{\rho}$. Since $a_0 \le a_0 + c
\le a_0 + \Effect{\rho} = a_n$, then by \autoref{lem:same-int}
we conclude that
$a_0 + c \in \reach_{p(a)}(p_0)$. Notice that $a_j + c + m\cdot\Effect{\rho} =
b$. It thus remains to prove that $p_0(a_0 +c)
\steps{\rho^{m}\rho[1..j]}p_j(b)$.
We prove something stronger, namely that $p_0(a_0 +c)
\steps{\rho^{m+1}}p_n(a_n + b - a_j)$.
Since
$\Effect{\rho} > 0$, for the bottom guards it suffices to check whether the
configurations are large enough when $\rho$ is applied the first time. Indeed,
since $\rho$ is admissible from $p_0(a_0)$, we get $a_i + c + \Effect{\rho_i}
\ge a_i + \Effect{\rho_i} \ge \inf \guard{q_i}$. Similarly, for the top
guards, it suffices to check whether the configurations are small enough when $\rho$ is
applied last.

Indeed, since $b - a_j \leq \delta^+_j$, we have $a_i + c +
m\cdot \Effect{\rho} = a_i + b - a_j \leq a_i + \delta^+_j \le a_i + \delta^+_i =
\sup \guard{p_i}$. If $\sup \guard{p_j} \not\in \guard{p_j}$, then $b - a_j <
\delta^+_j$ and the previous inequalities are strict.
\end{proof}

\subsection{Polynomial time algorithm}

We summarise how to obtain the polynomial time algorithm
for deciding $p(a) \steps{*} q(b)$. We begin with the mapping
$R_0 \in \mathcal{R}_Q$ defined as $R_0(p) \defeq [a,a]$ and $R_0(r) \defeq
\emptyset$ for every $r \neq p$. Clearly, $R_0 \preceq \reach_{p(a)}$. The next
mappings $R_{1}, R_2, \dots $ are defined as follows. Suppose we have defined
$R_0,\dots, R_i$. Let $S_0^i \defeq R_i$ and $S_{j+1}^i \defeq \TheBetterPost(S_j^i)$
for all $j \ge 0$.

By \autoref{proposition:cycle}, we will either find
an expanding cycle $\rho$ from some $S_n^j$, where
$n$ is bounded polynomially, or we will find some $S_n^j = S_{n+1}^j$, again
for $n$ bounded polynomially.
If there is an expanding cycle --- a fact which, by \autoref{proposition:cycle},
we can check in polynomial time --- then we define $R_{i +
j} \defeq S_j^i$ for $1 \le j < n$ and $R_{i + n} \defeq \Acceleration(R_{i +
n-1},\rho)$.
Otherwise, we define $R_{i+j} \defeq S_j^i$ for $j \in \{1,\dots,n\}$ and
the algorithm returns $R_{i+n}$. By \autoref{lem:post-succs} and
\autoref{lemma:acceleration}, we have $R_{i} \preceq \reach_{p(a)}$ for all defined
$R_i$. Hence, if the algorithm terminates then by \autoref{lem:post-succs} it
returns $\reach_{p(a)}$.


The rest of this section is devoted to proving that the above-described
algorithm has a polynomial worst-case running time. It suffices to argue that
expanding cycles can only be found some polynomial number of times.

\begin{proposition}\label{prop:algo:ptime}
  The algorithm computes a representation of $\reach_{p(a)}$ in time
  $|Q|^{\Oh(1)}$.
\end{proposition}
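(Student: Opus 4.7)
The plan is to prove the proposition by bounding the algorithm's total runtime. Correctness is already established by the preceding lemmas: \autoref{lem:post-succs} and \autoref{lemma:acceleration} together maintain the invariant $R_i \preceq \reach_{p(a)}$ throughout the algorithm, and the termination condition $S_n^i = S_{n+1}^i$ combined with \autoref{lem:post-succs} implies $R_n = \reach_{p(a)}$ when the algorithm halts.

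For the runtime, first observe that every $R_i$ produced by the algorithm is $\mathcal{C}$-valid, by \autoref{proposition:succ} for $\TheBetterPost$-updates and by \autoref{lemma:acceleration} for $\Acceleration$-updates. Hence by \autoref{lemma:boundedunion}, each $R_i$ admits a representation of size $O(|Q|^2)$. On such a representation, Minkowski sums, intersections with $\tau(q)$, unions, and the New operation used by $\Acceleration$ all take $|Q|^{\bigO(1)}$ time, so each application of $\TheBetterPost$ or $\Acceleration$ runs in polynomial time. Moreover, \autoref{proposition:cycle} bounds the number of $\TheBetterPost$-applications between two consecutive accelerations by $|Q|^{\bigO(1)}$ and guarantees that identifying an expanding cycle (or detecting $S_n = S_{n+1}$) is polynomial-time as well. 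It therefore suffices to bound the number of accelerations polynomially.

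The main obstacle is this last bound, which I would attack via a potential-function argument. A natural candidate is the \emph{saturation count} $\Phi(R)$ that counts the pairs $(q, \epsilon) \in Q \times \{+, -\}$ such that $R(q)$ has an interval reaching $\sup \tau(q)$ (for $\epsilon = +$) or $\inf \tau(q)$ (for $\epsilon = -$). By monotonicity of the $R_i$, $\Phi$ is non-decreasing, and $\Phi \leq 2|Q|$. Each positive (resp.\ negative) acceleration at $p_j$ produces an interval of $R'(p_j)$ that reaches $\sup \tau(p_j)$ (resp.\ $\inf \tau(p_j)$), so $\Phi$ strictly increases whenever the targeted slot is not yet saturated, giving at most $2|Q|$ such accelerations.

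The delicate case is when $p_j$ is already saturated in the relevant direction, so the added interval $K$ merely merges with the prior saturated interval without changing $\Phi$. I would handle this via a secondary amortized argument: since the definition of a positively expanding cycle forces $a_j \notin R(p_j)$, the value $a_j$ must lie strictly below the infimum of the existing top-saturated interval $I^{top}$, so $K$ strictly extends $I^{top}$ downward and absorbs at least one pre-existing interval boundary of $R(p_j)$. Using the $O(|Q|)$ bound on $|\intervals(R_i(p_j))|$ from \autoref{lemma:boundedunion} as an amortized budget, and charging the budget afresh whenever $\Phi$ increases (since new intervals can then be produced by subsequent $\TheBetterPost$-phases), one bounds the merge-only accelerations between two $\Phi$-increases by $|Q|^{\bigO(1)}$. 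Combining both cases gives a polynomial bound on the total number of accelerations, and hence the claimed $|Q|^{\bigO(1)}$ runtime.
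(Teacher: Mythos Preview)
Your overall structure is right: correctness follows from the stated invariants, and the runtime reduces to bounding the number of accelerations. The easy case, where your saturation count $\Phi$ strictly increases, is also correct and accounts for at most $2|Q|$ accelerations.

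The gap is in the delicate case. Your amortization via ``absorbing a pre-existing interval boundary'' does not actually consume any bounded resource. Take $\tau(p_j)=[0,10]$ and $R(p_j)=[9,10]$, so $I^{\mathrm{top}}=[9,10]$ is already top-saturated. A positive acceleration with $a_j=8$ yields $R'(p_j)\supseteq[8,10]$: the old lower endpoint $9$ becomes interior, but a new lower endpoint (at most $8$) replaces it. Neither $|\intervals(R(p_j))|$ nor the number of interval endpoints decreases. Nothing in your argument rules out repeating this with $a_j=7,6,\ldots$, since the intervening $\TheBetterPost$-phases need not create any new intervals of $R(p_j)$ to ``spend''; they only need to produce an expanding cycle, which can land in the same single interval each time. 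Your $O(|Q|)$ budget on $|\intervals(R(p_j))|$ is therefore never debited by merge-only accelerations.

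The paper closes this gap with a finer potential. It calls an inclusion $B\subseteq B'$ a \emph{progressing extension} if a new interval appears, or some interval's set $\phi_{B'}(I')=\overline{I'}\cap(P_{\mathcal L}\cup P_A)$ strictly grows, or a point of $P_{\mathcal L}\cup P_A$ that was only in a closure becomes an actual element. The crucial structural fact is \autoref{lem:phi}: every interval arising in the MIUN-closure has nonempty $\phi$-image. In particular the interval $I_j'\in\intervals(S_{j-1}(p_j))$ from the expanding-cycle definition is anchored to some $\ell\in P_{\mathcal L}\cup P_A$, and since $\sup I_j'\le a_j$ while $K=I^{\mathrm{top}}$ lies above $a_j$, this $\ell$ is either new to $\phi(K\cup I_j\cup J)$ relative to $\phi(K)$, or it was in $\overline K\setminus K$ and now lies in the enlarged interval. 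Either way the extension is progressing, and \autoref{lemma:progress} bounds the total number of progressing extensions along any monotone chain by $|P_{\mathcal L}\cup P_A|^{O(1)}=|Q|^{O(1)}$. Your $\Phi$ is a coarsening of this potential that tracks only the two endpoints $\inf\tau(p_j),\sup\tau(p_j)$; to make the merge-only case go through you need the full set $P_{\mathcal L}\cup P_A$ of guard and initial endpoints.
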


By \autoref{proposition:cycle}, it suffices to show that $\Acceleration$ can be~applied at most polynomially many times. To do so, we introduce the notion of
progressing extensions.
Let $\mathcal{C}$ be the MIUN-closure of $A$ w.r.t.\
$\mathcal{L}$, and let $B,B' \in \mathcal{C}$ be such that $B \subseteq
B'$. We say that $B'$ is a \emph{progressing extension} of $B$ if:
\begin{enumerate}[label=(\arabic*)]
 \item there is $I' \in \intervals(B')$ such that $B \cap I' =
   \emptyset$;\label{itm:progr:a}
\end{enumerate}
or if there are $I \in \intervals(B)$ and $I' \in \intervals(B')$ such that $I \subseteq I'$ and at least one of the following holds:
\begin{enumerate}[label=(\arabic*)]
  \setcounter{enumi}{1}
  \item either $\phi_{B'}(I') \setminus \phi_{B} (I) \neq \emptyset$, or\label{itm:progr:b}

  \item there exists $\ell \in \phi_B(I)$ such that $\ell \not \in I$ and
   $\ell \in I'$.\label{itm:progr:c}
\end{enumerate}
See \autoref{fig:prog-extensions} for a pictorial description of progressing
extensions. Observe that 
in case~\ref{itm:progr:c} we necessarily have that $\ell \in \overline{I}$.

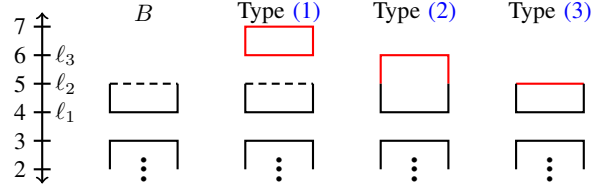
\begin{figure}
  \begin{center}
    \begin{tikzpicture}[thick, y=12pt, transform shape, scale=0.9]
  \draw[<->] (0, 2 - 0.5) -- (0, 7 + 0.5);

  \foreach \y in {2,...,7} {
    \draw (-0.125, \y) -- (0.125, \y);
    \node at (-0.325, \y) {\y};
  }

  \node at (0.35, 6) {$\ell_3$};
  \node at (0.35, 5) {$\ell_2$};
  \node at (0.35, 4) {$\ell_1$};

  \node at (1.5, 7.5) {$B$};

  \draw (1, 5) -- (1, 4) -- (2, 4) -- (2, 5);
  \draw[densely dashed] (1, 5) -- (2, 5);

  \draw (1, 2) -- (1, 3) -- (2, 3) -- (2, 2);
  \node[font=\huge] at (1.5, 2.25) {$\vdots$};

  \node at (3.5, 7.5) {Type~\ref{itm:progr:a}};
  
  \draw[red] (3, 6) -- (3, 7) -- (4, 7) -- (4, 6) -- cycle;

  \draw (3, 5) -- (3, 4) -- (4, 4) -- (4, 5);
  \draw[densely dashed] (3, 5) -- (4, 5);

  \draw (3, 2) -- (3, 3) -- (4, 3) -- (4, 2);
  \node[font=\huge] at (3.5, 2.25) {$\vdots$};

  \node at (5.5, 7.5) {Type~\ref{itm:progr:b}};
  
  \draw[red] (5, 5) -- (5, 6) -- (6, 6) -- (6, 5);

  \draw (5, 5) -- (5, 4) -- (6, 4) -- (6, 5);

  \draw (5, 2) -- (5, 3) -- (6, 3) -- (6, 2);
  \node[font=\huge] at (5.5, 2.25) {$\vdots$};

  \node at (7.5, 7.5) {Type~\ref{itm:progr:c}};
  
  \draw[red] (7, 5) -- (8, 5);
  \draw (7, 5) -- (7, 4) -- (8, 4) -- (8, 5);

  \draw (7, 2) -- (7, 3) -- (8, 3) -- (8, 2);
  \node[font=\huge] at (7.5, 2.25) {$\vdots$};
\end{tikzpicture}
  \end{center}
\caption{  
  Left: A set $B$ such that $\mathcal{I}(B) = \{(-\infty, 3], [4,5)\}$.
  Right: Example of the three possible types of progressing extensions of $B$.
  Dashed lines denote open interval borders; 
  $\ell_1 = 4, \ell_2 = 5$ and $\ell_3 = 6$ denote 
  values in $P_{\mathcal{L}} \cup P_A$.}
  \label{fig:prog-extensions}
\end{figure}

We show that accelerating leads to a progressing extension. Then, in the
sequel, we provide a polynomial bound on the number of progressing extensions in
a $\subseteq$-increasing sequence.

\begin{lemma}
  Let $\rho$ be an expanding cycle from $R$ and let
  $R' \defeq \Acceleration(R,\rho)$. There is some state $p_j$ such that $R(p_j) \subseteq R'(p_j)$ is a
  progressing extension.
\end{lemma}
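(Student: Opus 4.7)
The plan is to identify $p_j$ as the unique state at which $R'$ differs from $R$ (the one selected by the acceleration), and then to show that the single interval $K$ added at $p_j$ witnesses one of the three types of progressing extension. I will focus on the positively expanding case; the negative one is symmetric by reflection. Write $R'(p_j) = R(p_j) \cup K$ with $K = I_j \cup (\tau(p_j) \cap [a_j, +\infty))$, an interval containing $a_j$. A preliminary observation is that $a_j \notin R(p_j)$: by monotonicity of $\TheBetterPost$ (\autoref{proposition:succ}) we have $R = S_0 \preceq S_{j-1}$, and the expanding-cycle condition $a_j \in S_j(p_j) \setminus S_{j-1}(p_j)$ then forces $a_j$ out of $R(p_j)$.

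Next, I would let $I' \in \intervals(R'(p_j))$ be the maximal interval containing $K$ and split on whether $I'$ intersects $R(p_j)$. If $I' \cap R(p_j) = \emptyset$, then $I'$ immediately realizes type~\ref{itm:progr:a} with $B = R(p_j)$ and $B' = R'(p_j)$. Otherwise, $I'$ merges $K$ with a nonempty subfamily $\mathcal{J} \subseteq \intervals(R(p_j))$ of intervals contained in $I'$, and since $a_j \in I' \setminus R(p_j)$ we have $I' \supsetneq \bigcup \mathcal{J}$.

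The bulk of the argument lives in this merging case. I would first pick $I \in \mathcal{J}$ with the largest $\sup$ and use $R'(p_j) \subseteq \tau(p_j)$ together with $\sup K = \sup \tau(p_j)$ to conclude $\sup I' = \sup \tau(p_j)$. Assume $\sup \tau(p_j) < +\infty$ (so $\sup \tau(p_j) \in P_{\mathcal{L}}$): if $\sup I < \sup \tau(p_j)$ then $\sup \tau(p_j) \in \phi_{R'(p_j)}(I') \setminus \phi_{R(p_j)}(I)$, giving type~\ref{itm:progr:b}; otherwise $\sup I = \sup \tau(p_j)$, and since $a_j \in K \setminus R(p_j)$, the interval $K$ must extend $I'$ strictly below $I$, so $\inf I$ becomes an interior point of $I'$. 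When $\inf I \notin I$ and $\inf I \in P_{\mathcal{L}} \cup P_A$, type~\ref{itm:progr:c} is witnessed via $\ell = \inf I$. Otherwise I would fall back on \autoref{lem:phi} applied to the $\mathcal{C}$-valid set $S_j(p_j)$: the interval $I_j \in \intervals(S_j(p_j))$ carries an endpoint $\ell^* \in \phi_{S_j(p_j)}(I_j) \subseteq \cl{K} \subseteq \cl{I'}$, and a short case split on whether $\ell^* \in \cl{I}$ yields either type~\ref{itm:progr:b} via $\ell^*$ or a type~\ref{itm:progr:c} witness from a different $J \in \mathcal{J}$. The case $\sup \tau(p_j) = +\infty$ is handled analogously, using $\ell^*$ in place of $\sup \tau(p_j)$ throughout.

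The main obstacle I expect is exactly this last sub-case, where some interval of $\mathcal{J}$ already reaches up to the top guard: one cannot leverage $\sup \tau(p_j)$ directly as the ``new'' endpoint, and must instead trace the structural endpoint $\ell^*$ provided by \autoref{lem:phi} for $S_j(p_j)$ through the geometry of $\intervals(R(p_j))$ to match it against an appropriate open boundary of some $J \in \mathcal{J}$. Once that obstruction is handled, the negatively expanding case follows by reflecting the entire argument, using $\inf \tau(p_j)$ and the bottom endpoints of intervals in $\mathcal{J}$ in place of their upper counterparts.
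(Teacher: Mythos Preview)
Your sketch shares the paper's skeleton: locate the unique modified state $p_j$, observe $a_j \notin R(p_j)$ via monotonicity of $\TheBetterPost$, dispose of the disjoint case as type~\ref{itm:progr:a}, and in the overlapping case hunt for a point of $P_{\mathcal{L}} \cup P_A$ witnessing type~\ref{itm:progr:b} or~\ref{itm:progr:c}. The organisation of the overlapping case differs. The paper picks any $b \in (I_j \cup J) \cap R(p_j)$, lets $K \in \intervals(R(p_j))$ be the interval containing $b$, and splits on whether $b < a_j$ or $b > a_j$. When $b < a_j$, $K$ lies below $a_j$ while the added set reaches $g = \sup\tau(p_j)$, so $g$ furnishes the witness. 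When $b > a_j$, $K$ lies above $a_j$, and the paper brings in the interval $I_j'$ from condition~(5) of the expanding-cycle definition: $I_j' \subseteq I_j$ and $\sup I_j' \leq a_j$, so $I_j'$ sits entirely at or below $a_j$ and is disjoint from $K$.

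This use of $I_j'$ is exactly the ingredient your hard sub-case is missing. You apply \autoref{lem:phi} to $I_j$ and extract $\ell^* \in \overline{I_j}$, but $I_j$ may extend well above $a_j$, so $\ell^*$ can lie in the interior of your interval $I$; then neither~\ref{itm:progr:b} (since $\ell^* \in \overline{I}$) nor~\ref{itm:progr:c} (since $\ell^* \in I$) applies. Your fallback claim, ``a type~\ref{itm:progr:c} witness from a different $J \in \mathcal{J}$'', is not justified: nothing guarantees another such $J$ exists, let alone one whose boundary meets $P_{\mathcal{L}} \cup P_A$ at a point newly absorbed into $I'$. The repair is precisely the paper's move: apply \autoref{lem:phi} to $I_j' \in \intervals(S_{j-1}(p_j))$ instead. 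Any $\ell \in \overline{I_j'} \cap (P_{\mathcal{L}} \cup P_A)$ then satisfies $\ell \leq a_j \leq \inf I$, so either $\ell \notin \overline{I}$ (type~\ref{itm:progr:b}) or $\ell = a_j = \inf I \in I' \setminus I$ (type~\ref{itm:progr:c}), and the case analysis collapses.
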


\begin{proof}
  Let $p_j \in Q$ be such that $R'(p_j) = R(p_j) \cup I_j \cup J$
  for some interval $J$. In the proof, we write $\alpha_i t_i$, $p_i(a_i)$,
  and $I_i$, as in the definition of expanding cycles. We will assume that $\rho$ is
  positively expanding;
  the other case is similar. Thus, $J \in \set{[a_j,g], [a_j,g),
  [a_j,+\infty)}$, where $g \defeq \sup \guard{p_j}$. Recall that by definition
  $a_j \not \in R(p_j)$ and $a_j \in I_j \in
  \intervals(\TheBetterPost^k(R)(p_j))$ for some $k$. Thus, $I_j \cup J$ is an
  interval.

  If $(I_j \cup J) \cap R(p_j) = \emptyset$, then $R'(p_j)$ is a progressing
  extension due to~\ref{itm:progr:a}. For the remaining case, let $b \in (I_j \cup J) \cap
  R(p_j)$ and $K \in \intervals(R(p_j))$ be such that $b \in K$. Note that $K \cup
  I_j \cup J$ is an interval. If $b < a_j$, then, because $a_j \not\in R(p_j)$,
  either $g \notin K$ or $K$ has an upper bound if $J =
  [a_j,+\infty)$. Thus, $R'(p_j)$ is a progressing extension due to~\ref{itm:progr:b}
  or~\ref{itm:progr:c}. Finally, suppose that $b > a_j$.  By definition of $\rho$, there is a
  unique interval $I_j' \in \intervals(\TheBetterPost^{k-1}(R)(p_j))$ such
  that $I_j' \subseteq I_j$.  Moreover, $a_j \not \in I_j'$ and $a_j \ge
  \sup I_j'$. Thus, $K \cap I_j'$ is empty and $g \not \in I_j'$ or $I_j'$
  has an upper bound if $J = [a_j,+\infty)$.
  Since $K \cup I_j \cup J \in \intervals$, $R'(p_j)$ is a
  progressing extension due to~\ref{itm:progr:b}
  or~\ref{itm:progr:c}.
\end{proof}
%


To conclude, we show the following bound.
\begin{lemma}\label{lemma:progress}
  Let $B_0, B_1, B_2,\dots \in \mathcal{C}$ be a sequence such that $B_i
  \subseteq B_{i+1}$ for all $i \in \N$. The set of $i \in \N$ such
  that $B_{i+1}$ is a progressing extension of $B_i$ has cardinality at most
  $|\mathcal{L}|^{\Oh(1)}$.
\end{lemma}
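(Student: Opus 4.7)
The plan is to design a monotone potential function $\Phi \colon \mathcal{C} \to \N$ that is bounded above by $|\mathcal{L}|^{\Oh(1)}$ and strictly increases on every progressing extension; the lemma then follows immediately. Let $P \defeq P_{\mathcal{L}} \cup P_A$, so $|P| \leq 2|\mathcal{L}|+2$. For $B \in \mathcal{C}$ and $\ell \in P$, set
\[
\Psi(\ell, B) \defeq |\{I \in \intervals(B) : \ell \in \cl{I}\}| + \mathbf{1}[\ell \in B] + \mathbf{1}[\exists I \in \intervals(B) : \ell \in \mathrm{int}(I)],
\]
and for distinct $\ell_1, \ell_2 \in P$ let $C(\ell_1, \ell_2, B) \defeq \mathbf{1}[\exists I \in \intervals(B) : \{\ell_1, \ell_2\} \subseteq \cl{I}]$, with $\mathbf{1}[\cdot]$ the indicator. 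Define $\Phi(B) \defeq \sum_{\ell \in P} \Psi(\ell, B) + \sum_{\{\ell_1, \ell_2\} \subseteq P} C(\ell_1, \ell_2, B)$. By \autoref{lemma:nothreesome}, $\Psi \leq 3$, so $\Phi(B) \leq 3|P| + \binom{|P|}{2} = \Oh(|\mathcal{L}|^2)$.

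Monotonicity of $C$ under $B \subseteq B'$ is immediate since $I \subseteq I'$ implies $\cl{I} \subseteq \cl{I'}$. For $\Psi$, the only way the first summand can decrease is via a merger of two intervals of $\intervals(B)$ whose closures both contain $\ell$; in that situation $\ell \notin B$ (else $\ell$ sitting in some interval would prevent such a two-closure configuration), and the merger must bring $\ell$ into $B'$ as an interior point of the merged interval, turning on both extra indicators and compensating the drop. The main work is to verify that every progression strictly increases $\Phi$.

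For a Type~(1) progression with $I' \in \intervals(B')$ disjoint from $B$, I pick any $\ell \in \phi_{B'}(I')$, which is nonempty by \autoref{lem:phi}. Such $\ell$ cannot lie in $B$ (otherwise $I'$ and the interval of $\intervals(B)$ containing $\ell$ would be forced to merge, contradicting $B \cap I' = \emptyset$) and cannot already have been in two closures of $\intervals(B)$ (else $\ell$ would lie in three distinct closures of $\intervals(B')$, violating \autoref{lemma:nothreesome}); hence $\Psi(\ell)$ gains at least $1$. For Type~(3), the witness $\ell \in \phi_B(I) \setminus I$ becomes an element of $I' \subseteq B'$, and a short check shows that $\Psi(\ell)$ strictly increases. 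The delicate case is Type~(2) combined with a merger: $I' \in \intervals(B')$ contains some $I_0 \in \intervals(B) \setminus \{I\}$ in addition to $I$. If $I$ and $I_0$ are adjacent at a point $\ell^* \in P$, the merger forces $\ell^* \in B'$ with $\ell^* \in \mathrm{int}(I')$, so $\Psi(\ell^*)$ rises from $2$ to $3$. Otherwise $\cl{I} \cap \cl{I_0} \cap P = \emptyset$, and I pick $r \in \phi_B(I)$ and $q \in \phi_B(I_0)$ (both nonempty by \autoref{lem:phi}), which are automatically distinct; a short argument using the maximality of intervals in $\intervals(B)$ — any $J \in \intervals(B)$ with $r, q \in \cl{J}$ would have to meet the interior of $I$ or $I_0$ — yields $C(r, q, B) = 0$, whereas $C(r, q, B') = 1$ holds since both $r, q \in \cl{I'}$.

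The main obstacle I anticipate is precisely this Type~(2) merger sub-case: the natural local witness $\ell$ need not change state at all, so the pair indicator $C$ is introduced to record the event ``two previously separated $P$-points now share an interval's closure''. Together with $\Psi$, it keeps $\Phi$ strictly monotone along any progressing sequence, so the number of progressing steps is bounded by $\max \Phi = \Oh(|\mathcal{L}|^2) = |\mathcal{L}|^{\Oh(1)}$.
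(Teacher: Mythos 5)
Your overall strategy --- a single monotone potential $\Phi$, bounded by $O(|\mathcal{L}|^2)$, that strictly increases at every progressing extension --- is a legitimate alternative to the paper's proof, which instead bounds each progression type separately: it first argues that the auxiliary monotone quantities $\phi_{B_i}(\intervals(B_i))$ and $|\intervals(B_i)|$ can change at most $|P|$ times, and then constructs a bijection between $\intervals(B_i)$ and $\intervals(B_{i+1})$ preserving the indicator functions, from which each type is charged to a point of $P$ via \autoref{lemma:nothreesome}. Your monotonicity analysis of $\Psi$ and your treatment of Types~(1) and~(3) are essentially correct. However, there is a genuine error in your Type~(2) merger sub-case.

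Concretely, the claim ``any $J \in \intervals(B)$ with $r, q \in \cl{J}$ would have to meet the interior of $I$ or $I_0$, hence $C(r,q,B) = 0$'' is false. Take $\intervals(B) \supseteq \{[0,1),\ (1,2),\ (2,3]\}$ with $I = [0,1)$ and $I_0 = (2,3]$ merging into a single $I' \supseteq [0,3]$ in $B'$, and suppose $(P_{\mathcal{L}} \cup P_A) \cap [0,3] = \{1,2\}$, so that $r = 1$ and $q = 2$ are forced. Then $J = (1,2)$ satisfies $1, 2 \in \cl{J} = [1,2]$ while meeting neither $\mathrm{int}(I) = (0,1)$ nor $\mathrm{int}(I_0) = (2,3)$; hence $C(1,2,B) = 1$ already and your chosen witness records no increase. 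The configuration is not fatal to the method --- here $r = \sup I \notin I$ lies in two closures of $\intervals(B)$ and becomes an interior point of $I'$ after the merger, so $\Psi(r,\cdot)$ jumps from $2$ to $3$ --- but that is a different witness requiring a different argument. Your case analysis must therefore be redone to cover (i)~mergers that absorb further intervals of $\intervals(B)$ lying strictly between $I$ and $I_0$, (ii)~intervals adjacent at a point \emph{not} in $P_{\mathcal{L}} \cup P_A$ (your dichotomy ``adjacent at $\ell^* \in P$'' versus ``$\cl{I} \cap \cl{I_0} \cap P = \emptyset$'' is not exhaustive of the merger possibilities, though this particular sub-case happens to go through), and (iii)~Type~(2) progressions involving no merger at all, which you pass over in silence. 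Until these cases are written out, the strict-increase claim for Type~(2), and hence the bound, is not established.
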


\begin{proof}
  Let $P \defeq P_{\mathcal{L}} \cup P_A$.
  First, observe that~\ref{itm:progr:c} can happen only if there exists some $\ell \in P$ such that 
  $\ell \in B_{i+1} \setminus B_i$, and thus at most $|P|$ times.

  Let $\phi_{B_i}(\intervals(B_i)) \subseteq P$ be the image of all
  intervals of $\intervals(B_i)$. Note that $\phi_{B_i} (\intervals(B_i)) \subseteq
  \phi_{B_{i+1}} (\intervals(B_{i+1}))$. A strict inclusion can happen at most
  $|P|$ times. Thus, we can assume that $\phi_{B_i} (\intervals(B_i)) =
  \phi_{B_i} (\intervals(B_{i+1}))$. Note that~\ref{itm:progr:a} can happen at
  most $|P|$ times due to \autoref{lemma:nothreesome} and because
  $\phi_{B_{i+1}}(I') \neq \emptyset$ for all $I' \in \intervals(B_{j+1})$.
  Indeed, for all $\ell \in P$, \autoref{lemma:nothreesome} tells
  us there are no pairwise distinct intervals $I_1, I_2, I_3 \in
  \intervals(B_{i+1})$ such that $\ell \in \phi_{B_{i+1}}(I_1)$, $\ell \in
  \phi_{B_{i+1}}(I_2)$, and $\ell \in \phi_{B_{i+1}}(I_3)$.
  
  Now, assume that~\ref{itm:progr:a} and~\ref{itm:progr:c} are not the case and that $\phi_{B_i}
  (\intervals(B_i)) = \phi_{B_i} (\intervals(B_{i+1}))$. Since~\ref{itm:progr:a} does not
  hold, we have $|\intervals(B_{i+1})| \le |\intervals(B_i)|$. Note that a
  strict inequality can happen at most $|P|$ times, so we can assume that
  $|\intervals(B_{i+1})| = |\intervals(B_i)|$. We define a function $f \colon
  \intervals(B_i) \to \intervals(B_{i+1})$. Recall that $B_i \subseteq
  B_{i+1}$. So, for every $I \in \intervals(B_i)$ there exists a unique $f(I) \in
  \intervals(B_{i+1})$ such that $I \subseteq f(I)$, and hence $\phi_{B_i}(I)
  \subseteq \phi_{B_{i+1}}(f(I))$. Uniqueness follows from maximality of
  intervals within $\intervals(B_{i+1})$. Since~\ref{itm:progr:a} does not hold, the function $f$ is a
  surjection. Moreover, $\ell \in \phi_{B_{i}}(I)$ iff $\ell \in
  \phi_{B_{i+1}}(f(I))$. Thus,~\ref{itm:progr:b} can happen at most $|P|$ times by
  \autoref{lemma:nothreesome}.
\end{proof}

As a corollary of how we computed $\reach_{p(a)}$ we get the following result
that will be needed in~\autoref{sec:parametric}.
\begin{corollary}\label{corollary:c-valid}
The mapping $\reach_{p(a)}$ is $\mathcal{C}$-valid.
\end{corollary}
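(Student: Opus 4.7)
The plan is to prove this by straightforward induction, appealing to the three preservation lemmas already established in the section. Recall that the polynomial-time algorithm constructs a finite sequence of mappings $R_0, R_1, \dots, R_N$ such that $R_N = \reach_{p(a)}$ (by \autoref{lem:post-succs} and \autoref{lemma:acceleration}, together with termination via \autoref{prop:algo:ptime}). It therefore suffices to show by induction on $i$ that every $R_i$ in this sequence is $\mathcal{C}$-valid; then the conclusion follows by taking $i = N$.

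For the base case, $R_0(p) = [a,a] = A \in \mathcal{C}$ by definition of the MIUN-closure, and $R_0(r) = \emptyset$ for all $r \neq p$. The empty set can be treated as a trivial element of $\mathcal{C}$ (either by convention, or by observing that all subsequent operations handle $\emptyset$ correctly as a base case for union; note that $\intervals(\emptyset) = \emptyset$, so the characterisation of \autoref{lem:phi} is vacuously satisfied).

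For the inductive step, suppose $R_i$ is $\mathcal{C}$-valid. The algorithm produces $R_{i+1}$ in one of two ways. If $R_{i+1} = \TheBetterPost(R_i)$, then \autoref{proposition:succ} immediately gives $\mathcal{C}$-validity since $\TheBetterPost_q$ is built from Minkowski sums, intersections with intervals from $\mathcal{L}$, and unions, which are exactly the M, I, and U operations of the MIUN-closure. If instead $R_{i+1} = \Acceleration(R_i, \rho)$ for some expanding cycle $\rho$, then \autoref{lemma:acceleration} provides $\mathcal{C}$-validity; this case relies on the fact that the interval added by $\Acceleration$ has a closure containing an endpoint of $\guard{p_j} \in \mathcal{L}$, so it is captured by the N operation.

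There is no genuine obstacle here: all of the work has been done by \autoref{proposition:succ} and \autoref{lemma:acceleration}, which were explicitly designed to show that $\TheBetterPost$ and $\Acceleration$ preserve $\mathcal{C}$-validity. The only minor subtlety is handling the empty sets in $R_0$, which is resolved by the observation above. Therefore, by induction $R_N = \reach_{p(a)}$ is $\mathcal{C}$-valid, which proves the corollary.
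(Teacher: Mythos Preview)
Your proposal is correct and matches the paper's approach: the paper does not give a separate proof but simply states the result ``as a corollary of how we computed $\reach_{p(a)}$'', and your explicit induction over the sequence $R_0, \ldots, R_N$ using \autoref{proposition:succ} and \autoref{lemma:acceleration} is precisely the argument this sentence is pointing to. Your handling of the $\emptyset$ subtlety in the base case is a reasonable observation that the paper leaves implicit.
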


\section{Parametric COCA reachability}
\label{sec:parametric}
\newcommand{\lep}{\underline{\mathit{ep}}}
\newcommand{\uep}{\overline{\mathit{ep}}}
\newcommand{\lclo}{\underline{\mathit{c}}}
\newcommand{\uclo}{\overline{\mathit{c}}}
\newcommand{\lbnd}{\underline{\mathit{bnd}}}
\newcommand{\ubnd}{\overline{\mathit{bnd}}}
\newcommand{\vars}[1]{( \lep_{#1}, \uep_{#1}, \lclo_{#1}, \uclo_{#1},%
                             \lbnd_{#1}, \ubnd_{#1} )}
\newcommand{\pred}[1]{\varphi_{#1}}
\newcommand{\succpred}{\varphi_{\mathrm{succ}}}
\newcommand{\ile}[2]{\mathrm{iflt}^{#1}_{#2}}

Finally, we consider the reachability problem $p(a) \steps{*} q(b)$
for parametric COCAs. We build on the proof of the last section for
guarded COCA. However, we ``guess'' under-approximations of reachability
functions by efficiently encoding them in the $\exists\forall$-fragment of
\foq instead of using iterative computation. We then invoke the following
result:

\begin{theorem}[\cite{fr75,Son85}]\label{thm:foq}
  The language of sentences in the $\exists \forall$-fragment of \foq is decidable and in $\SIGTWO$.
\end{theorem}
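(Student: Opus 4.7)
The plan is to obtain both decidability and the $\SIGTWO$ upper bound via a small-model argument combined with known complexity results for quantifier-free linear arithmetic over $\Q$. The main ingredient is the fact that satisfiability of a Boolean combination of linear inequalities over $\Q$ is in $\NP$: one guesses a truth assignment to the atoms and verifies feasibility of the resulting system of strict/nonstrict linear inequalities using polynomial-time linear programming. Combined with polynomial bounds on vertex solutions of linear systems (Khachiyan-style), this also yields polynomial-size rational witnesses whenever a solution exists.

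The second ingredient is a small-witness property for $\exists\forall$-sentences: if $\exists\vec{x}\,\forall\vec{y}\,\varphi(\vec{x},\vec{y})$ holds, then there exist rational witnesses $\vec{x}$ whose encoding is polynomial in the size of $\varphi$. The standard route is Ferrante--Rackoff style quantifier elimination, which replaces $\forall\vec{y}\,\varphi(\vec{x},\vec{y})$ by a quantifier-free formula $\psi(\vec{x})$ of only singly-exponential size, whose atoms are linear inequalities in $\vec{x}$ with coefficients of polynomial bit-length. A satisfying $\vec{x}$ for $\psi$ then lies at a vertex of an arrangement defined by these atoms, so has polynomial bit-length.

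Putting these together, the $\SIGTWO$ algorithm proceeds as follows: existentially guess a rational assignment $\vec{x}$ of polynomially many bits; then query a coNP oracle that accepts iff $\forall\vec{y}\,\varphi(\vec{x},\vec{y})$ holds, by checking unsatisfiability of the quantifier-free linear arithmetic formula $\neg\varphi(\vec{x},\vec{y})$ in $\vec{y}$, which is in coNP by the first ingredient. This yields membership in $\NP^{\coNP}=\SIGTWO$, while decidability follows a fortiori.

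The main obstacle is establishing the polynomial bound on witnesses for $\vec{x}$. Working over $\Q$ with strict and nonstrict inequalities and a leading existential quantifier, it is not immediate that witnesses cannot require exponentially many bits. Rather than performing full quantifier elimination, one can alternatively appeal directly to the result of~\cite{Son85}, which adapts Ferrante--Rackoff to give the polynomial bounds needed for placement in the polynomial hierarchy; this is the approach I would cite in the final write-up.
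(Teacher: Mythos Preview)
The paper does not prove this theorem at all: it is stated as a known result with citations to Ferrante--Rackoff~\cite{fr75} and Sontag~\cite{Son85}, and is used as a black box to obtain the $\SIGTWO$ upper bound for parametric COCA reachability. Your proposal is therefore not comparable to a proof in the paper, because there is none.

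That said, your sketch is a faithful outline of the standard argument behind the cited references. Ferrante--Rackoff give the quantifier-elimination procedure for $\mathrm{FO}(\Q,+,<)$ with the relevant size bounds, and Sontag refines this to place each level of the quantifier alternation hierarchy in the corresponding level of the polynomial hierarchy, via exactly the guess-and-verify scheme you describe. Your observation that a witnessing $\vec{x}$ can be taken at a vertex determined by at most $|\vec{x}|$ of the (polynomially-sized) atomic constraints, hence of polynomial bit-length even though the eliminated formula $\psi(\vec{x})$ may be exponentially large, is the crux of the small-model argument. So your write-up is correct and matches what the citations establish; in the context of this paper, however, simply invoking~\cite{fr75,Son85} is all that is needed.
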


In most of the section, we will work with a guarded~COCA $\W$ rather
than a parametric COCA $\Pp = (Q, T, \tau, X)$. One should think of
$\W$ as $\Pp^\mu$ for some valuation $\mu$. This valuation will be
guessed later on in our final \foq formula.

\subsection{Characterisation of the reachability function}

Recall the mapping $\reach_{p(a)} \in \mathcal{R}_Q$
from~\autoref{sec:ptime} defined by $\reach_{p(a)}(r) \defeq
\Post{p,r}{a}$. Note that $p(a) \steps{*} q(b)$ holds iff $b \in
\reach_{p(a)}(q)$.

Let $R \in \mathcal{R}_Q$ and recall the $\TheBetterPost$ function
from~\autoref{sec:ptime}. We say that mapping $R$ is a
\emph{reachability candidate} if $a \in R(p)$ and $\TheBetterPost(R)
\preceq R$. For example, the mapping $U$ defined by $U(r) \defeq \Q$,
for all $r \in Q$, is (trivially) a reachability candidate. As we will
show, $\reach_{p(a)}$ is also a reachability candidate.

The following provides a characterisation of $\reach_{p(a)}$ that will
allow us to encode it in \foq:

\begin{proposition}\label{proposition:characterisation}
  Let $R \in \mathcal{R}_Q$. It is the case that $\reach_{p(a)} = R$
  iff the following two conditions hold:
  \begin{itemize}
  \item $R$ is a reachability candidate, and

  \item $R \preceq R'$ for every reachability candidate $R' \in
    \mathcal{R}_Q$.
  \end{itemize}
\end{proposition}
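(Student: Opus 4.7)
The plan is to reduce the biconditional to two independent lemmas: (i) $\reach_{p(a)}$ is itself a reachability candidate; and (ii) $\reach_{p(a)} \preceq R'$ for every reachability candidate $R'$. Once (i) and (ii) are in hand, both implications follow. For the forward direction, if $R = \reach_{p(a)}$, then (i) yields the first bullet and (ii) yields the second. For the reverse direction, suppose $R$ satisfies both bullets; instantiating the second bullet with $R' \defeq \reach_{p(a)}$ (a reachability candidate by (i)) gives $R \preceq \reach_{p(a)}$, while applying (ii) with $R' \defeq R$ gives $\reach_{p(a)} \preceq R$, so $R = \reach_{p(a)}$.

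For (i), I would argue as follows. The empty run witnesses $p(a) \steps{\varepsilon} p(a)$, so $a \in \reach_{p(a)}(p)$. To see $\TheBetterPost(\reach_{p(a)}) \preceq \reach_{p(a)}$, pick any $q \in Q$ and $b \in \TheBetterPost_q(\reach_{p(a)})$. Unfolding the four cases in the definition of $\TheBetterPost_q$, either $b \in \reach_{p(a)}(q)$ already, or there exist a transition $t = (r,z,q) \in T$, a scalar $\alpha \in (0,1]$, and some $b' \in \reach_{p(a)}(r)$ with $b = b' + \alpha z$ and $b \in \tau(q)$. Taking an admissible run $\rho'$ witnessing $p(a) \steps{\rho'} r(b')$ and appending $\alpha t$ produces an admissible run to $q(b)$, so $b \in \reach_{p(a)}(q)$.

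For (ii), I would use \autoref{lem:post-succs}. Define $S_0 \in \mathcal{R}_Q$ by $S_0(p) \defeq \{a\}$ and $S_0(r) \defeq \emptyset$ for $r \neq p$, and let $S_i \defeq \TheBetterPost(S_{i-1})$. By \autoref{lem:post-succs}, $\reach_{p(a)}(q) = \bigcup_{i \in \N} S_i(q)$. Now let $R'$ be any reachability candidate. Since $a \in R'(p)$, we have $S_0 \preceq R'$. A direct inspection of the definition of $\TheBetterPost$ (each building block---Minkowski sum with a fixed interval, intersection with $\tau(q)$, and union---is monotone with respect to $\preceq$) shows that $\TheBetterPost$ is monotone. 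Thus by induction on $i$, using $\TheBetterPost(R') \preceq R'$, we get $S_i \preceq R'$ for all $i$, whence $\reach_{p(a)} \preceq R'$.

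There is no real obstacle: the monotonicity of $\TheBetterPost$ is immediate from its definition, and the only small care required is, in the proof of (i), to verify that a value produced by a single $\TheBetterPost$ step factors as ``existing admissible run $+$ one scaled transition landing in $\tau(q)$''; this is exactly how the four cases in $\TheBetterPost_q$ were set up, so the argument is essentially a bookkeeping check.
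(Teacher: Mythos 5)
Your proof is correct and follows essentially the same route as the paper's: both arguments rest on the two facts that $\reach_{p(a)}$ is itself a reachability candidate and that it is $\preceq$-minimal among candidates, the latter obtained from \autoref{lem:post-succs} together with the monotonicity of $\TheBetterPost$ and the induction showing $S_i \preceq R'$. Your reorganisation into two standalone lemmas (and your explicit unfolding of the four cases of $\TheBetterPost_q$ where the paper simply writes ``by definition'') is only a presentational difference.
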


\begin{proof}
  $\Rightarrow$) Let $R \defeq \reach_{p(a)}$. By definition, we have
  $a \in R$ and $\TheBetterPost(R) = R$. Hence, $R$ is a reachability
  candidate. Let $R' \in \mathcal{R}_Q$ be a reachability
  candidate. Let $q \in Q$ and $b \in R(q)$. We must show that $b \in
  R'(q)$. By definition of $R$, we have $p(a) \steps{\rho} q(b)$ for
  some run $\rho$. Note that $\TheBetterPost(R') \preceq R'$. Thus, by
  induction, we have $\TheBetterPost^{|\rho|}(R') \preceq R'$. Since
  $a \in R'(p)$, by \autoref{lem:post-succs} $b$ belongs to
  $\TheBetterPost^{|\rho|}(R')(q)$ and hence to $R'(q)$.
  
  $\Leftarrow$) Let $S_0 \in \mathcal{R}_Q$ be the mapping defined as
  $S_0(p) \defeq \set{a}$ and $S_0(r) \defeq \emptyset$ for every $r
  \neq p$. Let $S_i \defeq \TheBetterPost(S_{i-1})$ for every $i \geq
  1$. Since $R$ is a reachability candidate, we have $a \in R(p)$ and
  $\TheBetterPost(R) \preceq R$. In particular, we have $S_0 \preceq
  R$ and hence $S_1 = \TheBetterPost(S_0) \preceq \TheBetterPost(R)
  \preceq R$.

  By induction, we obtain $\bigcup_{i \in \N} S_i \preceq R$. By
  \autoref{lem:post-succs}, $\reach_{p(a)} = \bigcup_{i \in \N}
  \TheBetterPost(S_i)$. Thus, $\reach_{p(a)} \preceq R$. Since
  $\reach_{p(a)}$ is a reachability candidate, we have $R \preceq
  \reach_{p(a)}$ by assumption. This shows that $R = \reach_{p(a)}$.
\end{proof}

By \autoref{lemma:boundedunion} and \autoref{corollary:c-valid},
$\reach_{p(a)}(q)$ consists of at most $n$ intervals for every $q \in
Q$, where $n$ is polynomial in $|Q|$. We define $\mathcal{R}_Q^n
\subseteq \mathcal{R}_Q$ as the following set of mappings:
\begin{align*}
\mathcal{R}_Q^n \defeq \set{R : Q \to 2^{\Q} \mid |\intervals(R(q))| \le n \text{ for all } q \in Q}.
\end{align*}
In particular, we have $\reach_{p(a)} \in \mathcal{R}_Q^n$. Hence, we may rewrite \autoref{proposition:characterisation} as follows:

\begin{corollary}\label{corollary:characterisation}
Let $R \in \mathcal{R}_Q^n$. It is the case that $\reach_{p(a)} = R$ iff the
following two conditions hold:
\begin{itemize}
 \item $R$ is a reachability candidate, and

 \item $R \preceq R'$ for every reachability candidate $R' \in \mathcal{R}_Q^n$.
\end{itemize}
\end{corollary}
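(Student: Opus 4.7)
The plan is to deduce the corollary directly from Proposition~\ref{proposition:characterisation}, exploiting the fact that the paper has just observed (using Lemma~\ref{lemma:boundedunion} and Corollary~\ref{corollary:c-valid}) that $\reach_{p(a)}$ itself already lies in $\mathcal{R}_Q^n$. Since the corollary differs from the proposition only in restricting the universal quantifier in the second condition from all of $\mathcal{R}_Q$ to the smaller collection $\mathcal{R}_Q^n$, the only thing to verify is that no comparison witness we actually need is lost by this restriction, and the key witness we need is $\reach_{p(a)}$ itself.

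For the forward direction, I would simply note that if $R = \reach_{p(a)}$, then $R \in \mathcal{R}_Q^n$ by the preceding remark, and both required conditions follow immediately from Proposition~\ref{proposition:characterisation}: the universal statement provided there over $\mathcal{R}_Q$ is \emph{a fortiori} true when restricted to $\mathcal{R}_Q^n$.

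For the backward direction, I would argue in two short steps. First, since $\reach_{p(a)}$ is itself a reachability candidate belonging to $\mathcal{R}_Q^n$, the restricted second condition of the corollary, applied to $R' \defeq \reach_{p(a)}$, yields $R \preceq \reach_{p(a)}$. Second, the reverse inequality $\reach_{p(a)} \preceq R$ does not use the second condition at all; it only uses that $R$ is a reachability candidate, so it can be obtained by replaying the induction from the proof of Proposition~\ref{proposition:characterisation}: starting from $S_0$ defined by $S_0(p) \defeq \set{a}$ and $S_0(r) \defeq \emptyset$ for $r \neq p$, monotonicity of $\TheBetterPost$ together with $\TheBetterPost(R) \preceq R$ propagates $S_i \preceq R$ to every iterate $S_i \defeq \TheBetterPost^i(S_0)$, and Lemma~\ref{lem:post-succs} identifies $\bigcup_{i \in \N} S_i$ with $\reach_{p(a)}$.

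There is no genuine obstacle here: all the heavy lifting has been done. The corollary is essentially a bookkeeping statement recording that the characterisation of Proposition~\ref{proposition:characterisation} can be tightened to mappings with polynomially many intervals per state, which will later make it possible to encode the existence of a suitable $R$ in the $\exists\forall$-fragment of \foq. If anything, the only subtlety worth stating explicitly is that the restriction to $\mathcal{R}_Q^n$ does not break the $R' = \reach_{p(a)}$ instantiation used in the proof of the backward direction of Proposition~\ref{proposition:characterisation}, which is exactly what the interval bound of Section~\ref{sec:ptime} guarantees.
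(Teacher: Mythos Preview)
Your proposal is correct and matches the paper's reasoning: the paper does not give an explicit proof of the corollary but simply observes, from Lemma~\ref{lemma:boundedunion} and Corollary~\ref{corollary:c-valid}, that $\reach_{p(a)} \in \mathcal{R}_Q^n$ and then states that Proposition~\ref{proposition:characterisation} can be rewritten with the restricted quantifier. Your argument spells out exactly the details the paper leaves implicit, using the same key point that $\reach_{p(a)}$ itself belongs to $\mathcal{R}_Q^n$ and can therefore serve as the witness $R'$ in the backward direction.
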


\subsection{Encoding intervals}

We now describe our logical encoding of intervals.
An interval $I$ is uniquely determined by its endpoints and their membership in $I$.
We represent
intervals as tuples $(b,t,\bot,\top)$, where $b,t \in \Q$ and
$\bot,\top \in \set{0,1,2}$. Such a tuple represents $\emptyset$ if
$(\bot \neq 2 \land \top \neq 2) \land (b > t \lor (b = t \land \neg (\top
= \bot = 1)))$ holds, and otherwise the unique interval
$I \in \intervals$ such that
\[
\bot = \begin{cases}
          0 & \text{if } \inf I = b \notin I, \\
          1 & \text{if } \inf I = b \in I, \\
          2 & \text{if } \inf I = -\infty,
         \end{cases}
\quad
\top = \begin{cases}
          0 & \text{if } \sup I = t \notin I, \\
          1 & \text{if } \sup I = t \in I, \\
          2 & \text{if } \sup I = +\infty.
         \end{cases}
\]
We write $\int(I)$ to denote the interval represented by the tuple $I =
(b,t,\bot,\top)$, \eg\ $\int(2,4,0,1) = (2,4]$, $\int(1,5,2,0) =
  (-\infty,5)$ and $\int(3,-2,1,0) = \emptyset$. Note that some
  intervals are not encoded uniquely, \eg\ $\int(0,5,2,0) =
  (-\infty,5)$ as well.


Let $\encodings \defeq \Q^2 \times \set{0,1,2}^2$ be the set of all
interval encodings. We will represent each such encoding with four
variables from \foq, \ie\ $I_x = (x_b, x_t, x_\bot, x_\top)$. When
$I_x$ appears in a formula, we will assume that it is conjoined with
$\bigvee_{i=0}^2 (x_\bot = i) \wedge \bigvee_{i=0}^2 (x_\top =
i)$. Hence, variables $x_\top$ and $x_\bot$ evaluate only to
$\set{0,1,2}$. From now on, we can define \foq formulas that operate
on variables of the form $I_x$. As usual, outputs can be represented
with an extra variable, \eg\ $z = \min(x,y)$ 
can be expressed as a
quantifier-free formula.

Given two interval encodings $I = (b,t,\bot,\top) \in \encodings$ and
$I' = (b',t',\bot',\top') \in \encodings$, we can construct an encoding
$I'' \in \encodings$ such that $\int(I'') = \int(I) + \int(I')$:
\begin{align*}
I'' &\defeq (b+b',t + t',\minkowski(\bot,\bot'),\minkowski(\top,\top'));
\end{align*}
where $\minkowski \colon \set{0,1,2}^2 \to \set{0,1,2}$ is defined as:
\[
\minkowski(i,j) = \begin{cases}
                   2 & \text{if } \max(i,j) = 2; \\
                   \min(i,j) & \text{otherwise.}
                  \end{cases}
\]
Note that $I''$ can be written as a quantifier-free linear formula as
it involves addition and comparisons. By a slightly tedious case
distinction, it is further possible to construct a quantifier-free
linear formula that specifies an encoding $I'' \in \encodings$ such
that $\int(I'') = \int(I) \cap \int(I')$ (see \autoref{app:parametric}
of the appendix).



Formally, for every variable valuation $\mu$, we write $\mu(I_x)$ for
$(\mu(x_b), \mu(x_t), \mu(x_\bot), \mu(x_\top))$. We
can construct quantifier-free formulas $\varphi_+$ and $\varphi_\cap$
such that for every valuation $\mu$:
\begin{itemize}
\item $\mu(\varphi_{\cap}(I_x,I_y,I_z))$ holds iff $\mu(I_z) = \mu(I_x) \cap \mu(I_y)$,
  
\item $\mu(\varphi_{+}(I_x,I_y,I_z))$ holds iff $\mu(I_z) = \mu(I_x) + \mu(I_y)$.
\end{itemize}
For example, the latter is defined as:
\begin{multline*}
  \varphi_{+}(I_x,I_y,I_z) \defeq (z_b = x_b+ y_b) \land (z_t =
  x_t + y_t) \land {} \\ z_\bot = \minkowski(x_\bot,y_\bot) \land z_\top =
  \minkowski(x_\top,y_\top).
\end{multline*}
We will further use the following quantifier-free formulas for
membership, interval inclusion and emptiness checks:
\begin{align*}
\varphi_{\in}(x,I_y) \defeq \; & (y_b < x) \vee (y_b = x \wedge y_\bot = 1) \vee (y_\bot = 2) \\
\wedge \; & (x < y_t) \vee (y_t = x \wedge y_\top = 1) \vee (y_\top = 2); \\
\varphi_{\subseteq}(I_x,I_y) \defeq \; & (x_\bot = 2 \rightarrow y_\bot = 2) \wedge  (x_\top = 2 \rightarrow y_\top = 2) \\
\wedge \; & (y_b < x_b) \vee (y_b = x_b \wedge x_\bot \le y_\bot) \vee (y_\bot = 2) \\
\wedge \; & (x_t < y_t) \vee (y_t = x_t \wedge x_\top \le y_\top) \vee (y_\top = 2);\\
\varphi_{\emptyset}(I_x) \defeq \; & (x_\bot \neq 2) \wedge (x_\top \neq 2) \\
\wedge \; & \left((x_b > x_t) \vee (x_b = x_t \wedge \neg(x_\top = x_\bot = 1))\right).
\end{align*}

For every valuation $\mu$, the following holds by definition:
\begin{itemize}
\item $\mu(\varphi_{\in}(x,I_y))$ holds iff $\mu(x) \in \int(\mu(I_y))$;

\item $\mu(\varphi_{\subseteq}(I_x,I_y))$ holds iff $\int(\mu(I_x)) \subseteq \int(\mu(I_y))$;
   
\item $\mu(\varphi_{\emptyset}(I_x))$ holds iff $\int(\mu(I_x)) = \emptyset$.
\end{itemize}

\subsection{Unions of intervals}

We will consider a union of intervals to represent $\reach_{p(a)}$. We will represent them as vectors of variables. We write $\bI_x = (I_{x_1},\ldots,I_{x_n})$ to denote a vector of $4n$ variables, where $n$ is the bound from \autoref{corollary:characterisation}. Given a valuation $\mu$, we write $\mu(\bI_x)$ for the valuation of all variables in $\bI_x$, and we write $\int(\mu(\bI_x)) \defeq \bigcup_{i = 1}^n \int(\mu(I_{x_i}))$.

We define formulas dealing with union of intervals $\bI_x$ and generalising the formulas on $I_x$. For membership, we define:
\begin{align*}
\psi_{\in}(x, \bI_y) \defeq \bigvee_{i = 1}^n \varphi_{\in}(x,\bI_{y_i}).
\end{align*}
For Minkowski sums, we will only need to sum a union of intervals with
a single interval:
\begin{align*}
\psi_{+}(\bI_x, I_y,\bI_z) \defeq \bigwedge_{i = 1}^n \varphi_{+}(I_{x_i},I_y,I_{z_i}).
\end{align*}
Similarly, we will intersect a union of intervals with a single interval. Using $I \cap \bigcup_{a \in A} I_a = \bigcup_{a \in A} I \cap I_a$, 
we can define:
\begin{align*}
\psi_{\cap}(\bI_x,I_y, \bI_{z}) \defeq \bigwedge_{i = 1}^n \varphi_{\cap}(I_{x_i}, I_{y}, I_{z_i}).
\end{align*}
The formula for interval inclusion is defined as:
\begin{align*}
\psi_{\subseteq}(\bI_x, \bI_y) \defeq \bigwedge_{i = 1}^n \bigvee_{j = 1}^n \varphi_{\subseteq}(I_{x_i},I_{y_j}).
\end{align*}
Note that this formula does not exactly express inclusion. For example, let $n \defeq 2$. Suppose $\bI_x$ encodes $(1,5) \cup (4,9)$ and $\bI_y$ encodes $[1,4) \cup [4,10]$. We have $(1,5) \cup (4,9) \subseteq [1,4) \cup [4,10]$ even though $(1,5)$ is not a subset of any of the latter intervals.

Therefore, we need a formula to express that $\bI_x$ is decomposed into maximal intervals.
Let us start with a formula to express that an interval does not ``prolong'' another one (\eg\ as opposed to $[4,10]$ which prolongs $[1,4)$):
\begin{align*}
\varphi_{dis}(I_x,I_y) \defeq & (x_t = y_b) \rightarrow (x_\top = y_\bot = 0).
\end{align*}
Notice that this formula alone does not express that $I_x$ and $I_y$ are disjoint.
The following formula expresses that $\bI_x$ consists of maximal intervals, using some auxiliary intervals $I_{y_{i,j}}$:
\begin{multline*}
  \psi_{max}(\bI_x) \defeq \bigwedge_{1 \le i, j \le n} \left( \varphi_{\cap}(I_{x_i},I_{x_j},I_{y_{i,j}}) \rightarrow \varphi_{\emptyset}(I_{y_{i,j}}) \right) \\[-8pt]
  \wedge \varphi_{dis} (I_{x_i},I_{x_j}).
\end{multline*}
The above formula has a quadratic number of variables, but we will omit variables $I_{y_{i,j}}$ in $\psi_{max}(\bI_x)$ for readability.
For every valuation $\mu$, it is readily seen that $\mu(\psi_{max}(\bI_x))$ holds iff $\int(\mu(I_{x_1})), \ldots, \int(\mu(I_{x_n}))$ are the maximal intervals from $\intervals(\int(\mu(\bI_x)))$, plus possibly some empty intervals. Thus, we say that $\mu(\psi_{max}(\bI_x))$ is decomposed into maximal intervals.

More formally, the following holds for every valuation $\mu$:
\begin{itemize}
\item $\mu(\psi_{\in}(x,\bI_y))$ holds iff $\mu(x) \in \int(\mu(\bI_y))$;

\item $\mu(\psi_{+}(\bI_x,I_y,\bI_z))$ holds iff $\int(\mu(\bI_x)) + \int(\mu(I_y)) = \int(\mu(\bI_z))$;

\item $\mu(\psi_{\cap}(\bI_x,I_y, \bI_{z}))$ holds iff 
  \(
 \int(\mu(\bI_x)) \cap \int(\mu(I_y)) = \bigcup_{i = 1}^n \int(I_{z_i});
 \)
 
\item $\mu(\psi_{\subseteq}(\bI_x,\bI_y) \cap \psi_{max}(\bI_y))$ holds iff $\bI_y$ is decomposed into maximal intervals and
 $
 \int(\bI_x) \subseteq \int(\bI_y).
 $
\end{itemize}

\subsection{Encoding the reachability function}
\label{subsection:encoding}

For every $z \in \Z \cup X$, let $I_z^+, I_z^-, I_z^0 \in \encodings$ be
defined as $I_z^+ \defeq (0,z,0,1)$, $I_z^- \defeq (z,0,1,0)$, and
$I_z^0 = (0,0,1,1)$. These encode $(0, z]$, $[z, 0)$ and $[0,
    0]$ which are the possible scalings of $z$ depending on its sign. For every $q \in Q$, let $I_q \in \encodings$ be such that $I_q \defeq (q_b,q_t,q_\bot,q_\top)$ encodes $\tau(q)$. For example, if $\tau(q) = (x, 10]$, then we can take $I_q \defeq (x, 10, 0, 1)$. Note that $q_b$ and $q_t$ can be variables corresponding to parameters.

Given $\bI_x$, for all transition $t \in T$, we define $\bI_t$ through:
\begin{align*}
\psi_{t}(\bI_x,\bI_t) \defeq \; & (\Effect{t} > 0) \rightarrow \psi_{+}(\bI_{x},I_{\Effect{t}}^+,\bI_y)\\
\wedge \; & (\Effect{t} < 0) \rightarrow \psi_{+}(\bI_{x},I_{\Effect{t}}^-,\bI_y) \\
\wedge \; & (\Effect{t} = 0) \rightarrow \psi_{+}(\bI_{x},I_{\Effect{t}}^0,\bI_y) \\
\wedge \; & \psi_{\cap}(\bI_y,I_{\Out{t}},\bI_t).
\end{align*}
Note that the above formula uses auxiliary variables $\bI_y$, $I_{\Effect{t}}^+$, $I_{\Effect{t}}^-$, $I_{\Effect{t}}^0$, and $I_{\Out{t}}$.

Let $\bI_Q = (\bI_{r})_{r \in Q}$ be the vector of $|Q| \cdot 4n$
variables, where each $\bI_{r}$ consists of $n$ intervals. Similarly, we define $\bI_T$ as a vector of $|T| \cdot 4n$ variables. We define the formula
\begin{align*}
\psi_{succ}(\bI_Q,\bI_{T}) \defeq \bigwedge_{t \in T} \psi_{t}(\bI_{\In{t}},\bI_{t}).
\end{align*}

Given a valuation $\mu$ and a vector $\bI_Q$, we define the mapping
$\mu(\bI_Q)$ as $q \mapsto \int(\mu(\bI_{q}))$. The
following holds by definition:

\begin{lemma}\label{lem:encoding-succs}
For every valuation $\mu$, $\psi_{succ}(\bI_Q,\bI_{T})$ holds iff for every $r \in Q$ the following holds:
\begin{align*}
\TheBetterPost_r(\mu(\bI_Q)) = \mu(\bI_Q)(r)\ \cup\ \bigcup_{\mathclap{\substack{t \in T,\\ \Out{t} = r}}} \mu(\bI_t).
\end{align*}
\end{lemma}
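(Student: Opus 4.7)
The plan is to reduce the equivalence to the already-established semantics of $\psi_+$ and $\psi_\cap$, and then match terms with the definition of $\TheBetterPost_r$ transition by transition. I would fix a valuation $\mu$ and analyse $\psi_t(\bI_{\In{t}}, \bI_t)$ for a fixed transition $t = (p,z,q) \in T$. Since $z = \Effect{t}$ is a fixed integer, exactly one of the three antecedents $z > 0$, $z < 0$, $z = 0$ is true, so two of the three guarded implications are vacuously satisfied and only one conjunct $\psi_+(\bI_{\In{t}}, I_z^{\star}, \bI_y)$, with $\star \in \{+,-,0\}$ matching the sign of $z$, actually constrains $\bI_y$. Using $\int(I_z^+) = (0,z]$, $\int(I_z^-) = [z,0)$, $\int(I_z^0) = [0,0]$ and the semantics of $\psi_+$, this conjunct forces $\int(\mu(\bI_y)) = \mu(\bI_Q)(\In{t}) + S_z$, where $S_z$ denotes the corresponding scaling interval. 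The final conjunct $\psi_\cap(\bI_y, I_{\Out{t}}, \bI_t)$ then forces $\mu(\bI_t) = \int(\mu(\bI_y)) \cap \tau(\Out{t})$. Composing, $\mu(\psi_t(\bI_{\In{t}}, \bI_t))$ holds iff $\mu(\bI_t) = (\mu(\bI_Q)(\In{t}) + S_{\Effect{t}}) \cap \tau(\Out{t})$.

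Next, since $\psi_{succ}$ is the conjunction of the $\psi_t$ over all $t \in T$, it holds iff the above equality is simultaneously satisfied for every transition. Fix a state $r \in Q$. Taking the union over transitions $t$ with $\Out{t}=r$ of the specifications of $\mu(\bI_t)$ and adjoining $\mu(\bI_Q)(r)$ yields exactly the set $\mu(\bI_Q)(r) \cup \bigcup_{t:\Out{t}=r}(\mu(\bI_Q)(\In{t}) + S_{\Effect{t}}) \cap \tau(r)$, which by unfolding coincides with $\TheBetterPost_r(\mu(\bI_Q))$ as defined in Section~\ref{sec:ptime}. Both directions of the equivalence follow.

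I do not expect a conceptual obstacle. The lemma is essentially a definitional unpacking; the only care-points are the case split on the sign of $\Effect{t}$ (resolved by noting that exactly one antecedent is ever true, making the other two implications vacuous) and the free auxiliary vector $\bI_y$ inside $\psi_t$ (resolved by noting that the one active implication pins down $\bI_y$ uniquely, up to choice of encoding). Everything else follows directly from the semantic specifications for $\psi_+$ and $\psi_\cap$ already established in the previous subsection.
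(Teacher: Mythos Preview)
Your unpacking of the forward direction is correct and matches the paper, which dispenses with the lemma in one line (``the following holds by definition''). You correctly observe that under any fixed valuation exactly one sign antecedent in $\psi_t$ is active, so $\psi_t$ pins down $\mu(\bI_t) = (\mu(\bI_Q)(\In{t}) + S_{\Effect{t}}) \cap \tau(\Out{t})$; conjoining over all $t$ and grouping by $\Out{t} = r$ recovers $\TheBetterPost_r$.

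There is, however, a gap in your claim that ``both directions of the equivalence follow.'' Your argument shows that $\psi_{succ}$ determines each $\mu(\bI_t)$ individually and that the \emph{resulting} union equals $\TheBetterPost_r$. The converse does not follow: the equation in the lemma constrains only the union $\bigcup_{\Out{t} = r}\mu(\bI_t)$, not the individual summands. For instance, with transitions $t_1=(p,1,r)$, $t_2=(p,2,r)$, $\tau(r)=\Q$, and $\mu(\bI_Q)(p)=\{0\}$, setting $\mu(\bI_{t_1})$ to encode $(0,2]$ and $\mu(\bI_{t_2})$ to encode $(0,1]$ leaves the union unchanged, so the lemma's equation still holds, yet $\psi_{t_1}$ fails and hence $\psi_{succ}$ fails. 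This is arguably a defect in the lemma statement itself rather than in your method---downstream only the forward implication together with the existence of suitable $\bI_t$ values is actually needed---but your write-up should not assert the full biconditional without flagging this.
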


This allows us to define the following formula:
\begin{align*}
\psi_{cand}(\bI_Q) \defeq \; & \psi_{succ}(\bI_Q,\bI_{T}) \\
\wedge \; & \bigwedge_{r \in Q} \;\; \bigwedge_{\mathclap{\substack{t \in T \\ \Out{t} = r}}} \psi_{\subseteq}(\bI_{t},\bI_{r}) \cap \psi_{max}(\bI_{r})
\wedge \psi_{\in}(a,\bI_{p}).
\end{align*}
Again, by definition we obtain:
\begin{lemma}
For all valuation $\mu$, it is the case that $\psi_{cand}(\bI_Q)$ holds iff $\mu(\bI_Q)$ is a reachability candidate.
\end{lemma}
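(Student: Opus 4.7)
The plan is to unpack both directions of the lemma directly from the definitions. The formula $\psi_{cand}(\bI_Q)$ is a conjunction of three pieces, and a reachability candidate $R \in \mathcal{R}_Q$ is characterised by two conditions: $a \in R(p)$ and $\TheBetterPost(R) \preceq R$. The correspondence between these will be almost conjunct-by-condition, with one subtlety around $\psi_{\subseteq}$ that $\psi_{max}$ is specifically designed to resolve.

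For the forward direction, I assume $\mu \models \psi_{cand}(\bI_Q)$. The conjunct $\psi_{\in}(a, \bI_p)$ immediately yields $a \in \int(\mu(\bI_p)) = \mu(\bI_Q)(p)$, which is the first half of being a reachability candidate. For the second half, I fix an arbitrary $r \in Q$; by \autoref{lem:encoding-succs} the conjunct $\psi_{succ}(\bI_Q, \bI_T)$ gives
\[
  \TheBetterPost_r(\mu(\bI_Q)) \;=\; \mu(\bI_Q)(r) \;\cup\; \bigcup_{t : \Out{t} = r} \int(\mu(\bI_t)).
\]
The conjunct $\psi_{max}(\bI_r)$ forces $\bI_r$ to encode the maximal-interval decomposition of $\int(\mu(\bI_r))$ (up to empty padding intervals), which upgrades the literal meaning of $\psi_{\subseteq}(\bI_t, \bI_r)$ into genuine set inclusion $\int(\mu(\bI_t)) \subseteq \int(\mu(\bI_r)) = \mu(\bI_Q)(r)$. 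Taking the union over transitions $t$ with $\Out{t} = r$ then yields $\TheBetterPost_r(\mu(\bI_Q)) \subseteq \mu(\bI_Q)(r)$.

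For the backward direction, I assume $\mu(\bI_Q)$ is a reachability candidate and construct an extension of $\mu$ to the auxiliary variables (the $\bI_T$ and the $I_{y_{i,j}}$ inside $\psi_{max}$) that satisfies every conjunct. For each $t \in T$, I set $\mu(\bI_t)$ to be an $n$-interval encoding of $(\mu(\bI_Q)(\In{t}) + I_{\Effect{t}}^{s}) \cap \tau(\Out{t})$, where $s \in \{+,-,0\}$ matches the sign of $\Effect{t}$; this fits in $n$ intervals by the bound used to define $\mathcal{R}_Q^n$. I may further assume (by re-valuating if necessary) that each $\bI_r$ is already a maximal-interval decomposition of $\mu(\bI_Q)(r)$. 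With these choices, $\psi_{succ}$ holds by \autoref{lem:encoding-succs}; $\psi_{\in}(a, \bI_p)$ holds because $a \in \mu(\bI_Q)(p)$; $\psi_{max}(\bI_r)$ holds by construction; and $\psi_{\subseteq}(\bI_t, \bI_r)$ follows from $\TheBetterPost(\mu(\bI_Q)) \preceq \mu(\bI_Q)$ together with the fact that any interval contained in a maximally-decomposed union must lie entirely inside one maximal sub-interval.

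The main obstacle is the subtle mismatch, already flagged in the text via the $(1,5) \cup (4,9) \subseteq [1,4) \cup [4,10]$ example, between the ``pointwise'' inclusion that $\psi_{\subseteq}$ literally expresses and the actual set inclusion on unions of intervals. Bridging this gap requires a small auxiliary observation: if $\bI_r$ is a disjoint decomposition into maximal intervals (enforced by $\psi_{max}$), then any interval that is set-theoretically contained in the union must already be contained in one of its pieces, so $\psi_{\subseteq}$ and genuine inclusion coincide. Once this is in hand, both directions of the lemma reduce to bookkeeping on the conjuncts.
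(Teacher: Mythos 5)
Your proof is correct and is essentially the definition-unpacking the paper itself gestures at (it offers no proof beyond ``by definition we obtain''), including the key point that $\psi_{max}$ is what makes $\psi_{\subseteq}$ coincide with genuine set inclusion --- though note that in the forward direction $\psi_{\subseteq}$ alone already implies set inclusion, and $\psi_{max}$ is only really needed in the backward direction. The one wrinkle is that your ``re-valuation'' of $\bI_r$ into a maximal decomposition changes $\mu$ on the $\bI_Q$ variables themselves (not just auxiliaries), so the lemma's literal ``for all $\mu$, iff'' phrasing only holds up to choosing the encoding; this is an imprecision in the paper's statement rather than a flaw in your argument, and it is harmless where the lemma is used, since $\bI_Q$ sits under an existential quantifier in the final formula $\psi$.
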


Let $Q'$ be a disjoint copy of $Q$.
We write $\bX$ for the vector of variables corresponding to parameters of the COCA,
$\bY$ for the vector of all remaining variables used in the quantifier-free formula $\psi_{cand}(\bI_Q)$ (not just $\bI_Q$), and $\bZ$ for the vector of remaining variables
used in the quantifier-free formula $\psi_{cand}(\bI_{Q'})$.
Further, recall $I_p$ encodes $\guard{p}$. We define:
\begin{align*}
\psi \defeq \exists \bX\ \exists \bY\ \forall \bZ \;  &\psi_{cand}(\bI_Q) \wedge \psi_{cand}(\bI_{Q'}) \wedge \psi_{\in}(a, I_p) \\
\wedge \;  \bigwedge_{r \in Q} &\psi_{\subseteq}(\bI_{r}, \bI_{r'}) \wedge \psi_{max}(\bI_{r}) \wedge \psi_{\in}(b, \bI_q).
\end{align*}

Since $\psi$ holds iff $p(a) \steps{*} q(b)$, by \autoref{corollary:characterisation} we
obtain:

\begin{theorem}
  The (existential) reachability problem for parametric COCAs belongs
  in $\SIGTWO$.
\end{theorem}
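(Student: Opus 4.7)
The plan is to show that the $\exists\forall$-sentence $\psi$ constructed in Subsection~\ref{subsection:encoding} has polynomial size and is true iff there is a parameter valuation $\mu$ making $p(a) \steps{*} q(b)$ in $\Pp^\mu$; the $\SIGTWO$ upper bound then follows directly from \autoref{thm:foq}.

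First, I would verify the syntactic properties of $\psi$. All atomic building blocks ($\varphi_{\in}, \varphi_{+}, \varphi_{\cap}, \varphi_{\subseteq}, \varphi_{\emptyset}, \varphi_{dis}$) and hence the compound formulas $\psi_{cand}, \psi_{succ}, \psi_{max}, \psi_{\in}, \psi_{\subseteq}$ are quantifier-free linear, so $\psi$ is indeed of the form $\exists \bX\, \exists \bY\, \forall \bZ\, \varphi$ with $\varphi$ quantifier-free. By \autoref{corollary:c-valid} together with \autoref{lemma:boundedunion}, every reachability function $\reach_{p(a)}$ (in any $\Pp^\mu$) decomposes into at most $n = O(|Q|)$ maximal intervals per state; this justifies the size of $\bI_Q$ and $\bI_{Q'}$ and yields a formula of polynomial size in the input.

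Second, I would establish the semantic equivalence. For the ``if'' direction, given a valuation $\mu$ with $p(a) \steps{*} q(b)$ in $\Pp^\mu$, I would set $\bX$ to $\mu$ and $\bI_Q$ to an encoding of $\reach_{p(a)}$ in $\Pp^\mu$ (padded with empty encodings up to $n$ intervals per state). Since $\reach_{p(a)}$ is a reachability candidate (by the $\Rightarrow$ direction of \autoref{proposition:characterisation}), $\psi_{cand}(\bI_Q)$, $\psi_{\in}(a, I_p)$ and $\psi_{\in}(b, \bI_q)$ all hold. For any assignment to $\bZ$ for which $\psi_{cand}(\bI_{Q'})$ holds, $\mu(\bI_{Q'})$ is a reachability candidate in $\mathcal{R}_Q^n$, hence $\reach_{p(a)} \preceq \mu(\bI_{Q'})$ by \autoref{corollary:characterisation}, which (assuming $\bI_{r'}$ is in maximal form via $\psi_{max}$) is captured by $\psi_{\subseteq}(\bI_r, \bI_{r'})$. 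Conversely, if $\psi$ holds, then $\mu(\bI_Q)$ is a reachability candidate containing $a$, and it is $\preceq$-minimal among reachability candidates in $\mathcal{R}_Q^n$ (by instantiating $\bI_{Q'}$ to any such candidate within the universal part); \autoref{corollary:characterisation} then forces $\mu(\bI_Q) = \reach_{p(a)}$ in $\Pp^\mu$, and $\psi_{\in}(b, \bI_q)$ yields $p(a) \steps{*} q(b)$.

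The main obstacle is the careful interplay between the existential witness and the universal adversary: the universal quantifier effectively needs to range over reachability candidates in $\mathcal{R}_Q^n$ rather than over arbitrary interval encodings, which requires conjoining $\psi_{cand}(\bI_{Q'})$ with the appropriate maximality clause so that $\psi_{\subseteq}$ faithfully expresses the mapping inclusion $\preceq$. Once this is in place, the correctness of the quantifier-free encoding operations (Minkowski sums, intersections, membership) reduces the problem to deciding an $\exists\forall$-sentence of \foq of polynomial size, and \autoref{thm:foq} closes the argument.
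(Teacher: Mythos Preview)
Your proposal is correct and follows exactly the paper's approach: construct the polynomial-size $\exists\forall$-sentence $\psi$, argue via \autoref{corollary:characterisation} that it holds iff some valuation witnesses $p(a)\steps{*}q(b)$, and invoke \autoref{thm:foq}. You spell out the two directions of the semantic equivalence and the role of $\psi_{max}$ in making $\psi_{\subseteq}$ faithful, which the paper leaves implicit in its one-line justification preceding the theorem.
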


\subsection{Integer valuations}

We briefly consider parametric COCAs where only updates can be parameterised.
In this setting, a rational valuation that witnesses reachability can be turned into an \emph{integer} valuation witnessing reachability. This follows by rescaling
the factors of the witnessing run so that it remains admissible.

\begin{lemma}\label{lemma:multiply-solutions}
  Let $\mu$ be a valuation under which $p(a) \steps{*} q(b)$. For any
  valuation $\mu'$ such that $\mu'(x) = \lambda \mu(x)$ with
  $\lambda \in \N_{\geq 1}$, it is the case that $p(a) \steps{*} q(b)$
  under $\mu'$.
\end{lemma}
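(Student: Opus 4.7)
The plan is to exhibit an explicit rescaling of the factors $\alpha_i$ used in the witnessing run $\rho$. Since we are in the update-parametric setting, the guards $\tau(r)$ do not contain parameters, so $\mu$ and $\mu'$ induce the \emph{same} guarded COCA structure except on the effects of parametric transitions, where $\mu'$ yields effects $\lambda$ times larger than $\mu$.

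Concretely, I would proceed as follows. Write the witnessing run as $\rho = \alpha_1 t_1 \cdots \alpha_n t_n$, where each $t_i = (q_{i-1}, z_i, q_i)$ has $z_i \in \Z \cup X$. Define a new run $\rho' = \alpha_1' t_1 \cdots \alpha_n' t_n$ on the same underlying path by
\[
  \alpha_i' \defeq
  \begin{cases}
    \alpha_i / \lambda & \text{if } z_i \in X,\\
    \alpha_i & \text{if } z_i \in \Z.
  \end{cases}
\]
Because $\lambda \ge 1$ and $\alpha_i \in (0,1]$, we have $\alpha_i' \in (0,1]$, so $\rho'$ is indeed a syntactically valid run. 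The key calculation is that the $i$-th scaled effect is unchanged: if $z_i \in \Z$ then both runs contribute $\alpha_i z_i$, and if $z_i = x \in X$ then $\rho$ contributes $\alpha_i \mu(x)$ while $\rho'$ contributes $(\alpha_i/\lambda)\cdot \lambda\mu(x) = \alpha_i \mu(x)$ as well.

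It follows by a straightforward induction on the prefix length that the sequence of configurations $q_0(a_0), \dots, q_n(a_n)$ traversed by $\rho'$ under $\mu'$ coincides with that traversed by $\rho$ under $\mu$. In particular $(q_0,a_0)=p(a)$ and $(q_n,a_n)=q(b)$. Since the guards $\tau(q_i)$ are identical under $\mu$ and $\mu'$ (no parameters occur in them) and $\rho$ was admissible, each configuration satisfies $a_i \in \tau(q_i)$, so $\rho'$ is admissible under $\mu'$ and witnesses $p(a) \steps{*} q(b)$.

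There is no real obstacle here; the only point that requires a line of justification is the observation that guards are unaffected by the reparameterisation, so that preservation of configurations immediately yields preservation of admissibility. Everything else is a one-line arithmetic identity.
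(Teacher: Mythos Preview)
Your proposal is correct and matches the paper's proof essentially line for line: both rescale each parametric transition's factor by $1/\lambda$ and leave nonparametric ones unchanged, then observe that the resulting run visits the same sequence of configurations and hence remains admissible. Your explicit remark that guards are nonparametric in this setting is the one point the paper leaves implicit.
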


\begin{proof}
  Let $\lambda \in \N_{\geq 1}$ and let $\mu'$ be defined
  w.r.t.\ $\mu$ and $\lambda$. Let $s(v) \steps{\alpha t} s'(v')$ be
  consecutive configurations from the run $\rho$ witnessing $p(a)
  \rightarrow_\rho q(b)$ under valuation $\mu$. If $\Effect{t} \in
  \Z$, then no rescaling is needed as the update of $t$ is
  nonparametric. Otherwise, we have $v' - v = \alpha \cdot
  \mu(\Effect{t})$. Let $\beta \defeq \alpha / \lambda$. Since $\alpha
  \in (0, 1]$ and $\lambda \geq 1$, we have $\beta \in (0,
  1]$. Therefore, we have $s(v) \steps{\beta t} s'(v')$. Hence, by
rescaling each transition, we obtain a run $\rho'$
such that $p(a) \steps{\rho'} q(b)$ under $\mu'$.
\end{proof}

Now, consider a rational valuation $\mu$ witnessing $p(a) \steps{*}
q(b)$. Since $\mu$ is rational, each parameter value $\mu(x)$ can be
represented as a fraction $a_x / b_x$. By
\autoref{lemma:multiply-solutions}, we know that valuation $\mu'(x) =
\lambda \mu(x)$, where $\lambda \defeq \prod_{x\in X} b_x$, also
witnesses reachability. Moreover, it is integral, hence:

\begin{corollary}
  The (existential) reachability problem for parametric COCAs, where
  valuations must be integral, is equivalent to the rational variant
  if guards are nonparametric.
\end{corollary}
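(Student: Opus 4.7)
The plan is to observe that one direction is immediate and the other is essentially a clearing-of-denominators argument that has already been set up by the preceding \autoref{lemma:multiply-solutions}. Clearly, every integral valuation witnessing $p(a) \steps{*} q(b)$ is also a rational valuation witnessing it, so the integral variant reduces to the rational variant without any work. For the reverse direction, I will take any rational valuation $\mu$ that witnesses reachability and construct an integral valuation $\mu'$ that witnesses the same instance.

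Concretely, since $X$ is finite and $\mu \colon X \to \Q$, I can write $\mu(x) = a_x/b_x$ with $a_x \in \Z$ and $b_x \in \N_{\ge 1}$ for each $x \in X$. Setting $\lambda \defeq \prod_{x \in X} b_x \in \N_{\ge 1}$ makes $\mu'(x) \defeq \lambda \mu(x)$ an integer for every $x$, so $\mu'$ is an integral valuation. By \autoref{lemma:multiply-solutions}, the rescaled valuation $\mu'$ still witnesses $p(a) \steps{*} q(b)$, which is exactly what we need.

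The one subtlety, and the reason the ``guards are nonparametric'' hypothesis is essential, is that \autoref{lemma:multiply-solutions} is proved by rescaling each transition scalar $\alpha$ to $\beta \defeq \alpha/\lambda$, which yields a run with the same configuration sequence under $\mu'$ as the original had under $\mu$. The visited counter values are therefore unchanged, so admissibility is preserved precisely because the intervals $\tau(q)$ do not depend on the parameters. If guards were permitted to mention parameters, then replacing $\mu$ by $\mu'$ would also scale the guard endpoints, and the same configuration sequence need not remain admissible; there would then be no reason to expect an integral witness to exist even when a rational one does. Under the nonparametric-guard assumption, however, the argument above is complete, proving equivalence of the two variants.
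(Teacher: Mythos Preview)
Your proposal is correct and follows essentially the same argument as the paper: clear denominators by setting $\lambda \defeq \prod_{x\in X} b_x$ and invoke \autoref{lemma:multiply-solutions} to conclude that $\mu'(x) \defeq \lambda\mu(x)$ is an integral valuation still witnessing reachability. Your additional remark explaining why the nonparametric-guards hypothesis is essential is a nice touch that the paper leaves implicit.
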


\subsection{Hardness result and acyclic parametric COCA}
\label{sec:dag-parametric}

To conclude our treatment of parametric COCAs, we establish
\NP-hardness of the reachability problem, and a matching \NP\ upper
bound for the special case of acyclic COCAs.

\begin{theorem}\label{thm:acyclic-np-complete}
  The reachability problem for parametric COCAs is \NP-hard. Moreover,
  this holds if the underlying graph is acyclic, and if parameters
  occur on transitions, guards or both. In the latter special cases,
  the problem is \NP-complete.
\end{theorem}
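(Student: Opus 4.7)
The plan is in two halves: membership in \NP\ for the acyclic case, and \NP-hardness for each of the three parameter-placement subcases (transitions only, guards only, or both), already on acyclic instances.

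For the \NP\ upper bound, we exploit that every path in an acyclic graph has length at most $|Q|$. The algorithm guesses a path $\pi = t_1 \cdots t_n$ from $p$ to $q$ together with the sign (positive, negative, or zero) of $\mu(x)$ for each parameter $x \in X$. For each transition $t_i$ with update $z_i$ we introduce a fresh variable $\gamma_i$ standing for the counter contribution $\alpha_i \cdot \mu(z_i)$; under the guessed signs, the membership $\alpha_i \in (0,1]$ translates to linear (strict or non-strict) inequalities relating $\gamma_i$ and $\mu(z_i)$. The counter recurrence $a_i = a_{i-1} + \gamma_i$, the boundary conditions $a_0 = a$, $a_n = b$, and the guard memberships $a_i \in \mu(\tau(q_i))$ are also linear in the $\gamma_i$, $a_i$ and $\mu(x)$. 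Feasibility of this linear system is decidable in polynomial time, so the overall procedure is in \NP.

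For \NP-hardness we reduce from $3$-SAT with variables $v_1, \ldots, v_k$ and clauses $C_1, \ldots, C_m$, using an acyclic parametric COCA with parameters $x_1, \ldots, x_k$. The construction is a chain of gadgets each of which begins and ends with counter value $0$, enforced by the singleton guard $[0,0]$. In the \emph{choice phase}, the gadget for $v_i$ offers two branches. For parameters on transitions, the ``false'' branch is $+x_i$ to a state with guard $[0,0]$, forcing $\mu(x_i) = 0$; the ``true'' branch is $+x_i$ to a state with guard $[1,+\infty)$, forcing $\alpha_i \mu(x_i) \ge 1$, hence $\mu(x_i) \ge 1$. Each branch closes with a $-x_i$ to restore counter $0$. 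In the \emph{clause phase}, each $C_j$ is a disjunctive gadget with three branches, one per literal: a positive literal $v_i$ is tested by $+x_i$ to a state with guard $[1,+\infty)$, while a negative literal $\neg v_i$ is tested by $+x_i$ to a state with guard $[0,0]$. Reachability of the final state is equivalent to satisfiability of the formula. The guards-only case is handled by a dual construction, in which updates become $\pm 1$ and $\mu(x_i) \in \{0,1\}$ is pinned by placing singleton guards $[x_i, x_i]$ on states whose counter value has been forced to $0$ or $1$ via $[0,0]$ and $[1,1]$ guards; the case of parameters in both places is immediate.

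The main obstacle is that the continuous scaling $\alpha \in (0,1]$ \emph{a priori} defeats any discrete encoding. The crux of each reduction is that composing a point-valued guard $[c,c]$ with a $+x$ or $+1$ update collapses the pair $(\alpha, \mu(x))$ (respectively the scaling alone) to finitely many consistent rational solutions, restoring the binary choice needed to encode boolean assignments; the acyclic structure and bounded path length then permit the \NP\ algorithm above to recognise feasibility.
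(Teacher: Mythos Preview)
Your proposal is correct and follows essentially the same line as the paper. For the \NP\ upper bound, the paper encodes the problem as an existential sentence of \foq\ via interval operations (Minkowski sums and intersections describing $\Post{\pi}{a}$), whereas you guess the path and parameter signs and reduce directly to linear-arithmetic feasibility; both exploit the $|Q|$ path-length bound in the acyclic case and land in existential rational linear arithmetic, so the difference is presentational rather than conceptual. For hardness, your gadgets are minor variants of the paper's: the paper pins $\mu(x_i)\in\{0\}\cup[1,\infty)$ using a $[0,1]$/$[1,1]$ filter in the updates-only case and $[x_i,x_i]$ singleton guards after $[0,0]$/$[1,1]$ checkpoints in the guards-only case, while you achieve the same effect with $[0,0]$ versus $[1,+\infty)$ targets and a symmetric $-x_i$ reset---the key mechanism (point guards collapsing the continuous scaling to a discrete choice) is identical. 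One small omission in your sketch: you should state explicitly how the clause gadget resets the counter to~$0$ after a successful literal test (a $-x_i$ step to a $[0,0]$-guarded state, just as in your choice gadget), but this is routine.
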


\begin{proof}
  We first show membership in \NP\ for the acyclic case. To determine whether
  $p(a) \rightarrow_{*} q(b)$ under some valuation $\mu$, we encode the
  problem into an existential \foq sentence. Since the language of such
  sentences is known to be in \NP{}~\cite{Son85} the result will follow.
  
  We start with the observation that $\Post{\pi}{a}$ is an interval for every
  path~$\pi$ since it can be described
  as a chain of Minkowski sums and intersections
  starting from $[a,a]$.
  Recall the formula $\psi_t(\bI_x,\bI_t)$ from \autoref{subsection:encoding}.
  We assume 
  there is a bijection $f$ between $T$ and $\set{1,\ldots,|T|}$ and
  identify $t\in T$ with $f(t)$.
  Let
  $\varphi_{f(t)}(I_x,I_t)$ be defined in the same way as $\psi_t(\bI_x,\bI_t)$
  but replacing vectors of intervals 
  with intervals and $\psi_+$, $\psi_\cap$
  with $\varphi_+$, $\varphi_\cap$. Intuitively, $\varphi_{f(t)}(I_x,I_t)$ means that
  $I_t = \Post{t}{I_x}$.
  
  Given two transitions $t,t' \in T$ we also need a formula 
  that checks $\Out{t} = \In{t'}$. We define this by
  \(
   \varphi_{next}(x,y) \coloneqq \; \bigwedge_{i,j = 1}^{|T|} (x = i \wedge y = j) \rightarrow t_{ij},
  \)
where $t_{ij}$ is a Boolean constant that is true iff $\Out{f^{-1}(i)} = \In{f^{-1}(j)}$.
Similarly, we define $\varphi_{init}(x)$, $\varphi_{fin}(x)$ as the formulas which are true iff
$x$ is a transition that starts in state $p$ and finishes in state $q$, respectively.
  

  

Let $\bT = (t_1,\ldots,t_{|Q|})$ and $\bI = (I_1\ldots,I_{|Q|})$ be vectors of variables representing $|Q|$ transitions and $|Q|$ intervals.
We write $I_0$ for the vector $(a,a,1,1)$ encoding $[a,a]$.
Finally, let $\bX$ be the vector of variables encoding the parameters of the COCA. We define the final formula
  \begin{align*}
   \exists \bX \ \exists \bT \ \exists \bI \ & \varphi_{init}(t_1) 
   %
   \wedge \bigwedge_{i = 1}^{|Q|} \bigvee_{j = 1}^{|T|} t_i = j \wedge
   \varphi_{j}(I_{i-1},I_i)\\
   %
   %
   %
   {} \wedge {} & \bigvee_{m = 1}^{|Q|} \varphi_{fin}(t_m) \wedge \varphi_{\in}(b,I_m)
   \wedge \bigwedge_{i = 1}^{m-1} \varphi_{next}(t_{i},t_{i+1}).
     %
  \end{align*}
  By definition,
  this sentence 
  holds iff $p(a) \rightarrow_{*} q(b)$. Indeed, since the COCA is
  acyclic all runs are of length at most $|Q|$.
  Thus, in the first line
  we ``guess'' transitions per state and ``compute'' their reachability
  set; then, in the second line we check whether they form a path $\pi$ and
  whether $p(a) \rightarrow_\pi q(b)$.
  
  It remains to show \NP-hardness for the acyclic case,
  where parameters either only occur on transitions or  on
  guards.
  In both cases, we give a reduction from 3-SAT.
  Let $\varphi = \bigwedge_{1 \leq j
  \leq m} C_j$ be a 3-CNF formula over variables $X = \{x_1, \ldots,
  x_n\}$. 
  
  Let us give two acyclic parametric COCAs $\Pp$ and $\Pp'$, both with parameters $X$.
  Each one will guess an assignment to $X$, and check whether it satisfies
  $\varphi$.
  Additionally, $\Pp$ uses parameters only on guards; $\Pp'$, only on updates.
  We only sketch $\Pp$. While
  the exact gadgets used in the construction for $\Pp'$ differ,
  they perform the same functions (see the appendix).

  \begin{figure}[!h]
    \begin{center}
      \begin{tikzpicture}[auto, thick, transform shape, scale=0.9]
  \tikzstyle{astate} = [state, minimum size=15pt, inner sep=0pt];
  \node[astate, label={above:$[0, 0]$}]                                         (g0) {$p_i$};
  \node[astate, label={below:$[1, 1]$}, below right=0.75cm and 0.25 of g0]  (g1) {};
  \node[astate, label={below:$[x_i, x_i]$},     right=0.75cm of g1]                 (g2) {};
  \node[astate, label={above:$[x_i, x_i]$}, right=0.75cm of g0]                 (g3) {};
  \node[astate, label={above:$[0, 0]$},     right=0.75cm of g3]                 (g4) {$q_i$};

  \path[->]
  (g0) edge node[swap] {$1$}    (g1)
  (g0) edge node       {$0$}    (g3)
  (g3) edge node       {$0$}    (g4)
  (g1) edge node       {$0$}    (g2)
  (g2) edge node[swap] {$-1$}   (g4)
  ;

  \node[astate, label={above:$[0, 0]$}, right=1.5cm  of g4, yshift=-0.5cm] (c0) {$r_j$};
  \node[astate, label={above:$[x_2, x_2]$}, right=1cm    of c0] (c2) {};
  \node[astate, label={above:$[x_1, x_1]$}, above=0.75cm of c2] (c1) {};
  \node[astate, label={above:$[x_4, x_4]$}, below=0.75cm of c2] (c3) {};
  \node[astate, label={above:$[0, 0]$}, right=1cm    of c2] (c4) {$s_j$};

  \path[->]
  (c0) edge node       {$1$} (c1)
  (c0) edge node       {$1$} (c2)
  (c0) edge node[swap] {$0$} (c3)
  (c1) edge node[above left, xshift=10pt, yshift=+5pt] {$-1$}  (c4)
  (c2) edge node       {$-1$}  (c4)
  (c3) edge node[below left, xshift=10pt, yshift=-5pt] {$0$}  (c4)
  ;
\end{tikzpicture}
    \end{center}
    \caption{Gadgets of the reduction from 3-SAT, where parameters occur only on guards, for: variable $x_i$ (\emph{left}) and clause $C_j = (x_1 \lor x_2 \lor \neg x_4)$ (\emph{right}).}
    \label{fig:np:complete-guards}
  \end{figure}
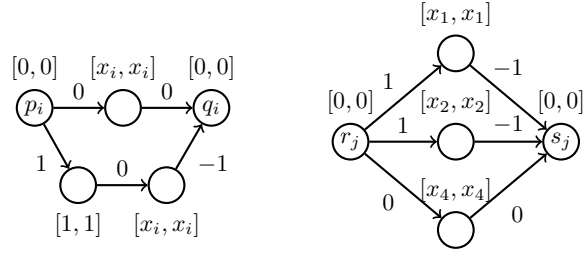

  The first part is done by sequentially composing $n$ copies of the left gadget depicted
  in \autoref{fig:np:complete-guards}. The gadget functions as follows:
  (1)~state $p_i$ is entered with counter value $0$; (2)~the counter is set to $x_i$;
  (3)~membership of the counter value in $\{0,1\}$ is checked; and (4)~the counter
  is reset to zero upon leaving to $q_i$.
  The only way to traverse the chain of
  $n$ such gadgets from $p_1$ to $q_n$ is to have $x_i \in \{0, 1\}$
  for each $x_i \in X$.

  The second part is achieved by chaining a gadget for each clause
  similar to the one depicted on the right-hand side of
  \autoref{fig:np:complete-guards} for $C_j = (x_1 \lor x_2 \lor \neg
  x_4)$. 
  In words, it (1)~enters state $r_j$ with the counter value set to $0$;
  (2)~nondeterministically picks a variable $x_i$ of some
  literal of $C_j$ and increments the counter by $x_i$; (3)~checks
  whether the counter holds the right value w.r.t.\ the literal
  polarity; and (4)~resets the counter to zero upon leaving to state
  $s_j$. Thus, the chain of gadgets can be traversed from $r_1$ to
  $s_m$ iff $\varphi$ is satisfied by the assignment.

  Altogether, these statements are equivalent:
  (1)~formula $\varphi$ is satisfiable; (2)~there exists a
  valuation $\mu \colon X \to \Q$ such that $p_1(0) \steps{*} s_m(0)$
  holds in $\Pp^\mu$; and (3)~there exists a
  valuation $\mu' \colon X \to \Q$ such that $p_1(0) \steps{*} s_m(0)$
  holds in $\Pp'^{\mu'}$
\end{proof}

\section{Conclusion}
\label{sec:conclusion}
In this work, we have introduced COCA and guarded COCA as
over-approximations of SOCA, and we have given efficient algorithms
for their reachability problems. For both models, the only lower bound
we are aware of is the $\NL$-hardness that follows trivially from the
directed-graph reachability problem. It thus remains open whether our
algorithms are computationally optimal. Additionally, we have shown
that the reachability problem lies in the polynomial hierarchy for
parametric COCA, in contrast to the discrete variant whose
decidability is unknown. We leave open whether the reachability
problem for parametric COCA with \emph{integer}-valued parameters is
decidable in general (\ie, when guards can also be parametric).


\IEEEtriggeratref{13} 
\bibliographystyle{IEEEtran}
\bibliography{IEEEabrv,bibliography}

\clearpage
\appendix
\label{sec:appendix}

\subsection{Missing proofs of \autoref{sec:nc2}}

\lemmaGraphReach*

\begin{proof}
  Let us explain how one can check emptiness of $S$
  when only one of the conditions is required by reducing
  the problem to standard graph reachability, which is in $\NL \subseteq \NC^2$.

  For each condition, we focus only on one of the two
  stated cases.
  The other case will follow similarly.
  Each time, we will give a graph $H$ such that
  reachability between two fixed nodes of $H$ corresponds
  to reachability from $p$ to $q$ in $G$ via a
  path that satisfies the condition.

  \medskip
  \noindent\ref{itm:has-pos} Let us define $H$ as joining
  $G$ with a modified copy $\overline{G}$ 
  which remains in $\overline{G}$ with nonpositive edges,
  and moves into $G$ with positive edges.
  More formally, let $H \defeq (Q',E')$, where
  $Q' \defeq Q \cup \{\overline{q} \mid q \in Q\}$.
  Further, $E' \defeq E \cup \{(\overline{p},z,\overline{q}) \mid 
  (p,z,q) \in E, z \leq 0\} \cup \{(\overline{p},z,q) \mid 
  (p,z,q) \in E, z > 0\}$.
  It is easy to see any path $\overline{\pi}$ from $\overline{p}$ to $q$ in $H$
  corresponds to a path $\pi$ from $p$ to $q$ in $G$ such that
  $\Effectp{\pi} \neq 0$.

  \medskip
  \noindent\ref{itm:no-neg} We set $H \defeq (Q, E')$,
  where $E' \defeq \{(p',z,q') \in E \mid z \leq 0\}$.
  Clearly, reachability from $p$ to $q$ in $H$ is
  equivalent to reachability from $p$ to $q$ in $G$ via
  a path $\pi$ with $\Effectp{\pi} = 0$.

  \medskip
  \noindent\ref{itm:first:pos} We again 
  define $H$ as joining $G$ with a modified copy $\overline{G}$.
  In $\overline{G}$, we omit all positive edges,
  while edges with weight zero remain in $\overline{G}$
  and negative edges lead to $G$. 
  Formally, let $H \defeq (Q', E')$,
  where $Q' \defeq Q \cup \{\overline{q} \mid q \in Q\}$ and 
  $E' \defeq E \cup \{(\overline{p},0,\overline{q}) \mid (p,0,q) \in E\}
  \cup \{(\overline{p},z,q) \mid (p,z,q) \in E \wedge z < 0\}$.
  A path $\overline{\pi}$ from $\overline{p}$ to $q$
  in $H$ corresponds to a path $\pi$ from $p$ to $q$ in $G$
  such that $\Effect{\first{\pi}} < 0$.

  \medskip
  \noindent\ref{itm:last:neg} We again join $G$
  with a modified copy $\overline{G}$.
  Now, $\overline{G}$ omits all positive edges,
  and for each negative edge in $G$,
  we add a copy that leads from $G$ to $\overline{G}$.  
  We define $H \defeq (Q', E')$
  with $Q' \defeq Q \cup \{\overline{q} \mid q \in Q\}$ and
  $E' \defeq E \cup \{(p,z,\overline{q}) \mid (p,z,q) \in E \wedge z < 0\}
  \cup \{(\overline{p},z,\overline{q}) \mid (p,z,q) \in E \wedge z \leq 0\}$.
  A path $\overline{\pi}$ from $p$ to $\overline{q}$
  in $H$ corresponds to a path $\pi$ from $p$ to $q$ in $G$
  such that $\Effect{\last{\pi}} < 0$.

  \medskip

  When we wish to require several conditions at once,
  note that for each condition, we constructed a graph $H$ from a given input graph $G$.
  To require many conditions at once, we simply apply the transformations
  for each condition sequentially, and obtain a graph $H'$
  such that paths of $H'$ satisfy all
  imposed conditions and correspond to paths in the original graph $G$. Observe that $H'$ is of polynomial size.

  Finally, let us argue that the following value can be computed in $\NC^2$: $\mathrm{opt}\{w(\pi) \mid \pi \in S \text{ and
  } \abs{\pi} \leq \abs{Q})\}$,
  where $\mathrm{opt} \in \{\min,
  \max\}$ and $w \in \{\Delta^+, \Delta^-\}$. Let us first deal with $w = \Delta^+$, for which it suffices to
  treat edges with negative weight as having zero weight.

  The
  problem of finding a shortest weighted path in a graph with
  edges of nonnegative weights is in $\NC^2$ (\eg, see~\cite[Example 12.4]{bovet1994introduction}).
  The procedure relies on the fact that 
  there must be an acyclic shortest path, and hence that it suffices to consider paths of length at most $\abs{Q}$.
  We can easily adapt the standard procedure for maximisation.
  Indeed, it successively minimises paths of length $1, 2, 4, 8, \dots, |Q|$.
  By maximising rather than minimising, it follows
  that the claim holds for $\mathrm{opt} \in \{\min,
  \max\}$. Note that finding a longest simple path is \NP-complete, while we obtain $\NC^2$ because we may find a nonsimple path (of length at most $|Q|$).
  
  Since each of the conditions \ref{itm:has-pos}--\ref{itm:last:neg}
  is achieved by transforming the input graph into another graph of polynomial size,
  this holds also if we require any subset of these conditions.

   The case of $\Delta^-$ can be handled similarly, \eg\ by flipping the sign of the weights and the optimisation type ($\max$/$\min$).
\end{proof}

\propScaleDown*

\begin{proof}
  We will show that for all $i \in \{0, \ldots, |\rho|\}$, either $a
  \leq a + \effect{\FromTo{\beta \rho}{1}{i}} \leq
  a+\effect{\FromTo{\rho}{1}{i}}$ or $a \geq a + \effect{\FromTo{\beta
      \rho}{1}{i}} \geq a+\effect{\FromTo{\rho}{1}{i}}$.  Since $a +
  \effect{\FromTo{\rho}{1}{i}} \in \tau$ holds by the admissibility of
  $\rho$ from $a$, it follows that $a + \effect{\FromTo{\beta
      \rho}{1}{i}} \in \tau$, and so that $\beta \rho$ is admissible.

  By definition, we have $a + \effect{\FromTo{\beta \rho}{1}{i}} = a +
  \beta \effect{\FromTo{\rho}{1}{i}}$.  Additionally, $\beta \in
  \ZeroOne$. Hence, if $\effect{\FromTo{\rho}{1}{i}} \geq 0$, then we
  have $a + \effect{\FromTo{\rho}{1}{i}} \geq a +
  \effect{\FromTo{\beta \rho}{1}{i}} \geq a$.  If
  $\effect{\FromTo{\rho}{1}{i}} < 0$, then $a +
  \effect{\FromTo{\rho}{1}{i}} \leq a + \effect{\FromTo{\beta
      \rho}{1}{i}} < a$, so we are done.
\end{proof}

\lemmaIntervals*

\begin{proof}
  We only prove $(a, b) \subseteq \Post{p,q}{a}$ as the other
  inclusion is symmetric. We assume that $a < b$ as we are otherwise
  done. Let $c \in (a, b)$. Since $b \in \cl{\Post{p,q}{a}}$, there
  exists $b' \in \Post{p,q}{a}$ such that $b' \in [c, b]$. Let $\rho$
  be an admissible run from $p(a)$ to $q(b')$. By definition,
  $\Effect{\rho} = b' - a$. Let $\beta \defeq (c - a) / (b' - a) \in
  (0, 1]$. By \autoref{claim:scaling-down-gives-run}, $\beta \rho$ is
    admissible from $p(a)$. Since $\Effect{\beta \rho} = c - a$, this
    concludes the proof of the main claim.
\end{proof}

\propObs*

\begin{proof}
  \leavevmode
  \begin{enumerate}[label=(\alph*)]
  \item Let $a \defeq \inf \tau$ and $c \defeq
    \sup\cl{\Post{p,q}{}}$. By \autoref{lemma:intervals}, we have $(a,
    c) \subseteq \cl{\Post{p,q}{a}}$. Since the latter is closed by
    definition, we have $\inf \cl{\Post{p,q}{a}} = \inf \tau$.

  \item The proof is symmetric to~(a).
    
  \item Since $v \notin \{\inf \tau, \sup \tau\}$, there is a small
    enough $\epsilon \in (0, 1]$ such that $v + |\epsilon \cdot
      \Effect{\FromTo{\rho}{1}{i}}| \in \tau$ for all $i \in \set{1,
        \ldots, |\rho|}$. By definition, $\epsilon \rho$ is admissible
      from $v$. Let $v_\beta \defeq v + \beta \cdot \Effect{\rho}$. By
      \autoref{claim:scaling-down-gives-run}, $v \steps{\beta \rho}
      v_\beta$ is admissible for every $\beta \in (0,
      \epsilon]$. Moreover, $\lim_{\beta \to 0} v_{\beta} =
        \lim_{\beta \to 0} v + \beta \cdot \Effect{\rho} = v$.\qedhere
  \end{enumerate}
\end{proof}

\lemmaSupfInf*

\begin{proof}
  Let $\theta = t \pi$ where $t$ is the first transition of $\theta$
  and $\pi$ is the remaining path. Let $r \defeq \In{t}$. Since $a \in
  \enab{\Paths{p}{r}}$, there is an admissible run $\rho_1$ from
  $p(a)$ that ends in state $r$. Similarly, since $\Paths{r}{q} \neq
  \emptyset$, there is a run $\rho_3$ from $r$ to $q$.

  We only show~\ref{l:f} as~\ref{l:e} is symmetric. We assume that $a
  < \sup \tau$, as otherwise we are done by
  \autoref{prop:obs}\ref{itm:a=sup}.
  We make a case distinction on
  whether $\sup \tau = \infty$.

  \emph{Case $\sup \tau \neq \infty$}.\ We must show that we can reach
  values arbitrarily close to $\sup \tau$, \ie\ that for every
  $\epsilon \in (0, 1]$, there exists a value $b \in [\sup \tau
  - \epsilon, \sup \tau)$ and a run $p(a) \steps{\rho} q(b)$.
      
  By \autoref{claim:scaling-down-gives-run} and $a < \sup \tau$, we
  have $p(a) \steps{(\nicefrac{1}{2}) \rho_1} r(a')$ for some $a' < \sup \tau$. Let:
  \[
  m \defeq \sum_{i=1}^{|\pi|} |\Effect{\pi_i}|,
  \alpha_t \defeq \frac{\epsilon}{4 |\Effect{t}|} \text{ and }
  \alpha_\pi \defeq \frac{\epsilon}{4m+1}.    
  \]

  Let $\rho_2 \defeq \alpha_t t\, \alpha_\pi \pi$. By definition, we
  have $\Effect{\alpha_t t} = \epsilon / 4$ and
  $|\Effect{\FromTo{\alpha_\pi \pi}{1}{i}}| < \epsilon/4$ for all $i
  \in \set{1, \ldots, |\pi|}$. Consequently, it is the case that
  $\Effect{\FromTo{\rho_2}{1}{i}} \in (0, \epsilon/2)$ for all $i \in
  \set{2, \ldots, |\rho_2|}$.

  Hence, there exists $k \geq 0$ such that $\rho_2^k$ is admissible
  from $a'$ and $\sup \tau - \epsilon/2 \leq a' + \Effect{\rho_2^k} <
  \sup \tau$. Therefore, we have:
  \[
  r(a') \steps{\rho_2^k} r(b')
  \text{ where } b' \in [\sup \tau - \epsilon/2, \sup \tau).
  \]
  By \autoref{prop:obs}\ref{claim:scale}, we can scale the run
  $\rho_3$ so that it is admissible from $r(b')$ and reaches a value
  arbitrarily close to $b'$ in state $q$. More formally, there exists
  $\beta \in (0, 1]$ such that
  \[
  r(b') \steps{\beta \rho_3} q(b)
  \text{ where } b \in [b' - \epsilon/2, \sup \tau).
  \]
  We are done since $p(a) \steps{(\nicefrac{1}{2}) \rho_1} r(a') \steps{\rho_2^k}
  r(b') \steps{\beta \rho_3} q(b)$ and $b \in [\sup \tau - \epsilon,
    \sup \tau)$.\medskip

  \emph{Case $\sup \tau = \infty$}.\ We must show that we can reach
  arbitrarily large values. Let $b \geq a$. For all $\ell \geq 0$, the
  run $\rho_\ell' \defeq (1/2) \rho_1\, \rho_2^\ell$ is admissible from
  $a$, and such that $\Effect{\rho_\ell'} > 0$. Thus, there exists
  $\ell \geq 0$ such that $\Effect{\rho_\ell'} \geq (b - a) +
  \Effectn{\rho_3}$. We are done since
  \[
  a \steps{\rho_\ell'} b' \steps{\rho_3} b''
  \text{ where } b' \geq b + \Effectn{\rho_3}
  \text{ and } b'' \geq b.\qedhere
  \]
\end{proof}

We show the characterisation of ``$a \in \Post{p,q}{a}$'' stated, but
left unproven, within the proof of \autoref{prop:a_nc2}:

\begin{proposition}
  Let $\Post{p,q}{a} \neq \emptyset$. It is the case that $a \in
  \Post{p,q}{a}$ iff at least one of these conditions~holds:
  \begin{enumerate}[label=(\alph*)]
  \item\label{a:1:app} there exists a path $\pi \in \Paths{p}{q}$ whose
    transitions are all zero, \ie\ $\Effect{\pi} = \Effectp{\pi} =
    \Effectn{\pi} = 0$;
    
  \item\label{a:2:app} there exist $\pi \in \Paths{p}{q}$ and indices $i,
    j$ such that $\Effect{\pi_i} > 0$ and $\Effect{\pi_j} < 0$. If $a
    = \inf \tau$, then we also require that $\Effect{\pi_k} = 0$ for
    all $k < i$ and $k > j$. Similarly, if $a = \sup \tau$, then we
    also require $\Effect{\pi_k} = 0$ for all $k < j$ and $k > i$.
  \end{enumerate}
\end{proposition}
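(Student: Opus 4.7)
The plan is to prove both directions of the iff separately, using throughout the hypothesis $\Post{p,q}{a} \neq \emptyset$ (which in particular forces $a \in \tau$).

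For $(\Rightarrow)$, I would start from an admissible run $\rho = \alpha_1 t_1 \cdots \alpha_n t_n$ witnessing $a \in \Post{p,q}{a}$, so $\sum_k \alpha_k z_k = \Effect{\rho} = 0$ with every $\alpha_k > 0$. If every $z_k$ vanishes on $\upath{\rho}$, condition~(a) follows immediately. Otherwise, the zero-sum constraint produces both a strictly positive and a strictly negative $z_k$. When $a \notin \{\inf\tau, \sup\tau\}$, any such pair already satisfies~(b). When $a = \inf\tau$, I would let $i$ (resp.\ $j$) be the first (resp.\ last) index with nonzero effect; admissibility forces $z_i > 0$ (otherwise the counter would drop below $\inf\tau$ just after position $i$, where the current value equals $a$) and $z_j < 0$ (otherwise $a_j > a$, but all subsequent effects vanish, contradicting $a_n = a$). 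The case $a = \sup\tau$ is symmetric, with the roles of positive and negative exchanged.

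For $(\Leftarrow)$, case~(a) is trivial: scaling every transition by $1$ leaves the counter constantly equal to $a \in \tau$ and reaches $q(a)$. For case~(b) when $a$ is strictly between $\inf\tau$ and $\sup\tau$, I would scale positive transitions by $\epsilon \cdot (-N/P)$ and negative ones by $\epsilon$ (zeros by $1$), where $P = \sum_{z_k > 0} z_k$ and $N = \sum_{z_k < 0} z_k$; the total effect cancels, and for $\epsilon > 0$ sufficiently small all scalars lie in $(0,1]$ and all partial sums stay within a neighborhood of $a$ contained in $\tau$.

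The main obstacle is case~(b) when $a = \inf\tau$ (the case $a = \sup\tau$ being symmetric). The hypothesis yields $i < j$ with $z_i > 0$, $z_j < 0$, and $z_k = 0$ outside $[i,j]$, but the intermediate transitions may have arbitrary signs, so a uniform positive/negative rescaling need not keep partial sums above $\inf\tau$. My plan is to scale $t_i$ by some $\alpha_+$, every middle nonzero transition by some small $\delta$, and $t_j$ by $\alpha_- \defeq (\alpha_+ z_i + \delta C)/|z_j|$ where $C \defeq \sum_{i<k<j} z_k$. A direct computation then gives total effect $0$, $a_i = a + \alpha_+ z_i$, $a_j = a$, and $a_k \in [a + \alpha_+ z_i - \delta M,\, a + \alpha_+ z_i + \delta M]$ for $i < k < j$, where $M \defeq \sum_{i<k<j} |z_k|$. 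Admissibility then reduces to $\delta M \le \alpha_+ z_i$ (to keep $a_k \ge \inf\tau$) and $\alpha_+ z_i + \delta M \le \sup\tau - a$ (to keep $a_k \le \sup\tau$, when finite). The chief subtlety is verifying $\alpha_- \in (0,1]$ simultaneously, which splits on the sign of $C$; all constraints are met by first fixing $\alpha_+$ sufficiently small and then picking $\delta$ much smaller.
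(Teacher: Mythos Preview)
Your proposal is correct and follows essentially the same approach as the paper's proof: both directions proceed by the same case analysis, and in particular the $(\Rightarrow)$ argument is identical. For $(\Leftarrow)$ in the boundary case $a = \inf\tau$, the paper takes a slightly less explicit route---it scales the first transition, then invokes a separate proposition (\autoref{prop:obs}\ref{claim:scale}) to handle the middle segment from an interior point, and finally adjusts the last transition---whereas you write down concrete scalars $\alpha_+,\delta,\alpha_-$ and verify all constraints by hand; this makes your argument more self-contained but otherwise equivalent.
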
  

\begin{proof}
  $\Leftarrow$) If~\ref{a:1:app} holds, then trivially $a \in
  \Post{p,q}{a}$. Assume~\ref{a:2:app} holds. Let $\rho \defeq 1 t_1
  \cdots 1 t_n$ where $\pi = t_1 \cdots t_n$. Suppose $a \notin \{\inf
  \tau, \sup \tau\}$. By \autoref{prop:obs}\ref{claim:scale}, for all
  $\beta$ small enough, it is the case that $a \steps{\beta \rho}
  a_\beta$, where $|a - a_\beta| < 1/2$. Let $p(a) = q_0(a_0), \ldots,
  q_n(a_n) = q(a_\beta)$ be the sequence of configurations witnessing
  $a \steps{\beta \rho} a_\beta$. Since $n$ is fixed we can choose
  $\beta < 1/2$ small enough so that $|a_i - a| < 1/2$ for all $i$. If
  $a_\beta > a$, then we enlarge the coefficient of $t_j$ to $\alpha_j
  > \beta$ so that $(\alpha_j - \beta) \cdot \Effect{t_j} = a -
  a_\beta$. By the choice of $\beta$, we get an admissible run $\rho'
  \defeq \beta t_1 \ldots \beta t_{j-1} \alpha_j t_j \beta t_{j+1}
  \cdots \beta t_n$ that satisfies $a \steps{\rho'} a$. If $a_\beta <
  a$, then we proceed analogously with index $i$.

  It remains to prove the case where $a = \inf \tau$; the case where
  $a = \sup \tau$ is symmetric. By assumption, we have $\effect{t_k} =
  0$ for all $k < i$ and $k > j$. For the sake of simplicity, assume
  $\Effect{t_1} > 0$ and $\Effect{t_n} < 0$. Let $\alpha_1 \in (0, 1)$
  be such that $\alpha_1 \cdot \Effect{t_1} < 1/2$. Let $\rho_1 \defeq
  1 t_2 \cdots 1 t_{n-1}$. By \autoref{prop:obs}\ref{claim:scale},
  there exists $\beta \in (0, 1]$ such that $\inf \tau \steps{\alpha_1 t_1
  \beta \rho_1} \delta$, where $\delta < 1$. Since $\Effect{t_n} <
  0$, there exists $\alpha_n \in (0, 1)$ such that $\alpha_n \cdot
  \Effect{t_n} = -\delta$. Thus, we have $p(\inf \tau)
  \steps{\alpha_1 t_1 \beta \rho_1 \alpha_n t_n} q(\inf \tau)$.

  $\Rightarrow$) Let $p(a) \steps{\rho} q(a)$ and $\pi \defeq
  \upath{\rho}$. Suppose~\ref{a:1:app} does not hold. If all transitions
  of $\pi$ were positive, then we would obtain the contradiction
  $p(a) \steps{\rho} q(a')$ with $a' > a$. Similarly, all
  transitions cannot be negative. For the specific case where $a =
  \inf \tau$, observe that if the first nonzero transition is
  negative, then $\rho$ cannot be admissible. Similarly, if the last
  nonzero transition is positive then $p(\inf \tau) \steps{\rho}
  q(\delta)$ for some $\delta > \inf \tau$. The reasoning for the
  case $a = \sup \tau$ is symmetric.
\end{proof}

We show the characterisation of ``$c \in \Post{p,q}{a}$'' stated, but
left unproven, within the proof of \autoref{prop:endpoints_nc2}:

\begin{proposition}
  Let $\Post{p,q}{a} \neq \emptyset$, $b \defeq \inf
  \cl{\Post{p,q}{a}}$ and $c \defeq \sup \cl{\Post{p,q}{a}}$. If $b <
  a < c$ and $c \in \tau$, then $c \in \Post{p,q}{a}$ iff there is a
  state $r$ and a path $\sigma \in \Paths{r}{q}$ that satisfy
  $\Effectp{\sigma} > 0$, $\Effectn{\sigma} = 0$ and either of the
  following:
  \begin{enumerate}[label=(\roman*)]
  \item there exists a path $\sigma' \in \Paths{p}{r}$ such that
    $|\sigma|, |\sigma'| \leq |Q|$, $\Effectn{\sigma'} = 0$ and
    $\Effectp{\pi} \geq c - a$ where $\pi \defeq \sigma'
    \sigma$;\label{itm:sup:simple:a:app}

  \item there exists a path $\sigma' \in \Paths{p}{r}$ such that
    $|\sigma|, |\sigma'| \leq |Q|$ and $\Effectp{\pi} > c - a$ where
    $\pi \defeq \sigma' \sigma$;\label{itm:sup:simple:b:app}

  \item there is a positive $(a, p, r)$-admissible cycle
    $\theta$.\label{itm:sup:cycle:app}
  \end{enumerate}
\end{proposition}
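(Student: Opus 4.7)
My plan is to prove the biconditional by constructing witnesses in both directions. The converse direction is handled by explicit scaling, case by case. In case (i), every transition of $\pi \defeq \sigma' \sigma$ is nonnegative, so uniformly scaling each by $\lambda \defeq (c-a)/\Effectp{\pi} \in (0,1]$ produces an admissible run from $p(a)$ to $q(c)$ whose value rises monotonically through $[a, c] \subseteq \tau$. In case (ii), the strict inequality $\Effectp{\pi} > c - a$ allows independent scalings $\lambda^+, \lambda^- \in (0,1]$ for positive and negative transitions chosen so that $\lambda^+ \Effectp{\pi} + \lambda^- \Effectn{\pi} = c - a$, and admissibility then follows by a bookkeeping argument akin to the one in the proof of \autoref{lemma:effinite}. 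In case (iii), \autoref{lemma:supfinf} gives $\sup \cl{\Post{p,r}{a}} = \sup \tau$, which combined with \autoref{lemma:intervals} yields $(a, \sup \tau) \subseteq \Post{p,r}{a}$; since $c \leq \sup \tau$, $c > a$, and $\Effectp{\sigma} > 0$, the intersection $[c - \Effectp{\sigma}, c) \cap (a, \sup \tau)$ is nonempty, so picking $a'$ in it and then scaling $\sigma$ by $(c - a')/\Effectp{\sigma} \in (0, 1]$ completes an admissible run to $q(c)$.

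For the forward direction, I would split on whether a positive $(a, p, q)$-admissible cycle exists. If not, \autoref{lemma:effinite} gives $c - a = \max D^+$, attained by a simple admissible path. The key algebraic observation is that the scaled effect of any admissible path $\pi'$ from $p$ to $q$ lies in $(\Effectn{\pi'}, \Effectp{\pi'}]$, with the right endpoint attained only when $\Effectn{\pi'} = 0$; combined with the upper bound $\Effectp{\pi'} \leq c - a$ from maximality, reaching exactly $c$ forces $\Effectn{\pi'} = 0$ and $\Effectp{\pi'} = c - a$. Removing the (necessarily all-zero) cycles from $\pi'$ then yields a simple such path, establishing condition (i) with $r = p$, $\sigma' = \varepsilon$, $\sigma = \pi'$. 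If instead a positive $(a, p, q)$-admissible cycle exists, then $c = \sup \tau$ by \autoref{lemma:supfinf}, and from any admissible run $\rho = \alpha_1 t_1 \cdots \alpha_n t_n$ reaching $q(c)$ I would extract $r$ and $\sigma$ by letting $k^*$ be the largest index with $v_{k^*} < c$, setting $r \defeq \In{t_{k^* + 1}}$ and $\sigma \defeq t_{k^* + 1} \cdots t_n$. Since admissibility enforces $v_k \leq \sup \tau = c$ throughout, the choice of $k^*$ forces $v_{k^* + 1} = \cdots = v_n = c$, so $t_{k^* + 1}$ has positive effect and $t_{k^* + 2}, \ldots, t_n$ have zero effect, giving $\Effectp{\sigma} > 0$ and $\Effectn{\sigma} = 0$.

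The main obstacle will be verifying, in the second branch of the forward direction, that the positive $(a, p, q)$-admissible cycle in hand is also $(a, p, r)$-admissible for the chosen $r$. The cycle's anchor $s$ carries a graph path $s \to q$ but not a priori a graph path $s \to r$, so the naive certificate may fail. My plan is to sidestep this by rerouting: choose instead $r$ on a graph path from $s$ to $q$ that admits a nonnegative-effect path with positive effect to $q$ (such a state exists because the suffix of $\rho$ after $t_{k^* + 1}$ consists of zero transitions and therefore exposes exactly such a graph path), or, failing that, use the run $\rho$ itself to produce a different positive admissible cycle (the overshoot portion of $\rho$ must revisit some state with a net positive gain, yielding a positive cycle admissibly reachable from $p(a)$ whose anchor lies on a graph path to the new~$r$). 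Handling this re-anchoring carefully, while preserving $\Effectp{\sigma} > 0$ and $\Effectn{\sigma} = 0$ and keeping the cycle certificate intact, will be the most delicate technical step.
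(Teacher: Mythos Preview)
Your backward direction is essentially correct and close to the paper's. The forward direction, however, has a genuine gap.

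In the ``no positive $(a,p,q)$-admissible cycle'' branch you assert that \autoref{lemma:effinite} gives $c - a = \max D^+$. It does not: it gives $c = \min(\sup\tau,\, a + \max D^+)$, so one may have $c = \sup\tau < a + \max D^+$ and hence $c - a < \max D^+$. In that situation your ``key algebraic observation'' no longer forces $\Effectn{\pi'} = 0$ for the underlying path $\pi'$ of a run reaching $c$. For a concrete counterexample take $\tau = [0,10]$, $a = 5$, and a simple path with successive effects $+3,\,-1,\,+4$; the run $5 \to 8 \to 7 \to 10$ reaches $c = 10$ along a path with $\Effectn{\pi'} = -1$. Here only condition~(ii) holds (with $\sigma$ the final $+4$ transition), yet your argument in this branch only ever produces condition~(i) with $\sigma' = \varepsilon$. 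So the branch is not just incomplete but demonstrably wrong.

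The difficulty you flag in the ``cycle'' branch is also real, and your re-anchoring sketch does not resolve it. The paper sidesteps both problems by \emph{not} splitting on whether an $(a,p,q)$-admissible cycle exists. Instead it takes any run $\rho$ with $p(a) \steps{\rho} q(c)$, repeatedly deletes cycles $\theta$ with $\Effectp{\theta} = 0$, and lets $\pi$ be the resulting path. The split point $r$ is then the source of the \emph{last} positive transition of $\pi$, giving $\sigma' \defeq \FromTo{\pi}{1}{i-1}$ and $\sigma \defeq \FromTo{\pi}{i}{|\pi|}$. One argues $\Effectn{\sigma} = 0$ directly from $c = \sup\cl{\Post{p,q}{a}}$ (a negative last nonzero transition would let the run overshoot $c$). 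If $|\sigma|,|\sigma'| \leq |Q|$ one obtains (i) or (ii) depending on whether $\Effectp{\pi} = c - a$ or $\Effectp{\pi} > c - a$. Otherwise $\pi$ contains a cycle, necessarily with $\Effectp{\cdot} > 0$ by the deletion step, and reordering it to start at a positive transition yields a positive $(a,p,r)$-admissible cycle because its anchor already lies on $\pi$ before index $i$, hence has a graph path to $r$. This uniform decomposition is what makes the forward direction go through; your case split on global cycle existence is the wrong organizing principle.
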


\begin{proof}
  $\Rightarrow$) Assume $c \in \Post{p,q}{a}$. There is a run $\rho$
  such that $p(a) \steps{\rho} q(c)$. Let $\rho'$ be the run obtained
  from $\rho$ by repeatedly removing a cycle $\theta$ with
  $\Effectp{\theta} = 0$, until no further possible. Let $\pi \defeq
  \upath{\rho'}$. We have $\Effectp{\pi} \geq \Effect{\rho'} \geq
  \Effect{\rho} = c - a$. Since $c > a$, there is a maximal index $i$
  such that $\Effect{\pi_i} > 0$. Let $r \defeq \In{\pi_i}$, $\sigma'
  \defeq \FromTo{\pi}{1}{i-1}$ and $\sigma \defeq
  \FromTo{\pi}{i}{|\pi|}$. Note that $\sigma' \in \Paths{p}{r}$ and
  $\sigma \in \Paths{r}{q}$. Moreover, $\Effectp{\sigma} > 0$ holds by
  maximality of $i$. It must also be the case that $\Effectn{\sigma} =
  0$. Indeed, otherwise the last nonzero transition $t$ of $\sigma$,
  and consequently of $\rho$, would be negative. Hence, this would
  contradict $c = \sup \cl{\Post{p,r}{a}}$ as we could reach values
  arbitrarily close to $c + \epsilon$ for some $\epsilon > 0$ by
  scaling $t$ arbitrarily close to zero. Observe that if
  $\Effectp{\pi} = c - a$, then $\Effect{\rho} = \Effectp{\pi} = c -
  a$ which implies $\Effectn{\rho} = \Effectn{\pi} = 0$. Therefore, if
  $|\sigma|, |\sigma'| \leq |Q|$, we have shown~\ref{itm:sup:simple:a:app}
  or~\ref{itm:sup:simple:b:app}.

  Otherwise, $\pi$ is a nonsimple path. So, by our past cycle
  elimination, $\pi$ contains a cycle $\theta$ with $\Effectp{\theta}
  > 0$. Let us reorder $\theta$ into $\theta'$ so that the first
  transition $t$ of $\theta'$ satisfies $\Effect{t} > 0$. We have $a
  \in \enab{\Paths{p}{\In{t}}}$ as state $\In{t}$ occurs on the
  original run $\rho$ that leads to state $q$. Moreover,
  $\Paths{\In{t}}{r} \neq \emptyset$ holds by maximality of $i$. Thus,
  $\theta'$ is a positive $(a, p, r)$-admissible cycle. Hence, we have
  shown that~\ref{itm:sup:cycle:app} holds.

  \medskip

  $\Leftarrow$) If~\ref{itm:sup:simple:a:app} holds, then $\Effectp{\pi} =
  c - a$ or $\Effectp{\pi} > c - a$. The latter case is subsumed
  by~\ref{itm:sup:simple:b:app}, and in the former case we are done as $a
  \steps{\pi} c$ due to $\Effectn{\pi} = 0$. If~\ref{itm:sup:cycle:app}
  holds, then since $\theta$ is a positive $(a, p, r)$-admissible
  cycle --- and hence $(a, p, q)$-admissible ---
  \autoref{lemma:supfinf}\ref{l:f} yields $\sup \cl{\Post{p,r}{a}} = c
  = \sup \tau$. Thus, there exists $\epsilon \in [0, 1]$ such that $c
  - \epsilon \in \Post{p,r}{a}$. By $\Effectp{\sigma} \geq 1$ and
  $\Effectn{\sigma} = 0$, we have \[p(a) \steps{*} r(c - \epsilon)
  \steps{\beta \sigma} q(c) \text{ where } \beta \defeq \epsilon /
  \Effectp{\sigma}.\] If \ref{itm:sup:simple:b:app} holds, then we proceed
  as follows. Recall that $b < a < c$. Therefore, $a \notin \{\inf
  \tau, \sup \tau\}$, since $\inf \tau \leq b$ and $c \leq \sup \tau$
  by definition of $b$ and $c$. Due to $a \notin \{\inf \tau, \sup
  \tau\}$, we can scale the negative transitions of $\sigma'$
  arbitrarily close to zero and scale its positive transitions so that
  either $a + \Effectp{\sigma'} - \epsilon \in \Post{p,r}{a}$ or $\sup
  \tau - \epsilon \in \Post{p,r}{a}$ for some $\epsilon \in (0,
  1]$. Since $\Effectp{\sigma} \geq c - a - \Effectp{\sigma'} + 1$,
  $\Effectp{\sigma} \geq 1$ and $\Effectn{\sigma} = 0$, we can derive
  either $c \in \Post{r,q}{a}$ or $\sup \tau \in \Post{r,q}{a}$. As
  the latter implies $c = \sup \tau$, we are done proving the claim.
\end{proof}

\lemmaReachEquivalent*

\begin{proof}
  We show the ``if'' direction first. 
  Assume that $p(a) \steps{\rho} q(b)$ for some run $\rho$.
  Without loss of generality, we assume that no configuration repeats when starting at $p(a)$ with $\rho$; otherwise, we can simply shorten $\rho$.
  Let $\sigma_1 \sigma_2 \cdots \sigma_n$ be the unique maximal decomposition of
  $\rho$ into runs such that $\eqguard(\Out{\sigma_i}) \neq \Q$
  for all $1 \leq i < n$.
  For ease of notation, let $q_i \defeq \Out{\sigma_i}$.
  It holds that $p(a) \steps{\sigma_1} q_1(a_1) \steps{\sigma_2} q_2(a_2) \cdots \steps{\sigma_n} q(b)$,
  where $\eqguard(q_i) = [a_i,a_i]$ for all $i$.
  Since $\sigma_1 \sigma_2 \cdots \sigma_n$ is the maximal
  decomposition, $\eqguard(\Out{{(\sigma_i)}_{j}}) = \Q$ holds that
  for all $j < \abs{\sigma_i}$.

  Additionally, recall that $p$ has no incoming edges and $q$ has no outgoing edges.
  Hence, the following holds:
  \begin{alignat*}{3}
    p(a) &\steps{\sigma_1}\ && q_1(a_1)
    &&\text{ in } \V_{p, q_1}, \\
    q_{i-1}(a_{i-1}) &\steps{\sigma_i}\ && q_i(a_{i})
    &&\text{ in } \V_{q_{i-1}, q_{i}} \text{ for all } 1 < i < n, \\
    q_{n-1}(a_{n-1}) &\steps{\sigma_n}\ && q(b)
    &&\text{ in } \V_{q_{n-1}, q}.
  \end{alignat*}
  We are done as $p(a) q_1(a_1) \cdots q_{n-1}(a_{n-1}) q(b)$
  is a path of $\mathcal{G}$.

  It remains to show the ``only if'' direction.
  There exists a %
  Suppose there is a path $p(a) q_1(a_1) \cdots q_{n-1}(a_{n-1}) q(b)$ in $\mathcal{G}$.
  Note that if $p'(a') \steps{*} q'(b')$ in $\V_{p', q'}$, then by definition we also have 
  $p'(a') \steps{*} q'(b')$ in $\V$.
  So we have
  \[
  p(a) \steps{*} q_1(a_1) \steps{*} \cdots
  \steps{*} q_{n-1}(a_{n-1}) \steps{*} q(b) \text{ in } \V.\qedhere
  \]
\end{proof}

\subsection{Missing proofs from \autoref{sec:ptime}}\label{sec:proof-cycle}

The following lemma will be useful to prove \autoref{proposition:cycle}.

\begin{lemma}\label{lemma:bijection}
  Let $B,B' \in \mathcal{C}$ be such that $B \subseteq B'$ where $B'$ is not a
  progressing extension of $B$. There is a bijection $f \colon \intervals(B)
  \to \intervals(B')$ s.t.\ $\phi_B(I) = \phi_{B'}(f(I))$ for all $I \in
  \intervals(B)$.
\end{lemma}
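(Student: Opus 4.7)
The plan is to construct the natural candidate bijection and then use the two-sided failure of the three progressing-extension conditions to establish injectivity and the pointwise equality $\phi_B(I) = \phi_{B'}(f(I))$.

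Define $f \colon \intervals(B) \to \intervals(B')$ by sending $I \in \intervals(B)$ to the unique $I' \in \intervals(B')$ with $I \subseteq I'$. Such an $I'$ is well-defined because $B \subseteq B'$ together with maximality: every point of $I$ lies in some interval of $\intervals(B')$, and these intervals must coincide as $I$ is an interval. Surjectivity is immediate from the fact that condition \ref{itm:progr:a} fails: every $I' \in \intervals(B')$ meets $B$, and hence contains some $I \in \intervals(B)$ by maximality. For the equality $\phi_B(I) = \phi_{B'}(f(I))$, the inclusion $\subseteq$ holds because $\overline{I} \subseteq \overline{f(I)}$; the reverse inclusion is exactly the failure of condition~\ref{itm:progr:b} applied to $I$ and $I' = f(I)$.

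The main obstacle is injectivity. Suppose $f(I_1) = f(I_2) = I'$ with $I_1 \neq I_2$; WLOG $\sup I_1 \leq \inf I_2$. Since condition~\ref{itm:progr:b} fails for both $I_1$ and $I_2$, we get $\phi_{B'}(I') = \phi_B(I_1) = \phi_B(I_2)$. This common set is contained in $\overline{I_1} \cap \overline{I_2}$, which is either empty or $\{\sup I_1\} = \{\inf I_2\}$. By \autoref{lem:phi} it is nonempty, so writing $\ell \defeq \sup I_1 = \inf I_2$, we have $\ell \in P_{\mathcal{L}} \cup P_A$ and $\phi_B(I_1) = \{\ell\}$. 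Since $I_1 \cap I_2 = \emptyset$, the point $\ell$ lies in at most one of $I_1, I_2$; WLOG $\ell \notin I_1$. A short case analysis (splitting on whether $\ell \in I_2$) shows that $\ell \in I'$: in either case $I'$ is an interval containing points on both sides of $\ell$, so it must contain $\ell$ itself. Thus $\ell \in \phi_B(I_1)$, $\ell \notin I_1$, and $\ell \in I' = f(I_1)$, which is precisely condition~\ref{itm:progr:c} for $I_1$, contradicting the assumption that $B'$ is not a progressing extension of $B$. Hence $f$ is injective, completing the proof.
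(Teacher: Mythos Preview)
Your proof is correct and follows essentially the same approach as the paper: define $f$ via the unique containing interval, obtain surjectivity from the failure of~\ref{itm:progr:a}, obtain the equality $\phi_B(I)=\phi_{B'}(f(I))$ from the failure of~\ref{itm:progr:b}, and derive injectivity from the failure of~\ref{itm:progr:b} and~\ref{itm:progr:c}. The only organisational difference is that you establish the $\phi$-equality first and then feed it into the injectivity argument, whereas the paper argues injectivity directly via a case split on whether $\phi_B(I_1)\cap\phi_B(I_2)$ is empty; your route is arguably cleaner and your justification that $\ell\in I'$ is more explicit than the paper's. (A minor remark: once $\sup I_1=\inf I_2=\ell$ with $I_1,I_2$ maximal and disjoint, necessarily $\ell\notin I_1$ and $\ell\notin I_2$, so your case split on $\ell\in I_2$ is harmless but unnecessary.)
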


\begin{proof}
  Since $B \subseteq B'$, for all $I \in \intervals(B)$ there is a unique
  $f(I) \in \intervals(B')$ such that $I \subseteq f(I)$. We show that if $f$
  is not a bijection then it will contradict that $B \subseteq B'$ is not a progressing extension.
%
  
  First, we prove that $f$ is an injection.
  Suppose it is not and that $f(I) = f(J)$ for some $I,J \in
  \intervals(B)$. Then $\phi_{B}(I) \cup \phi_{B}(J) \subseteq
  \phi_{B'}(f(I))$. If there exists $\ell \in \phi_{B}(I) \cap \phi_{B}(J)$,
  then it must be the case that $\ell \not \in I, \ell \not \in J$ and $\ell
  \in f(I)$. This is a contradiction because the extension is progressing due to~\ref{itm:progr:c}. Otherwise, there
  is $\ell \in \phi_{B}(I) \setminus \phi_{B}(J)$. Since $\phi_{B}(I)
  \subseteq \phi_B(I) \cup \phi_{B}(J) \subseteq \phi_{B'}(f(I))$ we get a contradiction because the extension is progressing due to~\ref{itm:progr:b}.

  Now, we prove that $f$ is a surjection. Suppose it is not.
  Then there
  is an interval $I' \in  \intervals(B')$ such that $f^{-1}(I') = \emptyset$.
  Thus $I' \cap B = \emptyset$, which is a contradiction because of~\ref{itm:progr:a}.
  
  To conclude, we note that by definition we have $\phi_B(I) \subseteq \phi_{B'}(f(I))$ for all $I \in
  \intervals(B)$. If the inclusion is strict for some $I$, then we get a contradiction because the extension
  is progressing due to~\ref{itm:progr:b}.
\end{proof}

\findCycle*

\begin{proof}
The value of the polynomially bounded number $n$ will be determined by
the proof. By definition, we have $S_i \preceq S_{i+1}$ for all $i \in \N$. By
\autoref{lemma:progress}, there is a polynomial number of indices
$i$ (w.r.t.\ $|Q|$) such that $S_i(q) \subseteq S_{i+1}(q)$ is a progressing extension for
some $q \in Q$. Thus, there exists an index $j$ polynomial in $|Q|$ such that the
extensions $S_i(q) \subseteq S_{i+1}(q)$ are not progressing for all $j \le i
\le j+k$, where $k$ is sufficiently large (but polynomially bounded in $|Q|$).
To simplify the notation
we will assume that $j = 0$ and consider $S_0 \preceq S_1 \preceq \dots
\preceq S_{k}$.

By \autoref{lemma:bijection}, there is a bijection $f_i \colon \intervals(S_i(q))
\to \intervals(S_{i+1}(q))$ for every $q \in Q$ and $0 \le i < k$. Thus, for
every $q \in Q$, the sets $\intervals(S_1(q))$, \dots, $\intervals(S_{k}(q))$
have the same number $m_q$ of intervals. By
\autoref{lemma:boundedunion}, $m_q$ is polynomially bounded in $|Q|$. Let us
denote the intervals $I^{q,i}_1,\dots I^{q,i}_{m_q}$ where $f(I^{q,i}_j) =
I^{q,i+1}_j$ for all $0 \le i < k$, $q \in Q$ and $1 \le j \le m_q$.  By
\autoref{lemma:bijection}, $\phi_{S_{i}(q)}(I^{q,i}_j) =
\phi_{S_{i+1}(q)}(I^{q,i+1}_j)$. Since $S_i \preceq S_{i+1}$, we conclude that
$I^{q,1}_{j} \subseteq I^{q,2}_{j} \subseteq \dots \subseteq I^{q,k}_{j}$ for
all $q \in Q$ and $1 \le j \le m_q$.

Recall that $T$ is the set of transitions of the guarded COCA. 
Let $\act(q,i,j) \defeq \set{t \in T \mid I^{q,i}_j \cap \enab{t} \neq
\emptyset}$. Since $I^{q,i}_j \subseteq I^{q,i+1}_j$, we have $\act(q,i,j)
\subseteq \act(q,i+1,j)$, and a strict inclusion can occur at most $|T|$
times. Since it is polynomially bounded in $|Q|$ and $|T|$, we can assume that $\act(q,i,j) = \act(q,i+1,j)$ for all $q \in Q$, $0 \le i < k$, and $1 \le j \le m_q$.
  Suppose $S_{i} \neq S_{i+1}$ for every $0 \le i < k$. Then we prove that there is a
  positively or negatively expanding cycle from $S_0$.
  Since $S_{i} \neq S_{i+1}$, there exists $v_k \in S_{k}(q_k)$ for some $q_k \in
  Q$ such that $v_k \not \in S_{k-1}(q_k)$. Tracing back, by definition of
  $\TheBetterPost$, we can find a sequence of transitions $t_1,\dots, t_k$,
  scalars $\alpha_1,\dots,\alpha_k \in (0,1]$, and configurations $q_0(v_0),
  \dots, q_k(v_k)$ such that: $v_i \in S_{i}(q_i)$,  $v_i \not \in
  S_{i-1}(q_i)$, and $q_{i-1}(v_{i-1}) \steps{\alpha_i t_i} q_{i}(v_i)$ for
  all $0 <  i \le k$. We show that an infix of this sequence defines a
  positively or negatively expanding cycle.

  Let $j_k$ be the unique index with $v_k \in I^{q_k,k}_{j_k}$. By
  definition, we have $v_k \not \in I^{q_k,k-1}_{j_k}$. If $v_k \ge
  \sup(I^{q_k,k-1}_{j_k})$, then we construct a positively expanding cycle,
  and otherwise we construct a negatively expanding one. We will prove only the former case; the
  latter case follows the same steps. We claim that $v_i \ge
  \sup(I^{q_i,i-1}_{j_i})$ for all $0 < i \le k$.

  Let us argue that the claim allows us to conclude. For a large enough $k$, we can find an infix $\rho
  \defeq q_a(v_a), \dots, q_b(v_b)$ such that: $a < b$, $q_a = q_b$, and $j_a = j_b$.
  Thus, $v_b \ge \sup(I^{q_b,b-1}_{j_b}) \ge \sup(I^{q_a,a}_{j_a}) \ge
  v_a$. Since $v_b \in I^{q_b,b}_{j_b}$ and $v_a \not \in I^{q_b,b}_{j_b}$, we
  obtain $\Effect{\rho} > 0$. The remaining conditions of the positively
  expanding cycle follow directly from the definition.

  It remains to prove the claim, \ie\ that $v_i \ge \sup(I^{q_i,i-1}_{j_i})$ for every $0 < i \le
  k$. We proceed by induction going from $i = k$ down to $i = 1$. The base
  case follows by assumption. For the inductive step, towards a contradiction,
  suppose that $v_{i} \le \inf (I^{q_i,i-1}_{j_i})$ and $v_{i}  \not \in
  I^{q_i,i-1}_{j_i}$. 
  Recall that $\act(q_i,i-1,j_i) = \act(q_i,i,j_i)$. Since $v_i \steps{\alpha_{i+1} t_{i+1}} v_{i+1}$, there exists $\overline{v} \in I^{q_i,i-1}_{j_i}$ and $\beta \in (0,1]$ such that $\overline{v} \steps{\beta t_{i+1}}\overline{w}$ for some $\overline{w} \in S_{i}(q_{i+1})$.

  We show that $v_{i+1} < \overline{w}$.
  If $v_{i+1} = \overline{w}$, then $v_{i+1} \in S_{i}(q_{i+1})$, which
  contradicts the definition of $v_{i+1}$.
  Suppose that $v_{i+1} \neq \overline{w}$ for any choice of $\beta$. Let $t_{i+1} = (q_{i}, z_{i+1}, q{i+1})$. Since $v_{i} \le \inf (I^{q_i,i-1}_{j_i})$, notice that $\overline{v} + \alpha_{i+1}z_{i+1} > v_{i} + \alpha_{i+1}z_{i+1} = v_{i+1}$.
  Thus, it must be the case that $v_{i+1} < \overline{w}$.

  We prove that $\overline{w} \in I^{q_{i+1},i+1}_{j_{i+1}}$.
  Recall that $v_{i}, \overline{v} \in I^{q_i,i}_{j_i}$ and thus $[v_i,\overline{v}] \subseteq I^{q_i,i}_{j_i}$. Also $[v_i + \alpha_{i+1}z_{i+1}, \overline{v} + \beta z_{i+1}] = [v_{i+1},\overline{w}]$. Thus, for every $v_{i+1} \le w' \le \overline{w}$ there exists $v_i \le v' \le \overline{v}$ and $\gamma \in (0,1]$ such that $v' \steps{\gamma t_{i+1}} w'$. Thus, $\overline{w}$ and $v_{i+1}$ belong to the same interval in $S_{i+1}(q_{i+1})$ as required.

  Since $\overline{w} \in I^{q_{i+1},i+1}_{j_{i+1}}$ and $\overline{w} \in S_{i}(q_{i+1})$, we have $\overline{w} \in I^{q_{i+1},i}_{j_{i+1}}$. We have reached a contradiction, since by the inductive hypothesis we have: \[v_{i+1} \ge \sup(I^{q_{i+1},i}_{j_{i+1}}).\]

Observe that since $\TheBetterPost$ is easily computable in polynomial time, one can also find $q_0(v_0), \dots, q_k(v_k)$ and their corresponding intervals in polynomial time.
\end{proof}


\subsection{Missing formulas from \autoref{sec:parametric}}\label{app:parametric}

\paragraph*{Intersection of intervals}

Let $\chi_x$ be the auxiliary function defined by $\chi_x(y, z) \defeq
y$ if $x \neq 2$ and $z$ otherwise. Let $I = (b,t,\bot,\top) \in
\encodings$ and $I = (b',t',\bot',\top') \in \encodings$. The
intersection $I \cap I \defeq (b'', t'', \bot'', \top'')$ can be
defined by
\begin{align*}
  b''
  &\defeq
  \max(\chi_\bot(b, b'), \chi_{\bot'}(b', b)), \\
  t''
  &\defeq
  \min(\chi_\top(t, t'), \chi_{\top'}(t', t)), \\
  \bot''
  &\defeq
  \begin{cases}
    \chi_{\bot}(\bot, \bot')  & \text{if } b > b', \\
    \chi_{\bot'}(\bot', \bot) & \text{if } b < b', \\
    \min(\bot, \bot')         & \text{otherwise},
  \end{cases} \\
  \top''
  &\defeq
  \begin{cases}
    \chi_{\top}(\top, \top')  & \text{if } t < t', \\
    \chi_{\top'}(\top', \top) & \text{if } t > t', \\
    \min(\top, \top')         & \text{otherwise}.
  \end{cases}
\end{align*}

To explain the above,
let us first look at the case where both intervals are bounded,
that is, $\bot, \top, \bot', \top' \in \{0,1\}$.
Each invocation of $\chi$ will return its first argument.
Let us examine the lower endpoint.
It is the case that $b'' = \max(b,b')$. As expected, the intersection
operation uses the larger lower endpoint among the two intervals.
Further, whether the lower endpoint of the interval is included
depends on whether the value that was used was
included in the interval it originated from, \ie\ consider
$(3,5] \cap [4,6] = [4,5]$. The only remaining nuance is
when both intervals have the same endpoint,
then it is only included if it is included in both intervals.
This is why we choose $\min(\bot,\bot')$ in the case where $b = b'$.

Now, let us consider the case where either interval may be unbounded.
Then it is not sufficient to compare endpoints, \eg\
the encoding $(0,1,2,2)$ does not encode an interval with endpoints
$0$ and $1$, but rather $(-\infty, +\infty)$.
Intuitively, the $\chi$ function is used in order to
handle such unbounded intervals. When intersecting two intervals,
where one interval has no finite lower endpoint, while the other does,
the lower endpoint should only depend on the interval that indeed has
a lower endpoint. So for determining the lower endpoint and whether it is included,
the $\chi$ function will check whether the interval is
bounded, and only use its endpoint if it exists.
Otherwise, it defaults to the endpoint of the other interval.

The reasoning behind the upper endpoint is symmetric.

\subsection{Missing gadgets from the proof of \autoref{thm:acyclic-np-complete}}

Recall that the proof of \autoref{thm:acyclic-np-complete} gives the
gadgets of $\Pp$, but not of $\Pp'$. \autoref{fig:np:complete-updates}
depicts the missing gadgets of $\Pp'$.

\begin{figure}
  \begin{center}
    \begin{tikzpicture}[auto, thick, scale=0.9]
  \tikzstyle{astate} = [state, minimum size=15pt, inner sep=0pt];
  \node[astate, label={above:$[0, 0]$}, right=1cm of g4]    (h0) {$p_i$};
  \node[astate, label={above:$[0, 1]$}, right=0.75cm of h0] (h1) {};
  \node[astate, label={left:$[1, 1]$}, below=0.75cm of h1] (h2) {};
  \node[astate, label={above:$[0, 0]$}, right=0.75cm of h1] (h3) {$q_i$};

  \path[->]
  (h0) edge node       {$x_i$} (h1)
  (h1) edge node       {$0$}   (h3)
  (h1) edge node[swap] {$0$}   (h2)
  (h2) edge node[swap] {$-1$}  (h3)
  ;
  
  \node[astate, label={above:$[0, 0]$}, right=0.5cm  of c4] (d0) {$r_j$};
  \node[astate, label={above:$[1, 1]$}, right=1cm    of d0] (d2) {};
  \node[astate, label={above:$[1, 1]$}, above=0.75cm of d2] (d1) {};
  \node[astate, label={above:$[0, 0]$}, below=0.75cm of d2] (d3) {};
  \node[astate, label={above:$[0, 0]$}, right=1cm    of d2] (d4) {$s_j$};

  \path[->]
  (d0) edge node       {$x_1$} (d1)
  (d0) edge node       {$x_2$} (d2)
  (d0) edge node[swap] {$x_4$} (d3)
  (d1) edge node[above left, xshift=10pt, yshift=+5pt] {$-1$}  (d4)
  (d2) edge node       {$-1$}  (d4)
  (d3) edge node[below left, xshift=10pt, yshift=-5pt] {$0$}  (d4)
  ;
\end{tikzpicture}
  \end{center}
  \caption{Gadgets used in the reduction from 3-SAT to parametric COCA
    reachability where parameters occur only on updates. 
    \emph{Left}: Gadget for variable $x_i$;
    \emph{Right}: Gadget used for clause $C_j = (x_1 \lor x_2 \lor \neg
    x_4)$.}
  \label{fig:np:complete-updates}
\end{figure}
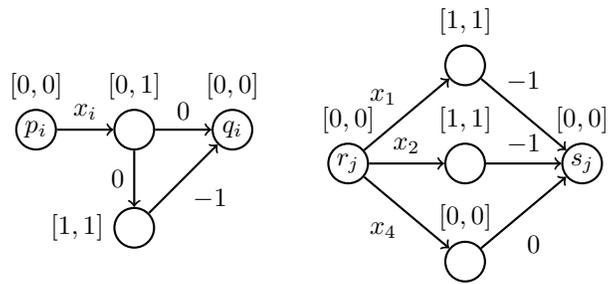

\end{document}